\newtheorem{lemma}{Lemma}
\newtheorem{theorem}{Theorem}
\newcommand{\ket}[1]{|{#1}\rangle}
\newcommand{\bra}[1]{\langle{#1}|}
\begin{document}

\title{Hilbert space structure induced by quantum probes}
\author{Go Kato$^1$, Masaki Owari$^2$, Koji Maruyama$^{3}$}
\affiliation{$^{1}$NTT Communication Science Laboratories, NTT Corporation, Atsugi-Shi, Kanagawa,
243-0198 Japan}
\affiliation{$^{2}$Department of Computer Science, Shizuoka University, Hamamatsu 432-8011, Japan}
\affiliation{$^{3}$Department of Chemistry and Materials Science, Osaka City University, Osaka,
558-8585 Japan}

\begin{abstract}
In the general setting of quantum controls, it is unrealistic to control all of the degrees of freedom of a quantum system.
We consider a scenario where our direct access is restricted to a small subsystem $S$ that is constantly interacting with the rest of the system $E$. 
What we investigate here is the fundamental structure of the Hilbert space that is caused solely by the restrictedness of the direct control.
We clarify the intrinsic space structure of the entire system and that of the operations which could be activated through $S$. 
The structures hereby revealed would help us make quantum control problems more transparent and provide a guide for 
understanding what we can implement.
They can be deduced by considering an algebraic structure, which is the \textit{Jordan algebra} formed from Hermitian operators, naturally 
induced by the setting of limited access.
From a few very simple assumptions about direct operations, we elucidate rich structures of the operator algebras and  Hilbert spaces 
that manifest themselves in quantum control scenarios. 
\end{abstract}
\maketitle

%\begin{keyword}
%Indirect control\sep
%Dynamical Lie algebra \sep
%Jordan algebra
%\end{keyword}

%\linenumbers

\section{Introduction}

Understanding the dynamics of many-body quantum systems under artificial control is by no means
easy. As the race towards the realization of quantum computer is growing in momentum, a solid
theoretical foundation is desired more than ever in order to tame complex quantum dynamics
systematically. The principal difficulty is in the necessity of controlling exponentially many
degrees of freedom of a large quantum system through a limited number of controllable
parameters.

Since it is unrealistic to control all such degrees of freedom, the number of the modulable parameters is limited no matter 
what physical control scheme is employed. Thus,  natural questions would be what 
we can do to a given physical system under severe limitations on our artificial control and how  can it be done \cite{LLS04,BMM10,KP10}. 
Although there has been a widely accepted control method in the quantum information processing community, i.e., using a combination of one- and two-qubit operations, its prospects still look rocky in terms of scalability. This hiatus of the development in this direction encourages us to explore the problem from a more fundamental, or mathematical, point of view.

A major obstacle when scaling up a quantum system is the noise induced to the system through interactions with its environment. 
We thus consider a setting in which the system interacts with its environment minimally; most of the system is insulated from its surroundings 
and only a small subsystem is the subject of our direct control. The insulated part $E$ is connected only with the controllable subsystem $S$ 
through the \textit{drift Hamiltonian} $h_0$, and any operation can be applied to $S$ at will.
This type of scenario has recently been studied, mainly for systems of spins-1/2~\cite{BMM10,KP10,GLST17,LARR18}.

The most noteworthy tool for analyzing the dynamics in such a setting is the dynamical Lie algebra, which is a set of all realizable operators under the 
given condition \cite{JS72,RSDRP95,DAlessandroBook,MBBookChapter}. It can be calculated as the maximum 
set of independent operators that are generated by the 
drift Hamiltonian $h_0$ and Hamiltonians $\{h_k\}$ corresponding to modulable field parameters.

In order to 
make the setting realistic and mathematically tractable, we assume that $\{h_k\}$ forms a Lie algebra su(dim$\mathcal{H}_S$)
acting on $\mathcal{H}_S$, where $\mathcal{H}_S$ is the Hilbert space for a small subsystem $S$ 
of dimension dim$\mathcal{H}_S$ (Fig. \ref{fig:SandE}). The $S$ subsystem interacts through $h_0$ with the rest of the system, $E$,
which we also assume is finite-dimensional. 
\begin{figure}
\includegraphics[scale=0.8]{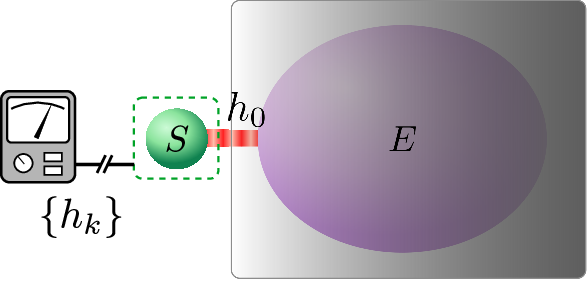}
\caption{A schematic view of the problem setting. A small subsystem $S$ can be directly accessed, while the rest of the 
system $E$ is beyond direct artificial control. The intrinsic dynamics of $S$ and $E$, including interactions between them, 
is governed by the drift Hamiltonian $h_0$. Any operation in su(dim$\mathcal{H}_S$) is applicable by modulating 
the Hamiltonians $\{h_k\}$ acting on $\mathcal{H}_S$.
}\label{fig:SandE}
\end{figure}

It is  clear that the dynamical Lie algebra does not necessarily span the Lie algebra 
su(dim$\mathcal{H}_S\cdot\mathrm{dim}\mathcal{H}_E)$ for the entire Hilbert space of the system. 
The dynamical Lie algebra has mostly been calculated and analyzed in an ad hoc fashion, depending on the specific physical system.
In fact, calculating the dynamical Lie algebra from a given set of Hamiltonians is hard; its  complexity is $O(d^8)$ for 
a $d$-dimensional system \cite{ZSH11}. This makes it extremely difficult to discuss the general properties of the controllability except in
some special cases, such as XY or Heisenberg spin chains that have high symmetry.
When the dynamical Lie algebra does not coincide with a simple Lie algebra on the whole Hilbert space, it is an unlucky case: 
the system is not fully controllable; thus, the $S$ part may need to be expanded, in the hope of making the controllable space larger. 

Now we ask ourselves whether there are intrinsic structures in the dynamical Lie algebra when artificial
controls are applied only to a small subsystem of a many-body system? In other words, what does
the structure of the Hilbert space look like, especially when it is not fully controllable? 
Also, what is the precise effect of expanding the accessible part $S$, namely, that of appending an ancillary system $\mathcal H_A$ 
to $\mathcal H_S$? Does it always help to enlarge the controllable space in $\mathcal H_E$?

In this paper, we classify the structure of the dynamical Lie algebra, which is induced by the restricted
access, as well as the Hilbert space structure that manifests itself accordingly. We then find that there is a clear distinction
between the cases of ${\rm dim}\mathcal H_S=2$ and ${\rm dim}\mathcal H_S\geq 3$. 
On the one hand, when ${\rm dim}\mathcal H_S\geq 3$, there appear only direct sums of
${\rm su}(\cdot)$. On the other hand, a structure of {\it formally real Jordan algebra}
explicitly emerges in the dynamical Lie algebra if ${\rm dim}\mathcal H_S=2$. Although the Jordan algebra was
introduced by Jordan et al. \cite{JNW34} as a  mathematical formulation of quantum mechanics, it has attracted 
relatively little attention in the quantum community. 

Further, we can see how the structures of these two cases correspond to each other, when an additional dimension(s) 
is appended to $S$. Looking into this correspondence allows us to answer the
question about the effect of ancilla: enlarging $\mathcal{H}_S$ does enhance the
controllability of quantum state of $E$ if ${\rm dim}\mathcal H_S=2$, while it does not otherwise. 
This is a somewhat unexpected result; one may envision that appending an ancilla to $S$ would not be of use at all
because what is interacting with $E$ is still only the original $S$ itself. 
One's   intuition may be  opposite to such a view; as reported in \cite{OMTK15}, the size of 
the ancillary system could help make the probable subspace in $E$ larger. Our result proves that these ideas are over-naive.

Investigating spatial structures will also have direct and important consequences with respect to the 
system identifiability. There has been intensive 
research on the problem of quantum system identification under limited access
\cite{BMN09,FPK09,BM09,BMN11,Shabani11,JS14,KY14,SC17a,SC17b}, since the knowledge of the system Hamiltonian 
is crucially important for control. 
A number of identification schemes have been discovered so far, and at the same time it is becoming
clearer that there may exist limitations on what we can observe through $S$. The Hilbert space structures 
we elucidate here will provide a useful toolbox to address all these key issues systematically.

\section{Main results}
\label{sec:main} The physical setup we consider is as follows. 
We suppose there is a 
 quantum system ${\mathcal H}_S$,
on which arbitrary control can be applied at will, that  interacts with an external system
${\mathcal H}_E$ coherently. The dynamics of $\mathcal{H}_E$, including the interaction with
$\mathcal{H}_S$, are described by the drift Hamiltonian $h_0$, and $\mathcal{H}_E$ is not subject to our direct
control (See Fig. \ref{fig:SandE}). That is, we can access $\mathcal{H}_E$ only indirectly through $\mathcal{H}_S$. 
Also, we assume that the Hilbert spaces ${\mathcal H}_E$ and ${\mathcal H}_S$ are both finite dimensional.

The dynamical Lie algebra $L$ is crucial in the analysis of the controllability of a quantum system.
 It is a Lie algebra generated by $ih_0$ and a set $\{Id_E\}\otimes {\rm su}({\rm
dim}\mathcal H_S)$ of operators. Here, $Id_E$ is the identity operator on ${\mathcal H}_E$, $\{Id_E\}$ is a one-dimensional space generated by $Id_E$, and ${\rm su}({\rm dim}\mathcal H_S)$ is the set of all traceless skew-Hermitian operators acting on ${\mathcal H}_S$, thus representing a set of arbitrary controls. A direct product of the operator sets $\mathcal S_1\otimes \mathcal S_2$ is a set of $s_1\otimes s_2$ for all
$s_b\in\mathcal S_b\;(b=\{1,2\})$, and $i\mathcal S$ means the set of elements $i \cdot s$ for all $s\in\mathcal{S}$.

We now present five central theorems about the structure of the dynamical Lie algebra, as well as
that of the space ${\mathcal H}_E$. Before presenting them, let us introduce a few terms.
\begin{itemize}
\item The {\it connected algebra} $L_c$ is the smallest ideal\footnote{The ideal $L'$ of a Lie algebra $L$ is a subspace of $L$ such that it cannot be expanded by taking commutators between $L'$ and $L$, i.e. $L'\supseteq[L,L']$.}
 of $L$ which includes
$\{Id_E\}\otimes {\rm su}({\rm dim}\mathcal H_S)$, i.e.
\begin{equation}\label{Lc_def}
L_c := \mathcal L(\{[\cdots[[g^\prime,g_1],g_2],\cdots,g_n]|n\in {\mathbb Z}_{\geq 1}\land g_m\in L\land g^\prime\in \{Id_E\}\otimes {\rm su}({\rm dim}\mathcal H_S) \}),
\end{equation}
where
 $\mathcal L(\mathcal S)$ indicates a set of all real linear combinations of the elements in $\mathcal S$.

\item The {\it disconnected algebra} $L_d$ is  the set of all skew-Hermitian operators which commute with
any element in $L_c$, i.e.
\begin{equation}\label{Ld_def}
L_d := \{g|g\in {\rm u}({\rm dim}\mathcal H_E\cdot {\rm dim}\mathcal H_S )\land \forall g^\prime\in L_c, [g,g^\prime]=0 \},
\end{equation}
where ${\rm u}({\rm dim}\mathcal H_E\; {\rm dim}\mathcal H_S )$ is the set of all skew-Hermitian
operators on ${\mathcal H}_E\otimes{\mathcal H}_S$.
From the Jacobi relation, we can verify that
$L_d$ is also a Lie algebra.
\end{itemize}

That our direct access is restricted to $\mathcal{H}_S$ necessarily imposes a nontrivial structure on the dynamical Lie
algebra. Let us summarize the rough 
ideas behind
 the main theorems before presenting them in a rigorous manner.
Throughout this paper, the structure of the Hilbert space $\mathcal{H}_E$ is the structure in the context of quantum control; namely it is what we shall ``see and control'' through $S$. 
\begin{enumerate}
\item[Theorem 1: ] Any element in the dynamical Lie algebra $L$ is a sum of two elements, one of which is controllable from operations on $S$ and the other is uncontrollable \footnote{Even in the case where $L$ is equal to su(dim$\mathcal{H}_E\cdot\mathrm{dim}\mathcal{H}_S)$, the disconnected algebra $L_d$ can formally be identified as a one-dimensional Lie algebra $\{i\cdot Id\}$. The connected algebra $L_c$ is then equal to $L$.}. These two are the elements of subalgebras $L_c$ and $L_d$, respectively.
\item[Theorem 2: ] When dim$\mathcal{H}_S\ge 3$, the Hilbert space $\mathcal{H}_E$ can have a direct sum structure with subspaces, each of which may be a direct product of two spaces, $\mathcal{H}_R$ and $\mathcal{H}_B$. The dynamics on $\mathcal{H}_R$ are driven by $L_c$, while those on 
 $\mathcal{H}_B$ are driven by $L_d$. Thus, $\mathcal{H}_B$ cannot be controlled through operations on $\mathcal{H}_S$. In other words, the limitedness of direct access to $S$ induces a natural basis structure in $E$. 
\item[Theorem 3: ] When dim$\mathcal{H}_S= 2$, $\mathcal{H}_E$ has a direct sum structure, similarly to the case of dim$\mathcal{H}_S\ge 3$; however, there may be a restriction on $L_c$. 
\item[Theorem 4: ] The algebraic structures shown in Theorems 2 and 3 are sufficient conditions for $L$ to be a Lie algebra that contains su(dim$\mathcal{H}_S$).
\item[Theorem 5: ] This theorem shows how the space structure changes when an additional dimension(s) is appended to a two-dimensional $\mathcal{H}_S$.  
\end{enumerate}

The theorems are not restricted to the setting with a single drift Hamiltonian $i h_0$. This is because we do not impose 
any specific constraints on the combination of physical Hamiltonians to obtain the dynamical Lie algebra, that is, 
there could be multiple drift Hamiltonians $\{i h_0^{(p)}\}_p$, instead of one. What we classify is the structure of the dynamical 
Lie algebra $L$, which contains $Id\otimes \mathrm{su(dim}\mathcal{H}_S)$, so the theorems are valid for such cases as well.

\noindent
\subsection{Induced structure of the dynamical Lie algebra $L$} 
The following three theorems describe the precise structure of the Hilbert space of $E$ as well as that of the dynamical Lie algebra $L$, 
and how it depends on the dimensionality of $\mathcal{H}_S$. 
%%%%%　Theorem 1    %%%%%%

\begin{theorem}
\label{th:con_discon_Lie_alg}
 The algebra $L$ is a subspace of the direct sum of $L_d$ and $L_c$:
\begin{eqnarray}
& L\subseteq {\mathcal L}(L_d\cup L_c),
\label{eq:str_abs_L}\\
& L_d\cap L_c=\{0\}.
\label{eq:str_abs_L_2}
\end{eqnarray}
This, together with the relation $L_c\subseteq L$, implies $L={\mathcal L}((L_d\cap L)\cup L_c)$.
\end{theorem}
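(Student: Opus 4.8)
The plan is to prove the three claims of Theorem~\ref{th:con_discon_Lie_alg} essentially by exploiting the defining properties of $L_c$ (a minimal ideal of $L$) and $L_d$ (the commutant of $L_c$ inside $\mathrm{u}(\dim\mathcal H_E\cdot\dim\mathcal H_S)$), together with standard facts about the Killing/trace form on $\mathrm{u}(n)$. First I would establish \eqref{eq:str_abs_L_2}: if $g\in L_d\cap L_c$, then $g$ lies in $L_c$ and commutes with everything in $L_c$, so $g$ is a central element of $L_c$. The point is that $L_c$ is generated as an ideal by $\{Id_E\}\otimes\mathrm{su}(\dim\mathcal H_S)$; I would argue that any element of the center of $L_c$ must in particular commute with $\{Id_E\}\otimes\mathrm{su}(\dim\mathcal H_S)$, which forces it (by Schur on each $\mathcal H_E$-block, since $\mathrm{su}(d)$ acts irreducibly on $\mathcal H_S$) to be of the form $M\otimes Id_S$ for some skew-hermitian $M$ on $\mathcal H_E$; but such an operator is itself central only if its bracket with the whole ideal $L_c$ vanishes, and then one checks it cannot lie in $L_c$ unless it is $0$, because $L_c$ is built from repeated brackets and $M\otimes Id_S$ contributes nothing new — more carefully, $L_c$ decomposes as a direct sum of its center and a semisimple part, and the center is orthogonal to all brackets, hence to $L_c$ itself under the trace form, so a central element of $L_c$ lying in $L_c$ is orthogonal to itself and therefore zero. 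Establishing this orthogonality rigorously via the negative-definite trace form $\langle A,B\rangle=-\mathrm{Tr}(AB)$ on $\mathrm{u}(n)$ is the cleanest route.

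Next I would prove the inclusion \eqref{eq:str_abs_L}. Here the natural tool is again the trace form on $\mathrm{u}(\dim\mathcal H_E\cdot\dim\mathcal H_S)$, which is real, symmetric and negative definite, hence nondegenerate on the real vector space $L$. Since $L_c$ is an ideal of $L$, its orthogonal complement $L_c^{\perp}$ within $L$ is also an ideal (invariance of the trace form under the adjoint action of $L$), and $L=L_c\oplus(L_c^{\perp}\cap L)$ as a vector space because the form is nondegenerate and $L_c\cap L_c^\perp=\{0\}$ (negative definiteness). It then remains to show $L_c^{\perp}\cap L\subseteq L_d$, i.e. that every element of $L$ orthogonal to $L_c$ actually \emph{commutes} with all of $L_c$. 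This is where I expect the main obstacle to lie: orthogonality to an ideal does not in general imply centralizing it. The resolution should come from the fact that $[L,L_c]\subseteq L_c$ together with: for $g\in L$ with $g\perp L_c$ and any $g'\in L_c$, the element $[g,g']$ lies in $L_c$, but also for any third $g''\in L_c$ one has $\langle[g,g'],g''\rangle=\langle g,[g',g'']\rangle\in\langle g,L_c\rangle=\{0\}$, so $[g,g']$ is an element of $L_c$ orthogonal to $L_c$, hence $[g,g']=0$ by negative-definiteness. That shows $g\in L_d$, completing $L=L_c\oplus(L_d\cap L)$, which is exactly \eqref{eq:str_abs_L} together with the last sentence of the theorem.

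Finally, the concluding identity $L={\mathcal L}((L_d\cap L)\cup L_c)$ is then immediate: the decomposition $L=L_c\oplus(L_d\cap L)$ just obtained gives both $\subseteq$ (every element of $L$ is a sum of one from $L_c$ and one from $L_d\cap L$) and $\supseteq$ (we are given $L_c\subseteq L$, and $L_d\cap L\subseteq L$ trivially, so their real span lies in $L$). I would present this as a one-line corollary of the orthogonal decomposition.

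The subtle points I would flag and handle with care are: (i) justifying that $L_c$ is a bona fide ideal of $L$ from its definition \eqref{Lc_def} — this is essentially built in, since $L_c$ is spanned by iterated brackets $[\cdots[[g',g_1],\ldots],g_n]$ with $g'\in\{Id_E\}\otimes\mathrm{su}(\dim\mathcal H_S)$ and $g_m\in L$, so bracketing any such element again with an element of $L$ produces another element of the same form, hence $[L,L_c]\subseteq L_c$; (ii) the step identifying the center of $L_c$, which needs Schur's lemma applied blockwise and the irreducibility of the $\mathrm{su}(\dim\mathcal H_S)$-action on $\mathcal H_S$; and (iii) keeping the distinction between ``skew-hermitian operators on the full space'' (the ambient Lie algebra in which the trace form is defined) and the subalgebra $L$ — all orthogonal complements should be taken inside the ambient $\mathrm{u}(\dim\mathcal H_E\cdot\dim\mathcal H_S)$ and then intersected with $L$, which is legitimate because the form stays negative definite on every real subspace.
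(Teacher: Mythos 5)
Your proposal is correct, but it proves the theorem by a genuinely different and more self-contained route than the paper. The paper obtains Theorem \ref{th:con_discon_Lie_alg} only \emph{after} the heavy structural work: it derives the sandwich relation (\ref{eq:uppper_bound_DLA}) and the explicit representations of the triples $(J_j,\bar J_j,\hat J_j)$, then reads off (\ref{eq:str_abs_L}) and (\ref{eq:str_abs_L_2}) separately in the two cases ${\rm dim}\mathcal H_S\geq 3$ (via Eqs. (\ref{eq:str_Lu_3})--(\ref{eq:str_Lc_3}) and Lemma \ref{sec:abstract_structure_lemm10}) and ${\rm dim}\mathcal H_S=2$ (via Eqs. (\ref{eq:str_Lu_2})--(\ref{eq:str_Lc_2}) and Lemma \ref{sec:abstract_structure}). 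You instead use only the defining properties in (\ref{Lc_def})--(\ref{Ld_def}) plus the ad-invariant Hilbert--Schmidt form: since every generator of $L_c$ is a bracket $[w,z]$ with $w\in L_c$, $z\in L$, any element of $L_d\cap L_c$ is self-orthogonal and hence zero; and for $g\in L$ orthogonal to $L_c$, the invariance identity $\langle[g,g'],g''\rangle=\langle g,[g',g'']\rangle$ forces $[g,g']=0$, giving the orthogonal decomposition $L=L_c\oplus(L_d\cap L)$, which is sharper than the stated theorem and is dimension-independent. What each buys: the paper's route is essentially free once Theorems \ref{th2:str_ds3_dem} and \ref{th3:str_ds2_dem} are established (and those are needed anyway), whereas yours decouples Theorem \ref{th:con_discon_Lie_alg} entirely from the Jordan-algebra classification and would hold for any Lie algebra of skew-hermitian operators with $L_c$, $L_d$ so defined. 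Two small blemishes to fix when writing it up: the form $\langle A,B\rangle=-{\rm Tr}(AB)$ is \emph{positive} definite on skew-hermitian operators (only definiteness is used, so the argument stands), and the Schur-lemma digression about central elements being of the form $M\otimes Id_S$ is both vague ("contributes nothing new") and unnecessary, since your trace-form argument supersedes it; you should also state explicitly, as you do in point (i), that $L_c\subseteq L$ and $[L,L_c]\subseteq L_c$ follow directly from (\ref{Lc_def}) and the closedness of $L$ under brackets.
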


%%%%%　Theorem 2    %%%%%%

\begin{theorem}
\label{th2:str_ds3_dem}
When ${\rm dim}\mathcal H_S\ge 3$, the space ${\mathcal H}_E$ has the structure of  a direct sum of subspaces, each of which is a direct product of two spaces,
\begin{equation}
{\mathcal H}_E = \bigoplus_j \mathcal{H}_{E_j} = \bigoplus_j{\mathcal H}_{B_j}\otimes {\mathcal H}_{R_j},
\label{eq:structure_e_ds3}
\end{equation}
and the precise nature of these subspaces depends on $L$.

In accordance with the decomposition {\rm (\ref{eq:structure_e_ds3})}, $L_d$ and $L_c$ are written as direct
sums of subalgebras as 
\begin{eqnarray}
L_d &=&  \bigoplus_j {\rm u}({\rm dim}\mathcal H_{B_j})\otimes \{Id_{R_j}\otimes Id_{S}\}
\label{eq:str_sub_Lu} \makebox{ and} \\
L_c &=& \bigoplus_j \{Id_{B_j}\}\otimes {\rm su}({\rm dim}\mathcal H_{R_j}\cdot{\rm dim}\mathcal H_S).
\label{eq:str_sub_Lc}
\end{eqnarray}

Moreover, this intrinsic structure stays the same, even if an ancillary space $\mathcal H_{S'}$ is appended to $\mathcal{H}_S$ to enlarge the directly accessible space. That is, if we let $L'$ be the `expanded' Lie algebra generated by $\{Id\}\otimes {\rm su}( {\rm dim}\mathcal H_S\cdot {\rm dim}\mathcal H_{S'})$ and $i h_0\otimes Id_{S'}$, 
where $h_0$ is the drift Hamiltonian, the corresponding connected and disconnected algebras, $L_c'$ and $L_d'$, are 
${\rm u}({\rm dim}\mathcal H_{B_j})\otimes \{Id_{R_j}\otimes Id_{S}\otimes Id_{S'}\}$, and 
$ \{Id_{B_j}\}\otimes {\rm su}({\rm dim}\mathcal H_{R_j}\cdot {\rm dim}\mathcal H_S\cdot {\rm dim}\mathcal H_{S'})$, respectively.
 
\end{theorem}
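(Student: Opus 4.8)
The plan is to pass from the Lie algebra $L_c$ to the associative $*$-algebra it generates, and let the irreducibility of the defining representation of ${\rm su}({\rm dim}\mathcal H_S)$ on $\mathcal H_S$ do the geometric work. Two facts come for free. First, every element of $L_c$ is a real combination of commutators, hence traceless, and any central element of $L_c$ lies in $L_c\cap L_d=\{0\}$ by Theorem~\ref{th:con_discon_Lie_alg}; thus $L_c$ is a compact \emph{semisimple} Lie algebra, $L_c=\bigoplus_\beta\mathfrak s_\beta$ with each $\mathfrak s_\beta$ simple. Second, the associative $*$-algebra $\mathcal M\subseteq{\rm End}(\mathcal H_E\otimes\mathcal H_S)$ generated by $L_c$ is a finite-dimensional $C^*$-algebra, so $\mathcal H_E\otimes\mathcal H_S=\bigoplus_j V_j\otimes W_j$ with $\mathcal M=\bigoplus_j{\rm End}(V_j)\otimes\{Id_{W_j}\}$ and commutant $\mathcal M'=\bigoplus_j\{Id_{V_j}\}\otimes{\rm End}(W_j)$.

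Next I would specialize this generic block structure. Because $\{Id_E\}\otimes{\rm su}({\rm dim}\mathcal H_S)\subseteq L_c\subseteq\mathcal M$ and ${\rm su}({\rm dim}\mathcal H_S)$ generates ${\rm End}(\mathcal H_S)$, the central projections $P_j$ of $\mathcal M$ commute with $\{Id_E\}\otimes{\rm su}({\rm dim}\mathcal H_S)$, whose commutant in ${\rm End}(\mathcal H_E\otimes\mathcal H_S)$ is ${\rm End}(\mathcal H_E)\otimes\{Id_S\}$; hence $P_j=Q_j\otimes Id_S$ for orthogonal projections $Q_j$ on $\mathcal H_E$, giving $\mathcal H_E=\bigoplus_j\mathcal H_{E_j}$ and $V_j\otimes W_j=\mathcal H_{E_j}\otimes\mathcal H_S$. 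Within a single block, $\{Id_{E_j}\}\otimes{\rm End}(\mathcal H_S)$ sits unitally in ${\rm End}(V_j)\otimes\{Id_{W_j}\}$, which forces $\mathcal H_S$ to split off as a tensor factor of $V_j$: $V_j=\mathcal H_S\otimes\mathcal H_{R_j}$, $W_j=\mathcal H_{B_j}$, and comparing commutants in ${\rm End}(\mathcal H_{E_j}\otimes\mathcal H_S)$ gives $\mathcal H_{E_j}=\mathcal H_{B_j}\otimes\mathcal H_{R_j}$, which is (\ref{eq:structure_e_ds3}). Since $L_d$ is, by definition, the skew-Hermitian part of the commutant of $L_c$, i.e.\ of $\mathcal M'=\bigoplus_j\{Id_S\otimes Id_{R_j}\}\otimes{\rm End}(\mathcal H_{B_j})$, equation (\ref{eq:str_sub_Lu}) follows at once.

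The technical heart — and the only place where ${\rm dim}\mathcal H_S\ge3$ is used — is to show that the image $\bar L_c^{(j)}$ of $L_c$ in ${\rm End}(V_j)$ is the \emph{full} ${\rm su}(V_j)={\rm su}({\rm dim}\mathcal H_{R_j}\cdot{\rm dim}\mathcal H_S)$. By construction $\bar L_c^{(j)}$ is compact semisimple, acts irreducibly on $V_j=\mathcal H_S\otimes\mathcal H_{R_j}$, generates ${\rm End}(V_j)$, and contains ${\rm su}(\mathcal H_S)\otimes\{Id_{R_j}\}$; if it were proper it would lie in a maximal irreducibly acting subalgebra of ${\rm su}(V_j)$, and I would run through Dynkin's list of these. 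The orthogonal and symplectic ones are excluded because, for ${\rm dim}\mathcal H_S\ge3$, the defining representation of ${\rm su}(\mathcal H_S)$ is not self-dual, so no nonzero bilinear form on $V_j$ is ${\rm su}(\mathcal H_S)\otimes Id$-invariant (this is exactly what fails at ${\rm dim}\mathcal H_S=2$, where ${\rm su}(2)=\mathfrak{sp}(1)$ carries the quaternionic/Jordan structure of the companion theorem). The tensor-product and primitive maximal subalgebras are ruled out by an induction on ${\rm dim}\mathcal H_E$: any of them would force a simple factor of $L_c$ to act as $Id_{\mathcal H_S}\otimes(\cdot)$ on $V_j$, i.e.\ to commute with $\{Id_E\}\otimes{\rm su}({\rm dim}\mathcal H_S)$, and tracking such a factor through all the blocks (again using non-self-duality and Theorem~\ref{th:con_discon_Lie_alg}) yields a contradiction. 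Finally, since each $\bar L_c^{(j)}={\rm su}(V_j)$ is simple, block $j$ picks out a single simple summand $\mathfrak s_{\beta(j)}$ of $L_c$ with $\iota_j:\mathfrak s_{\beta(j)}\xrightarrow{\sim}{\rm su}(V_j)$; the map $j\mapsto\beta(j)$ is injective, since two blocks sharing a simple summand would have to be a representation and its dual, impossible with a non-self-dual ${\rm su}(\mathcal H_S)$ acting standardly on both. Hence $L_c=\bigoplus_j\{Id_{B_j}\}\otimes{\rm su}(V_j)$, which is (\ref{eq:str_sub_Lc}). I expect this case analysis — especially the global (rather than block-wise) control of ``useless'' simple factors — to be the main obstacle.

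For the ancilla invariance, the expanded algebra $L'$ is generated by $\{Id\}\otimes{\rm su}({\rm dim}\mathcal H_S\cdot{\rm dim}\mathcal H_{S'})$ and $ih_0\otimes Id_{S'}$, so $L\otimes\{Id_{S'}\}\subseteq L'$ and therefore $L_c\otimes\{Id_{S'}\}\subseteq L'_c$. Applying the part of the theorem already proved to $L'$, with $\mathcal H_S\otimes\mathcal H_{S'}$ in the role of the control space, produces a decomposition $\mathcal H_E=\bigoplus_{j'}\mathcal H_{B_{j'}}\otimes\mathcal H_{R_{j'}}$ with $L'_c$ and $L'_d$ of the asserted forms (with ${\rm dim}\mathcal H_S$ replaced by ${\rm dim}\mathcal H_S\cdot{\rm dim}\mathcal H_{S'}$); it then remains to identify this decomposition with the original one, which follows because the drift acts trivially on $\mathcal H_{S'}$ and hence cannot refine the block structure, while the inclusion $L_c\otimes\{Id_{S'}\}\subseteq L'_c$ together with a dimension count forces the two decompositions to coincide. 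The same argument, with routine bookkeeping, also covers the generalization to classically switched drifts $\{ih_0^{(p)}\}_p$.
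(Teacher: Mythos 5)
Your route is genuinely different from the paper's, and substantial parts of it are sound. The paper never touches Dynkin's classification: it writes $L$ through the identifiers $(G^{(0)},G^{(1)})$ of Lemma \ref{sec:str_L}, shows via Lemma \ref{sec:prop_Gs} that for ${\rm dim}\mathcal H_S\ge 3$ the coefficient space $iG^{(1)}$ is closed under \emph{both} the commutator and the anticommutator, so that the Jordan classification of Lemma \ref{sec:rep_Jor} collapses to the associative cases ($\mathfrak R$, $\mathfrak M_\gamma^{(2)}$, $\mathfrak S_4$ with $Z^*=\pm Id$), and then sandwiches $L$ as in (\ref{eq:uppper_bound_DLA_3}) and applies Lemma \ref{sec:abstract_structure_lemm10}. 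Because $G^{(1)}$ is defined globally on $\mathcal H_E$, the cross-block bookkeeping is automatic there. Your bicommutant argument for the central decomposition, the identification $V_j=\mathcal H_S\otimes\mathcal H_{R_j}$, the computation of $L_d$ as the skew-hermitian part of $\mathcal M'$, the exclusion of $\mathfrak{so}/\mathfrak{sp}$ via non-self-duality of the defining representation of ${\rm su}(d)$ for $d\ge 3$, and the ancilla argument are all correct and would, if completed, give an alternative proof.

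However, the part you yourself call the technical heart is asserted rather than proved, and as stated it has concrete holes. First, the claim that a tensor-product or primitive maximal subalgebra ``would force a simple factor of $L_c$ to act as $Id_{\mathcal H_S}\otimes(\cdot)$ on $V_j$, i.e.\ to commute with $\{Id_E\}\otimes{\rm su}({\rm dim}\mathcal H_S)$'' conflates block-wise with global behaviour: a simple summand of $L_c$ whose image in block $j$ is of the form $Id_S\otimes(\cdot)$ may act in a completely different way on other blocks, so it need not commute with the controls, and the promised ``induction on ${\rm dim}\mathcal H_E$ / tracking through the blocks'' is exactly the missing content. (The clean global ingredient you would want is that every simple ideal of $L_c$ carries a nonzero component of $\{Id_E\}\otimes{\rm su}({\rm dim}\mathcal H_S)$ --- this follows because $L$ acts on the semisimple $L_c$ by inner derivations, so the sum of the ideals hit by the controls is an ideal of $L$ and minimality of $L_c$ applies; that disposes of a summand commuting with the controls, but the mixed-block case still requires an argument you have not given.) Second, in the injectivity of $j\mapsto\beta(j)$ you only exclude the case ``a representation and its dual''; the equivalent-representation case (an inner twist between two blocks) is not addressed, and must be ruled out separately, e.g.\ by observing that a diagonal copy $\{(y,WyW^{-1})\}$ would generate a diagonal associative algebra and hence merge the two central blocks of $\mathcal M$, contradicting that $j$ labels distinct blocks. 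Third, you invoke Theorem \ref{th:con_discon_Lie_alg} to get $L_c\cap L_d=\{0\}$ and hence semisimplicity of $L_c$, but in the paper Theorem \ref{th:con_discon_Lie_alg} is \emph{derived from} Theorem \ref{th2:str_ds3_dem}, so this is circular as written; it is repairable (every generator of $L_c$ is an iterated bracket, so $L_c\subseteq[L,L]$, and $L_c$ is an ideal of the compact algebra $L$, hence semisimple), but the independent argument has to be supplied.
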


%%%%%　Theorem 3    %%%%%%

\begin{theorem}
\label{th3:str_ds2_dem}
When ${\rm dim}\mathcal H_S=2$, the space ${\mathcal H}_E$
 has the structure of  a direct sum of subspaces
%can be written as a direct sum of subspaces
 ${\mathcal H}_{E_j}^*$, i.e.
\begin{equation}
{\mathcal H}_E = \bigoplus_j{\mathcal H}_{E_j}^*,
\label{eq:structure_e_ds2}
\end{equation}
such that the disconnected and connected algebras, $L_d$ and $L_c$, can be written as direct
sums of subalgebras, each of which acts on a subspace ${\mathcal H}_{E_j}^*\otimes {\mathcal H}_S$.
Similarly to Theorem \ref{th2:str_ds3_dem}, the detail of each  subspaces in Eq.(\ref{eq:structure_e_ds2}) is determined  by $L$.

Further, these subalgebras of $L_d$ and $L_c$  have the forms,
\begin{eqnarray}
&  i\hat J_j \otimes \{Id_S\} \label{eq:str_sub_Lu_2} \mbox{ and} \\
&  {\mathcal L}(i\bar J_j\otimes \{Id_S\} \cup J_j\otimes {\rm su}({\rm dim}\mathcal H_S)),
\label{eq:str_sub_Lc_2}
\end{eqnarray}
respectively, where the triple of the operator sets $(J_j,\bar J_j,\hat J_j)$ is equal to one of the
following three types: $({\mathfrak{R}}, \bar {\mathfrak{R}}, \hat {\mathfrak{R}})$, 
$({\mathfrak{M}}_{\gamma}^{(k)},\bar {\mathfrak{M}}_{\gamma}^{(k)}, \hat{\mathfrak{M}}_{\gamma}^{(k)})$
or $({\mathfrak{S}}_{n},\bar {\mathfrak{S}}_{n}, \hat{\mathfrak{S}}_{n})$. Depending on the type of $J_j$ among $\mathfrak{R}, \mathfrak{M}_\gamma^{(k)}$, and $\mathfrak{S}_n$, $\mathcal{H}_{E_j}^*$ has a finer structure shown below in Eq. (\ref{eq:str_sub_e}).

\end{theorem}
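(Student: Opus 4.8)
\emph{Proof sketch.} The plan is to exploit the special algebra of $\mathrm{su}(2)$. Since $\dim\mathcal H_S=2$, every operator on $\mathcal H_E\otimes\mathcal H_S$ is written uniquely as $A_0\otimes Id_S+\sum_{\mu=1}^{3}A_\mu\otimes\sigma_\mu$ with $\sigma_\mu$ the Pauli matrices, and I would use repeatedly the Pauli relations $\{\sigma_\mu,\sigma_\nu\}=2\delta_{\mu\nu}Id_S$, $[\sigma_\mu,\sigma_\nu]=2i\epsilon_{\mu\nu\rho}\sigma_\rho$ together with the identity $[A\otimes B,C\otimes D]=\tfrac12\{A,C\}\otimes[B,D]+\tfrac12[A,C]\otimes\{B,D\}$. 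Writing $h_0=H_0\otimes Id_S+\sum_\mu H_\mu\otimes\sigma_\mu$ with $H_0,H_\mu$ Hermitian on $\mathcal H_E$, I would compute the iterated brackets that build $L_c$: commuting $\{Id_E\}\otimes\mathrm{su}(2)$ with $ih_0$ produces $iH_\mu\otimes\sigma_\nu$; a second bracket of two such terms produces $i\{H_\mu,H_\nu\}\otimes\sigma_\rho$ \emph{and} $[H_\mu,H_\nu]\otimes Id_S$. The anticommutator here is the Jordan product, and this is the crux: the set $\mathcal R$ of Hermitian operators $K$ with $iK\otimes\sigma_\mu\in L_c$ is forced to be closed under $K\circ K'=\tfrac12(KK'+K'K)$, contains $Id_E$, and (because $L_c$ is an ideal and $ih_0\in L$) is closed under $K\mapsto -i[H_0,K]$; thus $\mathcal R$ is the unital Jordan algebra generated by $Id_E,H_1,H_2,H_3$ and closed under $-i[H_0,\cdot\,]$. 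Using that a special Jordan algebra is closed under Jordan triple products one checks $[[K,K'],K'']\in\mathcal R$ and that $[\mathcal R,\mathcal R]:=\mathrm{span}\{[K,K']\}$ is itself a Lie algebra; this closes the bookkeeping and yields, with $\mathcal C$ the set of Hermitian operators on $\mathcal H_E$ commuting with all of $\mathcal R$,
\begin{equation}
L_c=\mathcal L\big(i\mathcal R\otimes\mathrm{su}(2)\ \cup\ [\mathcal R,\mathcal R]\otimes\{Id_S\}\big),\qquad L_d=i\mathcal C\otimes\{Id_S\},
\end{equation}
the form of $L_d$ following because commuting with $\{Id_E\}\otimes\mathrm{su}(2)$ kills the $\sigma$-components and commuting with $i\mathcal R\otimes\sigma_\mu$ then forces commutation with $\mathcal R$.

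Next I would invoke the structure theory of formally real (Euclidean) Jordan algebras. Since $\mathcal R$ consists of Hermitian operators, $\sum_i K_i^2=0$ forces each $K_i=0$, so $\mathcal R$ is formally real, hence semisimple, and decomposes uniquely as a direct sum $\mathcal R=\bigoplus_j\mathcal R_j$ of simple ideals; the unit $Id_E=\sum_j c_j$ splits into orthogonal central idempotents, which are Hermitian projections, giving ${\mathcal H}_E=\bigoplus_j{\mathcal H}_{E_j}^*$ with ${\mathcal H}_{E_j}^*:=c_j{\mathcal H}_E$ and $\mathcal R_j$ acting within ${\mathcal H}_{E_j}^*$. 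Because the $c_j$ are central, $[\mathcal R,\mathcal R]=\bigoplus_j[\mathcal R_j,\mathcal R_j]$ and $\mathcal C=\bigoplus_j\mathcal C_j$, so $L_c$ and $L_d$ decompose as the claimed direct sums of subalgebras acting on ${\mathcal H}_{E_j}^*\otimes{\mathcal H}_S$, with $J_j=\mathcal R_j$, $i\bar J_j=[\mathcal R_j,\mathcal R_j]$, $i\hat J_j=i\mathcal C_j$. It then remains to classify the simple $\mathcal R_j$ and their concrete realizations as self-adjoint operators.

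Here I would use the Jordan--von Neumann--Wigner classification: a simple formally real Jordan algebra is one of $\mathfrak H_n(\mathbb R)$, $\mathfrak H_n(\mathbb C)$, $\mathfrak H_n(\mathbb H)$, a spin factor $\mathrm{JSpin}_m$, or the exceptional $\mathfrak H_3(\mathbb O)$; the last is not special and is therefore excluded, since $\mathcal R_j\subseteq B({\mathcal H}_{E_j}^*)$. For each surviving type one determines its self-adjoint representations up to multiplicity: the full-Hermitian case $\mathfrak H_n(\mathbb C)$ acts as $M_n(\mathbb C)_{\mathrm{sa}}$ on a representation space tensored with a multiplicity space (giving exactly the $\dim{\mathcal H}_S\ge3$ picture of Theorem~\ref{th2:str_ds3_dem}, hence the family $\mathfrak R$); $\mathfrak H_n(\mathbb R)$ and $\mathfrak H_n(\mathbb H)$ are realized as the fixed points of a conjugation $\theta$ with $\theta^2=\pm1$ (this sign is the parameter $\gamma$), again tensored with a multiplicity (the family $\mathfrak M_\gamma^{(k)}$); and $\mathrm{JSpin}_m$, whose generators obey the Clifford relation $\{v_a,v_b\}=2\delta_{ab}$, is realized through the irreducible Clifford representation (the family $\mathfrak S_n$). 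For each, $J_j=\mathcal R_j$ is read off directly; $i\bar J_j=[\mathcal R_j,\mathcal R_j]$ is the corresponding classical Lie algebra ($\mathfrak{su}$, or the $\mathfrak{so}$/$\mathfrak{sp}$ attached to $\theta$, or $\mathfrak{so}(m)$ sitting inside the Clifford algebra); and $i\hat J_j=i\mathcal C_j$, the skew-Hermitian commutant, is pinned down by the double-commutant theorem and Schur's lemma, which at the same time exhibits the finer tensor-product structure of ${\mathcal H}_{E_j}^*$ in Eq.~(\ref{eq:str_sub_e}) (representation space $\times$ multiplicity space, with an extra Clifford/real-form factor in the $\mathfrak M$ and $\mathfrak S$ cases). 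Low-dimensional coincidences such as $\mathrm{JSpin}_2\cong\mathfrak H_2(\mathbb R)$ and $\mathrm{JSpin}_3\cong\mathfrak H_2(\mathbb C)$ are settled by fixing once and for all which family they are assigned to.

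The main obstacle is this last step: carrying out the representation theory of the special formally real Jordan algebras cleanly and matching the multiplicity/conjugation data to the fine decomposition (\ref{eq:str_sub_e}) --- in particular handling the simultaneous occurrence of a representation and its complex conjugate for $\mathfrak H_n(\mathbb C)$, determining the real dimension ($1$, $2$ or $4$) of the Clifford commutant for $\mathrm{JSpin}_m$ according to $m\bmod 8$, and checking that the three families are genuinely disjoint under the chosen convention. By contrast, the Jordan-ness of $\mathcal R$ and the Lie-closure of $[\mathcal R,\mathcal R]$ in the first step are conceptually the heart of the matter but reduce to the bracket-identity computation sketched above; and that first step is the only place where $\dim{\mathcal H}_S=2$ is essential, since for $\dim{\mathcal H}_S\ge3$ the symmetric structure constants of $\mathrm{su}(\dim{\mathcal H}_S)$ are nonzero and force the coefficient set to be closed under commutators as well, collapsing the Jordan algebra to an associative one --- the content of Theorem~\ref{th2:str_ds3_dem}.
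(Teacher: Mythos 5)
Your first stage is sound and in fact runs parallel to the paper's own route: your Pauli decomposition is the identifier decomposition of Lemma \ref{sec:str_L}, your closure relations for $\mathcal R$ under the anticommutator and under $[\,G^{(0)},\cdot\,]$ are Lemma \ref{sec:prop_Gs}, and your identification of $L_d$ as (commutant of $\mathcal R$)$\,\otimes\{Id_S\}$ and of $L_c$ as $\mathcal L(i\mathcal R\otimes{\rm su}(2)\cup[\mathcal R,\mathcal R]\otimes\{Id_S\})$ is what Lemma \ref{sec:abstract_structure} delivers. Your way of getting the inclusion $L_c\subseteq\mathcal L(i\mathcal R\otimes{\rm su}(2)\cup[\mathcal R,\mathcal R]\otimes\{Id_S\})$ -- checking directly, via the Jordan triple identity $[[K,K'],K'']=\{K,\{K',K''\}\}-\{K',\{K,K''\}\}$ and the Jacobi identity, that the right-hand side is an ideal of $L$ -- is a genuine small economy: the paper instead first bounds $G^{(0)}$ by ${\mathcal L}(\hat J\cup\bar J)$ (Lemma \ref{lemm:pseudo_Ideal}) and only then applies the sandwich argument. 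The central decomposition of the formally real Jordan algebra into simple ideals, splitting $Id_E$ into central projections and hence $\mathcal H_E=\bigoplus_j\mathcal H_{E_j}^*$, is likewise the same step the paper performs via the idempotent basis of Lemma \ref{sec:Jor_alg_rel}.

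The genuine gap is the step you yourself flag as ``the main obstacle'': the theorem does not merely assert that each $J_j$ is an abstract simple formally real Jordan algebra, it asserts the \emph{concrete operator forms} (\ref{eq:JordanAlg_1})--(\ref{eq:JordanAlg_6}), the induced fine tensor structure of $\mathcal H_{E_j}^*$ in Eq.~(\ref{eq:str_sub_e}) (multiplicity space $\mathcal H_{A_j}$, the qubit factors, the operator $Z^*_j$ with $Z^{*2}_j=Id$), and the explicit $\bar J_j$ and $\hat J_j$ of Eqs.~(\ref{eq:Rbar})--(\ref{eq:str_hat_6}). Invoking the abstract Jordan--von Neumann--Wigner classification and then ``determining the self-adjoint representations up to multiplicity'' is precisely the content of the paper's Lemma \ref{sec:rep_Jor} (an explicit inductive construction of the isometry $U$), together with Lemmas \ref{lemm:relation} and \ref{lemm:orthogonal_subspace} for $\bar J_j=i\mathcal L([J_j,J_j])$ and $\hat J_j$; none of this is carried out in your sketch, and it is where all the $\bmod\,8$-type Clifford bookkeeping and the appearance of $Z^*$ actually happen. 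Moreover your dictionary between the abstract types and the paper's families is off in a way that would derail the matching: $\mathfrak H_n(\mathbb C)$ corresponds to $\mathfrak M_\gamma^{(2)}$ (with $Z^*=\pm Id$ giving the full complex-Hermitian case), not to $\mathfrak R$, which is the trivial algebra $\{Id_A\}$; and $\gamma$ is the rank (the dimension of $\mathcal H_{Q_j}$), while the real/complex/quaternionic distinction is the label $k\in\{1,2,4\}$, not ``the sign of $\theta^2$''. Until the concrete realization step is done and this dictionary corrected, the statement of the theorem -- in particular Eq.~(\ref{eq:str_sub_e}) and the explicit triples $(J_j,\bar J_j,\hat J_j)$ -- is not yet proved.
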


The notations for the sets, $\mathfrak R$, $\mathfrak M_\gamma^{(k)}$, and $\mathfrak S_n$, are after \cite{JNW34}, and their details will be given later in this section (from Eq. (\ref{eq:JordanAlg_1}) to Eq. (\ref{eq:JordanAlg_6})). The indices $\gamma, k,$ and $n$ that specify the structure of operator sets $\mathfrak M_\gamma^{(k)}$, $\mathfrak S_n$ are integers such that $\gamma\geq 3$, $k\in\{1,2,4\}$ and $n\geq 3$. 
Also, we will introduce  sets with accent signs, $\hat{\bullet}$ and $\bar{\bullet}$, in Eqs. (\ref{eq:Rbar})-(\ref{eq:str_hat_6}), which are defined in 
correspondence to each of $\mathfrak R$, $\mathfrak M_\gamma^{(k)}$, and $\mathfrak S_n$.

The subspaces ${\mathcal H}_{E_j}$ or $\mathcal{H}_{E_j}^*$ have a fine structure depending on the type of $J_j$:
\begin{equation}
\mathcal{H}_{E_j}^* =  
\left\{\begin{array}{ll} 
\makebox[5cm][l]{$
{\mathcal H}_{A_j} 
$} &
\makebox{when }J_j=\mathfrak{R}
\\
\makebox[5cm][l]{$
{\mathcal H}_{A_j}\otimes{\mathcal H}_{Q_j}
$} &
\makebox{when }J_j=\mathfrak{M}_{\gamma}^{(k)}
\makebox{ for $k\in\{1,2\}$}
\\
\makebox[5cm][l]{$
{\mathcal H}_{A_j}\otimes{\mathcal H}_{Q_j^{(1)}}\otimes{\mathcal H}_{Q_j}
$} &
\makebox{when }J_j=\mathfrak{M}_{\gamma}^{(4)}
\\
{\mathcal H}_{A_j}\otimes{\mathcal H}_{Q_j^{(\lceil n/2\rceil-1)}}\otimes{\mathcal H}_{Q_j^{(\lceil n/2\rceil-2)}}\otimes \cdots \otimes{\mathcal H}_{Q_j^{(1)}}
&
\makebox{when }J_j=\mathfrak{S}_{n}
\end{array}
\right.
\label{eq:str_sub_e}
\end{equation}
where ${\rm dim}\mathcal H_{A_j}\ge 1$,
${\rm dim}\mathcal H_{Q_j}=\gamma$, and all other spaces, ${\mathcal H}_{Q_j^{(1)}}$, ${\mathcal H}_{Q_j^{(2)}}$, $\cdots$, are two-dimensional. 

If $J_j=\mathfrak{S}_{2n'}$ or $\mathfrak{M}_{\gamma}^{(2)}$ for $n'\in\mathbb N_{>1}$, there appears a Hermitian operator $Z^*_j$ in the
representations of $({\mathfrak{S}}_{2n'},\bar {\mathfrak{S}}_{2n'}, \hat{\mathfrak{S}}_{2n'})$ and
$({\mathfrak{M}}_{\gamma}^{(2)},\bar {\mathfrak{M}}_{\gamma}^{(2)}, \hat{\mathfrak{M}}_{\gamma}^{(2)})$ (see Eqs. {\rm (\ref{eq:JordanAlg_1})-(\ref{eq:str_hat_6})}).
The operators $Z^*_j$ acting on the space ${\mathcal H}_{A_j}$ have eigenvalues $+1$ and/or $-1$. The dimensions of $\mathcal H_{A_j}$ and $\mathcal H_{Q_j}$, as well as the precise form of $Z^*_j$, may differ for each $j$, even if $J_j$ could be of the same type for all $j$, e.g., $J_j={\mathfrak{M}}_{\gamma}^{(2)} \;(\forall j)$.

Theorem \ref{th:con_discon_Lie_alg} states that, we can uniquely divide any drift Hamiltonian $h_0$, which describes the (unmodulable) interaction between the systems $E$ and $S$, into two parts $h_d\in L_d$ and $h_c\in L_c$. This division is done such that the $h_d$ part has no effect on the dynamics in the space $\mathcal H_S$, and the other part
$h_c$ represents the interaction between $\mathcal H_S$ and  $\mathcal H_E$.

Theorem \ref{th2:str_ds3_dem} conveys a somewhat strong message. It claims that, when ${\rm dim}\mathcal H_S\geq 3$, even if we attach an additional quantum system $S'$ to $S$, intending to enlarge the effective work space, it does not expand the set of executable operations for $\mathcal{H}_E$. That is, if we let $L'$ denote the Lie algebra generated by $L\otimes \{Id_{S'}\}$ and $\{Id_E\}\otimes {\rm su}({\rm dim}\mathcal H_S\cdot{\rm dim}\mathcal H_{S'})$, the set of all generators in $E$ and $S$ that are possible under the expansion $S^\prime$ is still the same as $L$;
\begin{equation}\label{possibleOperationsInE}
\{g|g\otimes Id_{S'}\in L'\}=L.
\end{equation}

One common message from Theorems \ref{th2:str_ds3_dem} and \ref{th3:str_ds2_dem} is that, regardless of the dimension of the system $S$, the system $E$ would have a direct sum structure as in Eqs. (\ref{eq:structure_e_ds3}) and (\ref{eq:structure_e_ds2}). Thus, the quantum dynamics cannot make a state jump between different subspaces in the sum, which is already a significant consequence of the limited access. Theorems \ref{th2:str_ds3_dem} and \ref{th3:str_ds2_dem} then state further that there are substantial differences in the fine structures of each subspace, depending on whether $\mathrm{dim}\mathcal{H}_S$ is larger than or equal to 2. 

\noindent
\subsection{Sufficient conditions required for $L$, $L_d$, and $L_c$} 
Next, we give sufficient conditions for the operator sets $L$, $L_d$ and $L_c$ to be a Lie algebra, disconnected and connected algebras for the Lie algebra $L$, respectively.
As a matter of fact, having the structures stated in Theorems \ref{th:con_discon_Lie_alg} and \ref{th2:str_ds3_dem}, as well as the rather trivial property of $L\cap L_d$  being closed under the commutator, are sufficient for them to have the necessary properties mentioned with Eqs. (\ref{Lc_def}) and (\ref{Ld_def}).

%%%%%　Theorem 4    %%%%%%

\begin{theorem}
\label{th:str_32_suf}
Suppose that ${\rm dim}\mathcal H_S\ge 3$ and $\mathcal{H}_E$ can be decomposed into $\mathcal H_{\tilde B_j}\otimes\mathcal H_{\tilde R_j}$ such that $\mathcal{H}_E=\bigoplus_j \mathcal H_{\tilde B_j} \otimes \mathcal H_{\tilde R_j}$. Also, define $\tilde{L}_d$ and $\tilde{L}_c$ according to this space decomposition as 
\begin{eqnarray}
\tilde L_d &:=& \bigoplus_j
  {\rm u}({\rm dim}\mathcal H_{\tilde B_j})\otimes \{Id_{\tilde R_j}\otimes Id_{S}\},
\label{eq:def_tilde_L_d_3}
\\
\tilde L_c &:=& \bigoplus_j
 \{Id_{\tilde B_j}\}\otimes {\rm su}({\rm dim}\mathcal H_{\tilde R_j}\cdot {\rm dim}\mathcal H_S).
\label{eq:def_tilde_L_c_3}
\end{eqnarray}

If the set of operators $\tilde L$ on $\left(\bigoplus_j{\mathcal H}_{\tilde B_j}\otimes {\mathcal H}_{\tilde R_j}\right) \otimes{\mathcal H}_S$ satisfy 
\begin{eqnarray}
\tilde L  &:=& \mathcal L\left(\tilde L_d'\cup \tilde L_c\right),
\label{eq:def_tilde_L_3}
\\
\tilde L_d' &\subseteq& \tilde L_d,
\label{eq:def_tilde_L'_3}
\end{eqnarray}
such that $\tilde L_d'$ is closed under the commutator, then so is $\tilde L$, and  $\tilde L_d$ and $\tilde L_c$ are the disconnected and the connected algebras for $\tilde L$.

If $\mathrm{dim}\mathcal{H}_S=2$ and $\mathcal{H}_E$ can be decomposed into $\mathcal{H}_{\tilde{E}_j}^\diamond$, i.e., $\mathcal{H}_E=\bigoplus_j \mathcal{H}_{\tilde{E}_j}^\diamond$, the above statement still holds with the following modifications to the definitions of $\tilde{L}_d$ and $\tilde{L}_c$. Namely, 
\begin{eqnarray}
\tilde{L}_d &:=& \bigoplus_j i\hat J_j\otimes\{Id_S\},
\label{eq:def_tilde_L_d_2}
\\
\tilde{L}_c &:=& \bigoplus_j  \mathcal L\left(i\bar J_j\otimes\{ Id_S\}\cup J_j\otimes {\rm su}({\rm dim}\mathcal H_S)\right),
\label{eq:def_tilde_L_c_2}
\end{eqnarray}
where $(J_j,\bar J_j, \hat J_j)$ is equal to one of the triples of operator sets,
 $({\mathfrak R},\bar{\mathfrak R}, \hat{\mathfrak R})$,
 $({\mathfrak M}_\gamma^{(k)},\bar{\mathfrak R}_\gamma^{(k)}, \hat{\mathfrak R}_\gamma^{(k)})$ and
 $({\mathfrak S}_n,\bar{\mathfrak S}_n, \hat{\mathfrak S}_n)$. 
Naturally, $\tilde{L}$ in Eq. (\ref{eq:def_tilde_L_3}) should be considered to be an operator set acting on $\left(\bigoplus_j \mathcal{H}_{\tilde{E}_j}^\diamond \right) \otimes \mathcal{H}_S$.
\end{theorem}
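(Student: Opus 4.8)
The plan is to handle the two regimes ${\rm dim}\mathcal H_S\ge 3$ and ${\rm dim}\mathcal H_S=2$ in parallel, establishing in each case three facts: (i) $\tilde L$ is closed under the commutator; (ii) $\tilde L_c$ obeys the defining property (\ref{Lc_def}) of the connected algebra of $\tilde L$; and (iii) $\tilde L_d$ obeys the defining property (\ref{Ld_def}) of the disconnected algebra of $\tilde L$. The structural fact underlying everything is the blockwise commutation $[\tilde L_d,\tilde L_c]=0$: on the $j$-th summand $\tilde L_d$ acts nontrivially only on a ``$B$-type'' tensor factor on which $\tilde L_c$ acts as the identity, and distinct summands live on mutually orthogonal subspaces. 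Granting this, (i) is immediate --- $\tilde L_c$ is itself a Lie algebra, $\tilde L_d'$ is closed by hypothesis, and the cross brackets vanish, so $\tilde L=\mathcal L(\tilde L_d'\cup\tilde L_c)$ is closed --- and the same computation gives $[\tilde L,\tilde L_c]\subseteq\tilde L_c$, i.e. $\tilde L_c$ is an ideal of $\tilde L$ containing $\{Id_E\}\otimes{\rm su}({\rm dim}\mathcal H_S)$, which is half of (ii). That $\tilde L_c$ is a Lie algebra is trivial when ${\rm dim}\mathcal H_S\ge 3$ (a direct sum of ${\rm su}(\cdot)$'s); when ${\rm dim}\mathcal H_S=2$ it follows from the Jordan identity used below.

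For the other half of (ii) I would show that the iterated brackets $[\cdots[[g',g_1],\dots],g_n]$ with $g'\in\{Id_E\}\otimes{\rm su}({\rm dim}\mathcal H_S)$ and $g_m\in\tilde L$ already span $\tilde L_c$. When ${\rm dim}\mathcal H_S\ge3$ this is simplicity: on each block $\tilde L_c$ restricts to $\{Id_{\tilde B_j}\}\otimes{\rm su}(N_j)$ with $N_j={\rm dim}\mathcal H_{\tilde R_j}\cdot{\rm dim}\mathcal H_S\ge3$, and the ideal generated in the simple algebra ${\rm su}(N_j)$ by the nonzero subalgebra to which $\{Id_E\}\otimes{\rm su}({\rm dim}\mathcal H_S)$ restricts is all of it. When ${\rm dim}\mathcal H_S=2$ I would instead use the identity $[A\otimes\sigma,B\otimes\tau]=\tfrac12\{A,B\}\otimes[\sigma,\tau]+\tfrac12[A,B]\otimes\{\sigma,\tau\}$ together with $\{\sigma,\tau\}\in\mathbb R\,Id_S$ for $\sigma,\tau\in{\rm su}(2)$: bracketing $Id_E\otimes\sigma$ against $A\otimes\tau\in J_j\otimes{\rm su}(2)$ returns $A\otimes[\sigma,\tau]$, hence all of $J_j\otimes{\rm su}(2)$, and bracketing $A\otimes\sigma$ against $B\otimes\sigma$ returns $[A,B]\otimes Id_S$, which exhausts $i\bar J_j\otimes\{Id_S\}$ once one knows $i\bar J_j=[J_j,J_j]$. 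Here I would invoke $[[A,B],C]=\{A,\{B,C\}\}-\{B,\{A,C\}\}$, valid since each $J_j$ is a special Jordan algebra; this also shows $[J_j,J_j]$ is a Lie algebra and that $[\,i\bar J_j\otimes\{Id_S\},\,J_j\otimes{\rm su}(2)\,]\subseteq\tilde L_c$, completing (i) in this regime too.

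For (iii) I would first show that any skew-hermitian $g$ commuting with $\tilde L_c$ is block-diagonal: since $\{Id\}\otimes{\rm su}({\rm dim}\mathcal H_S)\subseteq\tilde L_c$ (and, for ${\rm dim}\mathcal H_S=2$, $Id\in J_k$), the off-diagonal part $g_{jk}$ annihilates $Id\otimes M$ for every traceless $M$ acting on the $S$-side of block $k$, and as such $M$ already send any nonzero vector onto the whole space, $g_{jk}=0$. On a diagonal block, commuting with $Id\otimes{\rm su}({\rm dim}\mathcal H_S)$ forces $g_{jj}=g_{jj}'\otimes Id_S$ by the double-commutant theorem, and $g_{jj}'$ must commute with $J_j$ --- equivalently with the real $*$-algebra it generates, which contains $\bar J_j$. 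Reading off $\hat J_j$ as the hermitian part of this commutant (which is exactly how the hatted sets are defined in Eqs. (\ref{eq:Rbar})--(\ref{eq:str_hat_6})) gives $g\in\tilde L_d$; the reverse inclusion $\tilde L_d\subseteq\{g:[g,\tilde L_c]=0\}$ is the $[\tilde L_d,\tilde L_c]=0$ already used, and $\tilde L_c\cap\tilde L_d=\{0\}$ then comes for free from Theorem \ref{th:con_discon_Lie_alg}.

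The main obstacle is the verification, for ${\rm dim}\mathcal H_S=2$, that the explicit families $(\mathfrak R,\bar{\mathfrak R},\hat{\mathfrak R})$, $(\mathfrak M_\gamma^{(k)},\bar{\mathfrak M}_\gamma^{(k)},\hat{\mathfrak M}_\gamma^{(k)})$ and $(\mathfrak S_n,\bar{\mathfrak S}_n,\hat{\mathfrak S}_n)$ really have the properties the above argument takes for granted: each $J_j$ a unital special Jordan algebra, $i\bar J_j=[J_j,J_j]$, and $i\hat J_j$ the skew-hermitian commutant of the real $*$-algebra generated by $J_j$. The delicate case is the commutant computation for $J_j=\mathfrak S_{2n'}$ and $J_j=\mathfrak M_\gamma^{(2)}$, where the underlying Clifford/matrix representation on $\mathcal H_{E_j}^*$ is reducible so the commutant is strictly larger than $\mathbb R\,Id$; tracking the grading operator $Z^*_j$ with eigenvalues $\pm1$ and the multiplicity space $\mathcal H_{A_j}$ is what makes this step substantive, and it is where the explicit realizations fixed in Eqs. (\ref{eq:JordanAlg_1})--(\ref{eq:str_hat_6}) must be used in full.
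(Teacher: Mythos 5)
Your proposal is correct and, in substance, follows the same route as the paper, with one organizational difference worth noting. The closure of $\tilde L$ is argued exactly as in the paper (bilinearity of the bracket, closure of $\tilde L_d'$ by hypothesis and of $\tilde L_c$, and $[\tilde L_d,\tilde L_c]=\{0\}$). For the identification of the connected and disconnected algebras, the paper does not verify the definitions directly for $\tilde L$: it invokes its general ``sandwich'' lemmas (Lemma \ref{sec:abstract_structure_lemm10} for ${\rm dim}\mathcal H_S\ge 3$ and Lemma \ref{sec:abstract_structure} for ${\rm dim}\mathcal H_S=2$), which state that any Lie algebra $L$ with $\tilde L_c\subseteq L\subseteq \mathcal L(\tilde L_d\cup \tilde L_c)$ has $L_c=\tilde L_c$ and $L_d=\tilde L_d$; your blockwise simplicity argument, the ${\rm su}(2)$ anticommutator trick, and the commutant computation are precisely an inline re-derivation of the content of those two lemmas, so the two proofs buy the same thing, the paper's version being reusable for Theorems \ref{th2:str_ds3_dem}, \ref{th3:str_ds2_dem} and \ref{th:str_rel}. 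The case-by-case facts you defer at the end --- $\bar J_j=i\mathcal L([J_j,J_j])$, $[\hat J_j,J_j]=[\hat J_j,\bar J_j]=\{0\}$, and that $\hat J_j$ is the full hermitian commutant of $J_j$ --- are exactly the paper's Lemmas \ref{lemm:relation} and \ref{lemm:orthogonal_subspace}, so citing them closes your outline; no new argument is needed there. One caution: your opening justification of $[\tilde L_d,\tilde L_c]=\{0\}$ (``$\tilde L_d$ acts only on a tensor factor on which $\tilde L_c$ acts as the identity'') is not literally true in the ${\rm dim}\mathcal H_S=2$ case when $J_j$ is $\mathfrak M_\gamma^{(2)}$ or $\mathfrak S_{2n'}$, since $J_j$ and $\bar J_j$ then contain terms with $Z^*_j$ acting on $\mathcal H_{A_j}$; the commutation holds only because $\hat J_j$ is by definition built from ${\rm u}({\rm dim}\mathcal H_{A_j})^*$, i.e. from operators commuting with $Z^*_j$ (this is Eq. (\ref{eq:relation_3})), a subtlety your final paragraph acknowledges but which should replace the tensor-factor phrasing in the actual write-up.
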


Theorem \ref{th:str_32_suf} reveals the structure of the dynamical Lie algebra $L$, which contains arbitrary generators on the space $\mathcal H_S$.
It implies that the structure of the space in $\mathcal H_E$ may not be trivial at all. By a trivial structure, we mean that $\mathcal H_E$ is a simple direct product of two spaces $\mathcal H_{E_1}$ and $\mathcal H_{E_2}$, i.e., $\mathcal H_E=\mathcal H_{E_1} \otimes \mathcal H_{E_2}$. If the Hamiltonian had the form, $h_0=Id_{E_1} \otimes h_{E_2}\otimes h_S$,  then obviously the space $\mathcal H_{E_1}$ cannot be accessed from $\mathcal H_S$, while $\mathcal H_{E_2}$ can. 
What is claimed above is, however, that the accessible and inaccessible spaces in $\mathcal H_E$ would have more complex and rich structure because of the restrictedness of our physical access. 

\noindent
\subsection{Relation between structures when ${\rm dim}\mathcal H_S =2$ and ${\rm dim}\mathcal H_S\geq 3$}
From the quantum control perspective, one might naively think of enlarging the controllable space in $E$ by introducing an additional system $S'$ that interacts with $S$. We have mentioned above that this is not possible when $\mathrm{dim}\mathcal{H}_S\ge 3$, but what happens if we append an ancillary system $S'$ when $\mathrm{dim}\mathcal{H}_S=2$? The following theorem depicts the transition that occurs when an ancillary system $S'$ (obviously, dim$\mathcal{H}_{S^\prime}\ge 2$) is added to the two-dimensional $S$.

%%%%%　Theorem 5  (ex-6)  %%%%%%

\begin{theorem}
\label{th:str_rel}
Let $L'$ be an \textit{expanded} Lie algebra generated by $h_0\otimes Id_{S'}$ and $\{Id_E\}\otimes {\rm su}({\rm dim}\mathcal H_S\cdot {\rm dim}\mathcal H_{S'})$. The expansion of the accessible space from $S$ to $S'$ causes a change in the structure of $\mathcal{H}_E$ from that of Eq. (\ref{eq:structure_e_ds2}) to Eq. (\ref{eq:structure_e_ds3}). (Below, primed indices are for the spaces after expanding $\mathrm{dim}\mathcal{H}_S=2$ to $\mathrm{dim}(\mathcal{H}_S\otimes \mathcal{H}_{S^\prime})\ge 3$.)

If $J_j$ in Eq. (\ref{eq:str_sub_Lc_2}) is equal to one of $\mathfrak{R}$, $\mathfrak{M}_{\gamma}^{(1)}$, $\mathfrak{M}_{\gamma}^{(4)}$
or  $\mathfrak{S}_{2n'-1}$ with $n'\in \mathbb N_{>1}$, there is a one-to-one correspondence between $j$ and $j'$ 
such that 
\[
\mathcal{H}_{E_j}^*=\mathcal H_{E_{j^\prime}}=\mathcal H_{B_{j'}}\otimes \mathcal H_{R_{j'}}.
\]
If $J_j$ is equal to either
$\mathfrak{M}_{\gamma}^{(2)}$
or $\mathfrak{S}_{2n'}$, the subspace $\mathcal{H}_{E_j}$ splits into a direct sum of two direct products:
\begin{eqnarray*}
\mathcal{H}_{E_j}^* &=& \mathcal{H}_{E_{j^\prime+}}\oplus \mathcal{H}_{E_{j^\prime-}} \\
&=&  (\mathcal H_{B_{j^\prime+}} \otimes \mathcal H_{R_{j^\prime+}}) \oplus (\mathcal H_{B_{j^\prime-}} \otimes \mathcal H_{R_{j^\prime-}})   \\
&=&\left\{
\begin{array}{ll}
(\mathcal H_{A_j^{(+1)}}\otimes \mathcal H_{Q_{j}})\oplus (\mathcal H_{A_j^{(-1)}}\otimes \mathcal H_{Q_{j}}), \;\;
&\makebox{when } J_j=\mathfrak{M}_\gamma^{(2)} \\
(\mathcal H_{A_j^{(+1)}}\otimes \mathcal H_{Q_{j}^{(n^\prime-1)}}\otimes\cdots\otimes \mathcal{H}_{Q_j^{(1)}}) 
\\
\quad\quad\oplus 
(\mathcal H_{A_j^{(-1)}}\otimes \mathcal H_{Q_{j}^{(n^\prime-1)}}\otimes\cdots\otimes \mathcal{H}_{Q_j^{(1)}}), \;\;
&\makebox{when } J_j=\mathfrak{S}_{2n^\prime} \\
\end{array}
\right.
\end{eqnarray*}
where $\mathcal{H}_{A_j^{(\pm 1)}}$ are the eigenspaces of the $Z_j^*$ operator on $\mathcal{H}_{A_j}$ corresponding to its eigenvalues $\pm 1$,
and $j^\prime\pm$ are the indices for distinguishing these subspaces.  
\end{theorem}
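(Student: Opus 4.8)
\emph{Proof plan.} The idea is to treat one summand of $\mathcal H_E=\bigoplus_j\mathcal H_{E_j}^*$ at a time and to show that enlarging the accessible space from a qubit to something of dimension $\ge 3$ turns the Jordan structure $(J_j,\bar J_j,\hat J_j)$ into the \emph{associative envelope} of $J_j$ — exactly the ingredient that distinguishes the $\mathrm{dim}\mathcal H_S=2$ and $\mathrm{dim}\mathcal H_S\ge 3$ cases in Theorems~\ref{th2:str_ds3_dem} and~\ref{th3:str_ds2_dem}. First I would reduce to a single block. Every generator of $L'$ preserves the decomposition $\mathcal H_E=\bigoplus_j\mathcal H_{E_j}^*$ of Theorem~\ref{th3:str_ds2_dem}: the controls $\{Id_E\}\otimes\mathrm{su}(\mathrm{dim}\mathcal H_S\cdot\mathrm{dim}\mathcal H_{S'})$ act trivially on $\mathcal H_E$, and $ih_0\otimes Id_{S'}$ inherits this from $h_0\in\mathcal L(L_d\cup L_c)$ (Theorem~\ref{th:con_discon_Lie_alg}). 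Hence $L'$, $L_c'$ and $L_d'$ split over $j$, and on the $j$-th block $L'$ is generated by $(L_{c,j}+L_{d,j})\otimes\{Id_{S'}\}$, with $L_{c,j},L_{d,j}$ given by Eqs.~(\ref{eq:str_sub_Lu_2})--(\ref{eq:str_sub_Lc_2}), together with $\{Id_{E_j^*}\}\otimes\mathrm{su}(\mathrm{dim}\mathcal H_S\cdot\mathrm{dim}\mathcal H_{S'})$.

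The core step is to compute the new connected algebra on this block. Writing the controls in terms of $\mathrm{su}(2)$ on $\mathcal H_S$ and $\mathrm{su}(\mathcal H_{S'})$, a short structure-constant computation shows that commutators among the generators $J_j\otimes\mathrm{su}(\mathcal H_S)\otimes\{Id_{S'}\}$ and $\{Id\}\otimes\mathrm{su}(\mathcal H_S\otimes\mathcal H_{S'})$ produce, on the $\mathcal H_{E_j}^*$ factor, \emph{both} the Jordan product $A\circ B$ \emph{and} the commutator $[A,B]$ of any $A,B\in J_j$, each paired with arbitrary controls on $\mathcal H_S\otimes\mathcal H_{S'}$. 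For a qubit alone the second ingredient is unavailable, since off-diagonal anticommutators of Pauli matrices vanish; this is precisely why only the Jordan product survives when $\mathrm{dim}\mathcal H_S=2$, whereas $\mathrm{dim}(\mathcal H_S\otimes\mathcal H_{S'})\ge 3$ supplies the missing symmetric structure constants. Since $A\circ B$ and $[A,B]$ together generate the full real associative $*$-algebra $\mathcal A_j\subseteq\mathrm{End}(\mathcal H_{E_j}^*)$ generated by $J_j$, iterating shows that on this block
\[
L_c'=\mathcal L\!\bigl(\mathcal A_j^{\mathrm{ah}}\otimes\{Id_S\otimes Id_{S'}\}\;\cup\;\mathcal A_j^{\mathrm h}\otimes\mathrm{su}(\mathrm{dim}\mathcal H_S\cdot\mathrm{dim}\mathcal H_{S'})\bigr)
\]
restricted to the support of $\mathcal A_j$, where $\mathcal A_j^{\mathrm h},\mathcal A_j^{\mathrm{ah}}$ are the Hermitian and skew-Hermitian parts, and the pre-existing pieces $i\bar J_j\otimes\{Id_S\}$, $i\hat J_j\otimes\{Id_S\}$ are absorbed because $\bar J_j,\hat J_j\subseteq\mathcal A_j$.

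Finally I would match this with the $\mathrm{dim}\mathcal H_S\ge 3$ form. By the Artin--Wedderburn structure of finite-dimensional complex $*$-algebras, $\mathcal A_j\cong\bigoplus_r\mathrm{End}(\mathcal H_{R_{j'_r}})\otimes\{Id_{\mathcal H_{B_{j'_r}}}\}$, and the display above then reads off as Eqs.~(\ref{eq:str_sub_Lu})--(\ref{eq:str_sub_Lc}): on each summand $r$, $L_d'=\mathrm u(\mathrm{dim}\mathcal H_{B_{j'_r}})\otimes\{Id\}$ and $L_c'=\{Id_{B_{j'_r}}\}\otimes\mathrm{su}(\mathrm{dim}\mathcal H_{R_{j'_r}}\cdot\mathrm{dim}\mathcal H_S\cdot\mathrm{dim}\mathcal H_{S'})$. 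The number of summands and the identification of $\mathcal H_{R_{j'_r}},\mathcal H_{B_{j'_r}}$ with the factors in Eq.~(\ref{eq:str_sub_e}) come from the representation theory of each special formally real Jordan algebra: (i) for $J_j=\mathfrak R$, $\mathcal A_j=\mathbb R\,Id$, one summand, $\mathcal H_{R_{j'}}$ trivial and $\mathcal H_{B_{j'}}=\mathcal H_{A_j}$; (ii) for $\mathfrak M_\gamma^{(1)}=H_\gamma(\mathbb R)$ and $\mathfrak M_\gamma^{(4)}=H_\gamma(\mathbb H)$, the complexified envelopes $M_\gamma(\mathbb C)$ and $M_{2\gamma}(\mathbb C)$ are simple, so $\mathcal A_j$ is a single full matrix algebra ($\mathcal H_{R_{j'}}=\mathcal H_{Q_j}$, resp.\ $\mathcal H_{Q_j^{(1)}}\otimes\mathcal H_{Q_j}$, and $\mathcal H_{B_{j'}}=\mathcal H_{A_j}$), hence no split; (iii) for $\mathfrak S_{2n'-1}$ the underlying complex Clifford algebra is simple, again no split; (iv) for $\mathfrak M_\gamma^{(2)}=H_\gamma(\mathbb C)$ and $\mathfrak S_{2n'}$ the complexified envelope is $M_\gamma(\mathbb C)\oplus M_\gamma(\mathbb C)$, resp.\ a direct sum of two matrix algebras, whose two simple summands are interchanged by the conjugation/pseudoscalar defining $Z_j^*$ and are supported on its $\pm1$ eigenspaces, yielding $\mathcal H_{E_j}^*=(\mathcal H_{A_j^{(+1)}}\otimes\mathcal H_{Q_j})\oplus(\mathcal H_{A_j^{(-1)}}\otimes\mathcal H_{Q_j})$ and its stated analogue for $\mathfrak S_{2n'}$. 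Since every generator of $L'$ on this block commutes with $Z_j^*\otimes Id_S\otimes Id_{S'}$, no off-diagonal part appears, so the two sub-blocks are genuine summands; Theorem~\ref{th:str_32_suf}, applied with $\tilde L_d'=L'\cap L_d'$ (which is closed under the commutator), then certifies that the $L_c',L_d'$ so obtained really are the connected and disconnected algebras of $L'$, completing the proof.

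\emph{Main obstacle.} The block reduction and the structure-constant manipulation are essentially the qubit-versus-qutrit computation already behind Theorems~\ref{th2:str_ds3_dem} and~\ref{th3:str_ds2_dem}. The real work is step~(iii)/(iv): one must invoke, for every entry of the classification of~\cite{JNW34}, the precise $*$-representation theory of its special universal associative envelope — in particular the dichotomy ``real or quaternionic type $\Rightarrow$ a single full complex matrix algebra (no split)'' versus ``complex type $\Rightarrow$ envelope $\cong$ rep $\oplus$ conjugate rep (split into two)'' — and, crucially, identify the central operator governing the split with the operator $Z_j^*$ already present in the $\mathrm{dim}\mathcal H_S=2$ description, while checking that the spin-factor index parity lines up with the even/odd Clifford-algebra dichotomy and that the multiplicity spaces $\mathcal H_{A_j},\mathcal H_{A_j^{(\pm1)}}$ and the qubit factors $\mathcal H_{Q_j^{(i)}}$ correspond correctly across the two pictures.
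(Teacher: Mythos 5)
Your core mechanism is the same as the paper's: once $\dim(\mathcal H_S\otimes\mathcal H_{S'})\ge 3$, the identifier $G^{(1)\prime}$ must be closed under both $[\cdot,\cdot]$ and $i\{\cdot,\cdot\}$, so each $J_j$ is replaced by its minimal such closure, i.e.\ the (hermitian part of the) associative algebra it generates; the paper computes these closures by hand in Lemma~\ref{sec:str_mod}, obtaining the primed algebras (\ref{eq:def_J'_R})--(\ref{eq:def_J'_Seven}), whereas you import the known envelopes of the JNW types (Artin--Wedderburn, Clifford parity), which is legitimate in spirit, and your remark that every generator of $L'$ commutes with $Z_j^*\otimes Id_S\otimes Id_{S'}$ is a correct and direct way to see the invariance of the two sub-blocks. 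However, one of your supporting claims is false: $\hat J_j\subseteq\mathcal A_j$ fails whenever $\dim\mathcal H_{A_j}\ge 2$, since $\hat J_j$ is the commutant of $J_j$ and meets the algebra generated by $J_j$ only in its centre (compare Eqs. (\ref{eq:str_hat_1})--(\ref{eq:str_hat_6}) with (\ref{eq:def_J'_R})--(\ref{eq:def_J'_Seven})). The disconnected pieces are therefore not ``absorbed''; they persist as $L_d'=L_d\otimes\{Id_{S'}\}$, Eq. (\ref{eq:str_Lcp}). Relatedly, your intermediate formula for $L_c'$ overshoots by the block identities: connected algebras are spanned by commutators and are traceless on each invariant block, while $\mathcal A_j^{\mathrm{ah}}$ contains $i\,Id_{E_j}$.

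The more serious gap is the certification step. Theorem~\ref{th:str_32_suf} only makes assertions about the algebra $\tilde L=\mathcal L(\tilde L_d'\cup\tilde L_c)$ that you feed into it; to transfer its conclusion to $L'$ you must first know $L'=\mathcal L((L'\cap L_d')\cup L_c')$, i.e.\ the upper bound $L'\subseteq\mathcal L(L_d'\cup L_c')$. Your generation argument establishes only the lower bound (the candidate $L_c'$ sits inside $L'$) and the $Z^*$-invariance of the sub-blocks; it never shows that $G^{(0)\prime}$ stays inside $\bigoplus_j i\mathcal L(\hat J_j\cup[J_j',J_j'])$, so invoking Theorem~\ref{th:str_32_suf} with $\tilde L_d'=L'\cap L_d'$ is circular. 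This maximality direction is exactly what the paper supplies in Eq. (\ref{eq:str_det_Lp}), by checking that the right-hand side is closed under the commutator and contains all generators $L\otimes\{Id_{S'}\}$ and $\{Id_E\}\otimes{\rm su}({\rm dim}\mathcal H_S\cdot{\rm dim}\mathcal H_{S'})$, before applying Lemma~\ref{sec:abstract_structure_lemm10}. An equivalent repair within your framework would be to verify that your candidate $L_c'$ is an ideal of $L'$ by testing commutators against the generators, using Eq. (\ref{eq:uppper_bound_DLA}) for $L$ together with $[\hat J_j,\mathcal A_j]=\{0\}$ and $[\bar J_j,\mathcal A_j]\subseteq\mathcal A_j$; without some such argument the exact identification of $L_c'$ and $L_d'$ — and hence the asserted correspondence of block structures — is not established.
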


The structures of $\mathcal H_{E_j}^*$ in Eq. (\ref{eq:str_sub_e}) are related to those in Eq. (\ref{eq:structure_e_ds3}) as follows:
\begin{eqnarray}
&&\left\{
\begin{array}{ll}
{\mathcal H}_{B_{j'}}={\mathcal H}_{A_j}, &\makebox{when } J_j=\mathfrak{R}, \mathfrak{M}_{\gamma}^{(1)}, \mathfrak{M}_{\gamma}^{(4)}, 
\mathfrak{S}_{2n'-1} \; (n'>1), \\
{\mathcal H}_{B_{j^\prime \pm}}={\mathcal H}_{A_j^{(\pm 1)}}, &\makebox{when } J_j=\mathfrak{M}_{\gamma}^{(2)}, \mathfrak{S}_{2n'},
\end{array}
\right. \\
&& \mathcal{H}_{R_j^\prime}\;\;\mbox{or}\;\; {\mathcal H}_{R_{j^\prime \pm}}=
\left\{
\begin{array}{ll}
\makebox[3cm][l]{$
{\mathcal H}_{Q_j}
$} &
\makebox{when }J_j=\mathfrak{R}, \mathfrak{M}_{\gamma}^{(k)} \; k\in\{1,2\},
\\
\makebox[3cm][l]{$
{\mathcal H}_{Q_j^{(1)}}\otimes{\mathcal H}_{Q_j}
$} &
\makebox{when }J_j=\mathfrak{M}_{\gamma}^{(4)},
\\
{\mathcal H}_{Q_j^{(\lceil n/2\rceil-1)}}\otimes{\mathcal H}_{Q_j^{(\lceil n/2\rceil-2)}}\otimes \cdots \otimes{\mathcal H}_{Q_j^{(1)}} &
\makebox{when }J_j=\mathfrak{S}_{n},
\end{array}
\right.
\end{eqnarray}
for $b\in\{+1,-1\}$. When $J_j=\mathfrak R$, we consider $\mathcal H_{A_j}$ to be a direct product of itself and a one-dimensional space $\mathcal H_{Q_j}$.

Also, the connected algebra $L_c^\prime$ of $L^\prime$ after appending $S^\prime$ will be of the form in Eq. (\ref{eq:str_sub_Lc}), 
i.e., su(dim$\mathcal H_{R_j}\cdot \mathrm{dim}\mathcal H_S)$ on each block subspace, and the disconnected algebra $L_d^\prime$ is related to the original $L_d$ as 
\begin{equation}
L_d' = L_d\otimes \{Id_{S'}\}.
\label{eq:str_Lcp}
\end{equation}

\subsection{Physical examples}
Expansion of the controllable space is a topic in the study of quantum controllability of specific physical systems. For example, in \cite{BMM10}, 
indirect control was discussed for a one-dimensional chain of $N$ spin-1/2 particles whose dynamics are governed by the drift Hamiltonian 
\begin{equation}\label{xxham}
ih_0^\mathrm{XX} = \frac{i}{2}\sum_{k=1}^N c_k[(1+\gamma)X_k X_{k+1} + (1-\gamma)Y_k Y_{k+1}]+b_k Z_k,
\end{equation}
where the last term represents the Zeeman interaction with a static magnetic field in the $z$-direction and $\gamma$ is the anisotropy parameter. 
Despite what it may imply, the order of the spin spaces is the opposite to our convention, e.g., that in Eq. (\ref{eq:str_sub_Lu}) or (\ref{eq:str_sub_e}); the $S$ subsystem
is spin 1, which is at the left end, while in Eq. (\ref{eq:str_sub_e}),  it is assumed to be attached to the right end.

\begin{figure}
\includegraphics[scale=0.4]{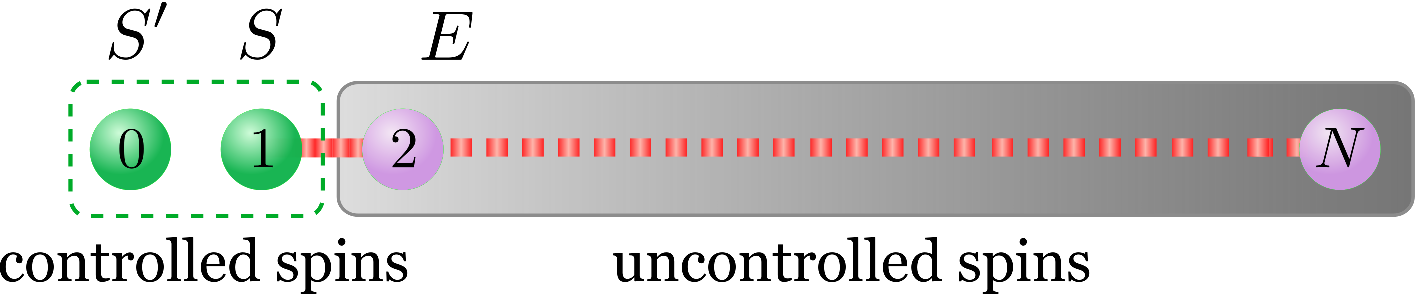}
%\resizebox{0.5\hsize}{!}{\includegraphics{1dspinchain_control}}
\caption{A one-dimensional spin chain considered for control in \cite{BMM10}. The two spins at the chain end are in the directly
accessible subsystem, and the rest of the chain, $E$, only evolves through the drift Hamiltonian 
$h_0^\mathrm{XX}$ of Eq. (\ref{xxham}). Spins 1 and 0 are  labeled $S$ and $S^\prime$ in line with the description of 
algebra expansion in the main text. Any su(4) operation can be applied to spins 0 and 1; this applicability of arbitrary su(4) operations
is achieved by assuming the same $h_0^\mathrm{XX}$-type interaction between them in \cite{BMM10}.
}\label{fig:spinchain}
\end{figure}

The Hamiltonian Equation (\ref{xxham}) describes the so-called XX-type interaction between neighboring spins, 
and the paper \cite{BMM10} presented a specific and efficient scheme 
to control the entire chain through $S$ containing two end spins, i.e., those labeled as $k=1$ and 2 (See Fig. \ref{fig:spinchain}).
The inclusion of two spins in $S$ is necessary, since having direct controllability of only one spin at the chain end does not lead to full controllability over the entire chain with the above drift Hamiltonian $ih_0^\mathrm{XX}$. 
More precisely, with $ih_0^\mathrm{XX}$ and su(dim$\mathcal{H}_S$)=su(2) for spin 1, the connected algebra is equal to 
\[
L_c=\mathcal{L}(i\bar{\mathfrak{S}}_{2N-1}\otimes \{Id_S\}\cup \mathfrak{S}_{2N-1}\otimes \mathrm{su(dim}\mathcal{H}_S)),
\]
with the Hilbert space structure
\begin{equation}\label{HE_spinchain}
\mathcal{H}_E={\mathcal H}_{Q^{(N-1)}}\otimes \cdots \otimes{\mathcal H}_{Q^{(1)}},
\end{equation}
where each $\mathcal{H}_{Q^{(n)}}$ is a two-dimensional space corresponding to each spin from $k=2$ to $N$. This can be verified by looking at the 
specific structure of the algebras $J_j$ in Eqs. (\ref{eq:JordanAlg_1})-(\ref{eq:JordanAlg_6}). There is only a single $j$ in this case;
thus it is omitted in Eq. (\ref{HE_spinchain}). Note that in our space structure notation,
the $S$ space interacts with the rightmost one, $\mathcal{H}_{Q_j^{(1)}}$, and because 
dim(dim$\mathcal{H}_A)=1$ in this case, it is omitted in Eq. (\ref{HE_spinchain}). If an extra spin, say, spin 0, 
is attached as $S^\prime$ to spin 1, the algebra on $\mathcal{H}_E$, which is determined by the dynamical Lie algebra $L$, 
changes. Namely, the connected algebra $L_c$ becomes that of Eq. (\ref{eq:str_sub_Lc}), i.e., the full su($\cdot$) 
algebra on $\mathcal{H}_E$.

A simple example in which the split of $\mathcal{H}_A$ can be observed is a chain of three spins-1/2, whose Hamiltonian is
\begin{equation}\label{ham_a_split}
ih_0^\mathrm{XX^\prime} = X_1 X_2 +Y_1 Y_2 + X_2 X_3,
\end{equation}
which may be regarded as a special case of the XX Hamiltonian. Then, with the spin 1 being the $S$ subsystem, this $ih_0^\mathrm{XX^\prime}$ is 
of the type $\mathfrak{S}_4$, and there is a $Z^*$ operator acting on $\mathcal{H}_A$, which is $X_3$ on spin 3 
in the basis used above (See Eq. (\ref{eq:JordanAlg_6})). 
The Hilbert space structure under $ih_0^\mathrm{XX^\prime}$ and su(2) (for spin 1) is the one in Eq. (\ref{eq:str_sub_e}), namely, 
\[
\mathcal{H}_{E}^*=\mathcal{H}_{A}\otimes \mathcal{H}_{Q^{(1)}},
\]
where the subscript $j$ is again omitted since there is only one element in the direct sum. 
Here, $\mathcal{H}_A$ and $\mathcal{H}_{Q^{(1)}}$ are the Hilbert spaces for spins 3 and 2, respectively. 
If we add another controllable spin-1/2 to $S$ so that any su(4) operation becomes available in this subsystem, the space $\mathcal{H}_{A}$ 
splits into two parts as $\mathcal{H}_{A^{(+1)}}\oplus \mathcal{H}_{A^{(-1)}}$. The overall $E$ space then becomes 
\[ 
\mathcal{H}_E=(\mathcal{H}_{A^{(+1)}} \otimes \mathcal{H}_{Q^{(1)}}) \oplus (\mathcal{H}_{A^{(-1)}} \otimes \mathcal{H}_{Q^{(1)}}),
\]
which is in the form of Eq. (\ref{eq:structure_e_ds3}) for the case dim$\mathcal{H}_S\ge 3$.

\subsection{Representations of triple $(J_j,\bar J_j, \hat J_j)$}
Before concluding this section, we show below  explicit representations of candidates for the triple $(J_j,\bar J_j, \hat J_j)$.
Although they look rather complex, they will be of use for understanding how the  controls on $S$ affect $E$ indirectly. 

First, the forms of the operator sets for $J$ are obtained in Lemma \ref{sec:rep_Jor}, as a consequence of the anti-commutation relations required for operators in the algebra, which stems from the limited access to the system (shown in Lemmas 1-4). % 24 Jan 2017 %
Their specific types are denoted as $\mathfrak{R}$, $\mathfrak{M}_{\gamma}^{(k)}$, and  $\mathfrak{S}_{n}$ and are given as follows: 
\begin{eqnarray}
\mathfrak{R} &:=&  \{Id_A\},
\label{eq:JordanAlg_1}
\\
\mathfrak{M}_{\gamma}^{(1)} &:=&   \mathcal{L} (\{Id_A\otimes X_{k,q}, Id_A\otimes\ket{k}\bra{k}\}_{k\neq q\in\{0,1,\cdots,\gamma-1\}}),
\label{eq:JordanAlg_2}
\\
\mathfrak{M}_{\gamma}^{(2)} &:=&   \mathcal{L} ( \{Id_A\otimes X_{k,q}, Id_A\otimes\ket{k}\bra{k}, Z^{*}\otimes Y_{k,q}\}_{k\neq q\in\{0,1,\cdots,\gamma-1\}}),
\label{eq:JordanAlg_3}
\\
\mathfrak{M}_{\gamma}^{(4)} &:=&   \mathcal{L} ( \{Id_A\otimes Id_{Q^{(1)}}\otimes X_{k,q}, Id_A\otimes Id_{Q^{(1)}}\otimes\ket{k}\bra{k},
 Id_A\otimes W\otimes Y_{k,q}\}_{W\in\{X,Y,Z\},\;k\neq q\in\{0,1,\cdots,\gamma-1\}}),
\label{eq:JordanAlg_4}
\\
\mathfrak{S}_{2n'-1}
&:=&   \mathcal{L} ( \{ \overbrace{Id\otimes\cdots\otimes Id}^{n'-m}\otimes W\otimes\overbrace{Y\otimes\cdots\otimes Y}^{m-1},
 Id\otimes\cdots\otimes Id\}_{W\in\{X,Z\},m\in\{1,2,\cdots,n'-1\}}),
\label{eq:JordanAlg_5}
\\
\mathfrak{S}_{2n'}
&:=&   \mathcal{L} ( \{ \overbrace{Id\otimes\cdots\otimes Id}^{n'-m}\otimes W\otimes\overbrace{Y\otimes\cdots\otimes Y}^{m-1}, Id\otimes\cdots\otimes Id,
 Z^{*}\otimes\overbrace{Y\otimes\cdots\otimes Y}^{n'-1}\}_{W\in\{X,Z\},m\in\{1,2,\cdots,n'-1\}}),
\label{eq:JordanAlg_6}
\end{eqnarray}
where the generalized Pauli operators, $X_{j,k}:=\ket j\bra k+\ket k \bra j$,
 $Y_{j,k}:=-i \ket j\bra k+i\ket k \bra j$,
 $Z_{j,k}:=\ket j\bra j-\ket k \bra k$,
 $X:=X_{0,1}$,
 $Y:=Y_{0,1}$, and 
 $Z:=Z_{0,1}$, are used, 
and $\{\ket j\}_{j\in \{0,1,\cdots\}}$ represents a basis for each space. The operator $Z^*$ is the one mentioned after Eq. (\ref{eq:str_sub_e}), namely, it is a Hermitian operator which satisfies $Z^{*2}=Id_A$ and characterizes subalgebras. 
We have omitted the index $j$, indicating the subspace of $\mathcal H_E$ or $\mathcal{H}_E^*$, for both spaces and operators, for simplicity. We shall do so in the following as well, as long as there is no risk of confusion. 

Second, as for those with a bar, $\bar{\mathfrak{R}}$, $\bar{\mathfrak{M}}_{\gamma}^{(k)}$, and $\bar{\mathfrak{S}}_{n}$, we define them as  Eqs. (\ref{eq:Rbar})-(\ref{eq:Sbar_2n}). They are determined so that they satisfy the relation, $\bar{J}=i\mathcal{L}([J,J])$, which is proved in Lemma \ref{lemm:relation}, for the corresponding $J$ given in Eqs. (\ref{eq:JordanAlg_1})-(\ref{eq:JordanAlg_6}). 
\begin{eqnarray}
\bar{\mathfrak{R}} &:=&  \{0\}, \label{eq:Rbar}\\
\bar{\mathfrak{M}}_{\gamma}^{(1)} &:=&  \mathcal{L}(\{Id_A\otimes Y_{k,q}\}_{k\neq q\in\{0,1,\cdots,\gamma-1\}}), \label{eq:Mbar_gamma1}\\
\bar{\mathfrak{M}}_{\gamma}^{(2)} &:=&  \mathcal{L}(\{Id_A\otimes Y_{k,q}, Z^{*}\otimes X_{k,q}, Z^{*}\otimes Z_{k,q}\}_{k\neq q\in\{0,1,\cdots,\gamma-1\}}), \label{eq:Mbar_gamma2}\\
\bar{\mathfrak{M}}_{\gamma}^{(4)} &:=&  \mathcal{L}(\{Id_A\otimes Id_{Q^{(1)}}\otimes Y_{k,q}, Id_A\otimes W\otimes X_{k,q}, 
\nonumber\\&&\quad
Id_A\otimes W\otimes \ket k\bra k\}_{W\in\{X,Y,Z\},k\neq q\in\{0,1,\cdots,\gamma-1\}}), \label{eq:Mbar_gamma4}\\
\bar{\mathfrak{S}}_{2n'-1} &:=&   \mathcal{L}(\{\overbrace{Id\otimes\cdots\otimes Id}^{n'-m_{2}}\otimes W\otimes\overbrace{Y\otimes\cdots\otimes
Y}^{m_{2}-m_{1}-1}
\nonumber\\&&\quad
\otimes W^{\prime}\otimes\overbrace{Id\otimes\cdots\otimes
Id}^{m_{1}-1}\}_{W,W^{\prime}\in\{X,Z\},m_{1}<m_{2}\in\{1,2,\cdots,n'-1\}} \nonumber \\ 
&&\cup \{\overbrace{Id\otimes\cdots\otimes Id}^{n'-m}\otimes Y\otimes\overbrace{Id\otimes\cdots\otimes Id}^{m-1}\}_{m\in\{1,2,\cdots,n'-1\}}), \label{eq:Sbar_2n-1}\\
\bar{\mathfrak{S}}_{2n'} &:=&  \mathcal{L}( \{ \overbrace{Id\otimes\cdots\otimes Id}^{n'-m_{2}}\otimes W\otimes\overbrace{Y\otimes\cdots\otimes Y}^{m_{2}-m_{1}-1}
\nonumber\\&&\;
\otimes W^{\prime}\otimes\overbrace{Id\otimes\cdots\otimes Id}^{m_{1}-1}
\}_{W,W^{\prime}\in\{X,Z\},m_{1}<m_{2}\in\{1,2,\cdots,n'-1\}} \nonumber\\
&&\cup \{\overbrace{Id\otimes\cdots\otimes Id}^{n'-m}\otimes
Y\otimes\overbrace{Id\otimes\cdots\otimes Id}^{m-1},
\nonumber\\&&
\;
Z^{*}\otimes\overbrace{Y\otimes\cdots\otimes
Y}^{n'-m-1}\otimes W\otimes\overbrace{Id\otimes\cdots\otimes
Id}^{m-1}\}_{W\in\{X,Z\},m\in\{1,2,\cdots,n'-1\}}).
\nonumber\\
\label{eq:Sbar_2n}
\end{eqnarray}

Finally, the operator sets with a hat, $\hat{\mathfrak{R}}$, $\hat{\mathfrak{M}}_{\gamma}^{(k)}$, and $\hat{\mathfrak{S}}_{n}$, are those that commute with the corresponding $J$, i.e., $[\hat{J}, J]=0$ (Lemma \ref{lemm:orthogonal_subspace}). Their forms are:
\begin{eqnarray}
\hat{\mathfrak{R}}
&:=&     \{h\}_{h\in  i\cdot {\rm u}({\rm dim}\mathcal H_A)},
\label{eq:str_hat_1}
\\
\hat{\mathfrak{M}}_{\gamma}^{(1)}
&:=&     \{h\otimes Id_Q\}_{h\in i\cdot {\rm u}({\rm dim}\mathcal H_A)},
\\
\hat{\mathfrak{M}}_{\gamma}^{(2)}
&:=&     \{h\otimes Id_Q\}_{h\in i\cdot{\rm u}({\rm dim}\mathcal H_A)^{*}},
\\
\hat{\mathfrak{M}}_{\gamma}^{(4)}
&:=&     \{h\otimes Id\otimes Id\}_{h\in i\cdot {\rm u}({\rm dim}\mathcal H_A)},
\\
\hat{\mathfrak{S}}_{2n'-1}
&:=&     \{h\otimes Id\otimes\cdots\otimes Id\}_{h\in i\cdot {\rm u}({\rm dim}\mathcal H_A)},
\\
\hat{\mathfrak{S}}_{2n'}
&:=&     \{h\otimes Id\otimes\cdots\otimes Id\}_{h\in i\cdot {\rm u}({\rm dim}\mathcal H_A)^{*}}.
\label{eq:str_hat_6}
\end{eqnarray}
where ${\rm u}({\rm dim}\mathcal H_A)^*$ is the set of all elements in ${\rm u}({\rm dim}\mathcal H_A)$ that commute with $Z^*$.

\section{Properties of the algebra $L$}

Before giving the proofs of the above theorems, let us study the properties of the algebra $L$. We shall use a number of lemmas to prove propositions
in what follows, and the proofs of those lemmas are given in the supplementary material.
Let $g$ be any operator in $L$, then $g$ can be written uniquely, regardless of dim($\mathcal{H}_S$), as 
\begin{equation}
g=g_{Id}\otimes Id_S + \sum_{W\in H_S} g_W\otimes W,
\end{equation}
where $g_{Id}$ and $g_W$ are skew-Hermitian operators acting on the space $\mathcal{H}_E$, and $H_S$ is the basis of $i\cdot$su(dim$\mathcal{H}_S$) consisting of operators $X_{k,q}$, $Y_{k,q}$ and $Z_{k,k+1}$ for $k,q\in \{0,1,\cdots ,{\rm dim}\mathcal H_S-1\}\;(k<q)$. Defining two operator sets by
\begin{eqnarray}
G^{(0)} &:=&  \{g_{Id}\}_{g\in L} \\
G^{(1)} &:=&  \mathcal{L}(\{g_W\}_{g\in L, W\in H_S}),
\end{eqnarray}
we can show 
\begin{equation}\label{eq:basic_str}
L=\mathcal{L}(G^{(0)}\otimes \{Id_S\}\cup iG^{(1)}\otimes  {\rm su}({\rm dim}\mathcal H_S))
\end{equation}
(See Lemma \ref{sec:str_L}). We shall call the pair of sets $G^{(0)}$ and  $G^{(1)}$ the identifiers of the dynamical Lie algebra $L$. 

These identifiers are shown to satisfy the following (anti-)commutation relations in Lemma \ref{sec:prop_Gs}:
\begin{eqnarray}
\label{eq:rel_Gs_1}
\left[G^{(b)},G^{(b)}\right] &\subseteq G^{(0)},\\
\label{eq:rel_Gs_2}
\left[G^{(0)},G^{(1)}\right] &\subseteq G^{(1)},\\
i\left\{ G^{(1)},G^{(1)}\right\} &\subseteq G^{(1)},
\label{eq:rel_Gs_3}
\end{eqnarray}
for $b\in \{0,1\}$. Further, only when ${\rm dim}\mathcal H_S\geq 3$, another commutation relation
\begin{equation}
\left[G^{(1)},G^{(1)}\right]\subseteq G^{(1)}
\label{eq:g1_comm}
\end{equation}
is required (Lemma \ref{sec:prop_Gs}). 

Since $iG^{(1)}$ is closed under the anti-commutator, $iG^{(1)}$ is a Jordan algebra, and is formed by Hermitian operators, including the identity operator, $Id_E$. Then, as shown in Lemma \ref{sec:rep_Jor}, $iG^{(1)}$ can be written as a direct sum of simple Jordan algebras $J_j$ regardless of $\mathrm{dim}\mathcal{H}_S$,
\begin{equation}
G^{(1)} = \bigoplus_ji J_j,
\label{eq:G_1}
\end{equation}
and $J_j$ has to have one of the structures in Eqs. (\ref{eq:JordanAlg_1})-(\ref{eq:JordanAlg_6}). 
Lemma \ref{sec:rep_Jor} also proves that the structure of $\mathcal{H}_{E_j}^*$ in Eq. (\ref{eq:str_sub_e}) is then obtained in accordance with that of $J_j$. The explicit representations of $J_j$ obtained thereby then allow us to have those of $\hat J$ in Eqs. (\ref{eq:str_hat_1})-(\ref{eq:str_hat_6}) and $\bar J$ in Eqs. (\ref{eq:Rbar})-(\ref{eq:Sbar_2n}), with the help of the commutation relations shown in Lemma \ref{lemm:relation}.
Also, from Eqs. (\ref{eq:rel_Gs_1}) and (\ref{eq:G_1}),  and $i[J_j,J_j]\subseteq \bar J_j$ from Lemma \ref{lemm:relation}, we obtain
\begin{equation}
\bigoplus_j i \bar J_j \subseteq \label{eq:G_0_l}G^{(0)}.
\end{equation}
These relations allow us to express ${\mathcal H}_E$ as a direct sum of the spaces ${\mathcal H}_{E_j}^*$, such that any element in $\bar J_j$ and $J_j$ is an operator on ${\mathcal H}_{E_j}^*$. 

Since the identity operator is in all simple Jordan algebras, the projection operator $P_{E_j}$ onto ${\mathcal H}_{E_j}^*$ is in $iG^{(1)}$.
It then follows from Eq. (\ref{eq:rel_Gs_2}) that an operator $\left[g,P_{E_{j'}}\right] \; (\forall g\in G^{(0)})$ must be block diagonalized into the subspaces ${\mathcal H}_{E_j}^*$.
Thus, any element in $G^{(0)}$ is also block diagonalized accordingly, and we let $G^{(0)}_j$ be the set of block elements of $g\in G^{(0)}$ 
whose action is restricted to the subspace ${\mathcal H}_{E_j}^*$.
From Eq. (\ref{eq:rel_Gs_2}), we see $\left[G^{(0)}_j,J_j\right]\subseteq J_j$, and this condition enforces $G^{(0)}_j$ to be a subset of $i{\mathcal L}(\hat J_j\cup \bar J_j)$, where $(\hat J_j, \bar J_j)$ is equal to one of the pairs $(\hat{\mathfrak R}, \bar{\mathfrak R})$,
$(\hat{\mathfrak M}_\gamma^{(k)},\bar {\mathfrak M}_\gamma^{(k)})$, and
$(\hat{\mathfrak S}_n, \bar{\mathfrak S}_n)$, depending on whether $J_j=\mathfrak R$ or ${\mathfrak M}_\gamma^{(k)}$ or ${\mathfrak S}_n$, respectively.
This is because, as shown in Lemma \ref{lemm:pseudo_Ideal}, $i{\mathcal L}(\hat J_j\cup \bar J_j)$ turns out to be the maximum set $J'$ of Hermitian operators that satisfy $i[J',J_j]\subseteq J_j$, namely, ${\mathcal L}(\hat J_j\cup \bar J_j)=\{h|h\in  i\cdot {\rm u}({\rm dim}\mathcal H_{E_j})\land \forall h'\in J_j,\;i[h,h']\in J_j\}$.

Combining these results, we arrive at
\begin{equation}
G^{(0)} \subseteq \bigoplus _j G^{(0)}_j \subseteq \bigoplus_j i{\mathcal L}(\hat J_j\cup \bar J_j),
\label{eq:G_0_u}
\end{equation}
and Eqs. (\ref{eq:g1_comm}), (\ref{eq:G_1}), (\ref{eq:G_0_l}) and (\ref{eq:G_0_u}) imply a relation 
\begin{eqnarray}
&&\bigoplus_j{\mathcal L}(i\bar J_j\otimes \{Id_S\}\cup J_j\otimes {\rm su}({\rm dim}\mathcal H_S))
\nonumber\\
&&\subseteq  L\subseteq \bigoplus_j{\mathcal L}(i\hat J_j\otimes \{Id_S\} \cup i\bar J_j\otimes \{Id_S\} \cup J_j\otimes {\rm su}({\rm dim}\mathcal H_S)).
\label{eq:uppper_bound_DLA}
\end{eqnarray}
All these relations (except for Eq. (\ref{eq:g1_comm})) hold, regardless of ${\rm dim}\mathcal H_S$.

\section{Proofs of theorems}\label{sec:TheoremProofs}
We now show the proofs of Theorems \ref{th:con_discon_Lie_alg}--\ref{th:str_rel} by using the relations we have given in the last section, as well as the specific representations of $(J_j,\bar J_j, \hat J_j)$. Since much of the mathematical argument in the proofs, which is mainly about the structure of the formally real Jordan algebra, is quite involved, we shall only delineate the proofs here to help readers grasp the picture, relying on the lemmas shown in the following section. Those lemmas are devoted to explaining the mathematics behind the proofs of the theorems. 

We shall start with the proof of Theorem \ref{th2:str_ds3_dem}, and then go on to show in the order of Theorems \ref{th3:str_ds2_dem}, \ref{th:con_discon_Lie_alg}, \ref{th:str_32_suf}, and \ref{th:str_rel}.

\noindent\textbf{Proof of Theorem \ref{th2:str_ds3_dem}}. %. Proof of Theorem 2. %
When ${\rm dim}\mathcal H_S\geq 3$, Eqs. (\ref{eq:g1_comm}) and (\ref{eq:G_1}) lead to the condition $i[\bigoplus_j  J_j,\bigoplus_j  J_j]\subseteq \bigoplus_j  J_j$, which is equivalent to requiring $i[J_j,J_j]\subseteq J_j$ for all $j$. 
This enforces us to choose ${\mathfrak R}$, ${\mathfrak M}_\gamma^{(2)}$ and $\mathfrak{S}_4$ with $Z^*_j=Id_{A_j}$ or $-Id_{A_j}$ as possible structures of $J_j$ among those in Eqs. (\ref{eq:JordanAlg_1})-(\ref{eq:JordanAlg_6}). It is not hard to verify that others in these equations, such as ${\mathfrak M}_\gamma^{(4)}$, do not fulfill the above condition. Note that the structures of $J_j$ and $\mathcal{H}_{E_j}^*$, Eqs. (\ref{eq:JordanAlg_1})-(\ref{eq:JordanAlg_6}) and (\ref{eq:str_sub_e}), are derived in Lemma \ref{sec:rep_Jor}. In this case of dim($\mathcal{H}_S)\ge 3$, $\mathcal{H}_{E_j}^*$ are relabeled as $\mathcal{H}_{E_j}$ in Theorem \ref{th2:str_ds3_dem}.

When $J_j$ is equal to ${\mathfrak R}=\{Id_A\}$, $\mathcal{H}_{E_j}$ is as simple as a single subspace ${\mathcal H}_{A_j}$ (See Eq. (\ref{eq:str_sub_e})). Thus, by regarding $\mathcal{H}_{B_j}=\mathcal{H}_{A_j}$ and $\mathrm{dim}\mathcal{H}_{R_j}=1$, $\mathcal{H}_{E_j}$ in Eq. (\ref{eq:structure_e_ds3}) has a structure ${\mathcal H}_{B_j}\otimes {\mathcal H}_{R_j}$.
If $J_j$ is equal to ${\mathfrak M}_\gamma^{(2)}$ or $\mathfrak S_4$, $\mathcal{H}_{E_j}$ takes the form of $\mathcal{H}_{A_j}\otimes \mathcal{H}_{Q_j}$ or $\mathcal{H}_{A_j}\otimes \mathcal{H}_{Q_j^{(1)}}$, respectively, according to Eq. (\ref{eq:str_sub_e}). It is then obvious that $\mathcal{H}_{E_j}$ has a structure of Eq. (\ref{eq:structure_e_ds3}), by assigning the first and the second subspaces in the tensor product to be $\mathcal{H}_{B_j}$ and $\mathcal{H}_{R_j}$. Thus, $\mathcal{H}_E$ can be written $\bigoplus_j \mathcal H_{B_j}\otimes\mathcal  H_{R_j}$, irrespective of the form of $J_j$. 

Having identified the subspaces of $\mathcal H_E$, the algebra on $\mathcal{H}_{E}\otimes \mathcal{H}_S$, i.e., $\mathcal L(i\bar J_j\otimes \{Id_S\}\cup J_j\otimes {\rm su}({\rm dim}\mathcal H_S ))$, turns out to be $\{Id_{B_j}\}\otimes {\rm su}({\rm \dim}\mathcal H_{R_j}\cdot{\rm \dim}\mathcal H_{S})$. As a result, the relation (\ref{eq:uppper_bound_DLA}) is reduced to 
\begin{equation}
\bigoplus_j
\{Id_{B_j}\}\otimes {\rm su}({\rm dim}\mathcal H_{R_j}\cdot {\rm dim}\mathcal H_S) 
\subseteq L \subseteq \bigoplus_j{\mathcal L}(
{\rm u}({\rm dim}\mathcal H_{B_j})\otimes \{Id_{R_j}\otimes Id_S\}\cup
\{Id_{B_j}\}\otimes {\rm su}({\rm dim}\mathcal H_{R_j}\cdot {\rm dim}\mathcal H_S)),
\label{eq:uppper_bound_DLA_3}
\end{equation}
which, according to Lemma \ref{sec:abstract_structure_lemm10}, implies that the disconnected and the connected algebras are given by
\begin{eqnarray}
L_d&=&\bigoplus_j 
{\rm u}({\rm dim}\mathcal H_{B_j})\otimes \{Id_{R_j}\otimes Id_S\}, \;\mathrm{and}
\label{eq:str_Lu_3}
\\
L_c&=&\bigoplus_j
\{Id_{B_j}\}\otimes {\rm su}({\rm dim}\mathcal H_{R_j}\cdot {\rm dim}\mathcal H_S).
\label{eq:str_Lc_3}
\end{eqnarray}
Hence, Theorem \ref{th2:str_ds3_dem} is proved. Note that the last statement in Theorem \ref{th2:str_ds3_dem} can be verified rather straightforwardly, since the Lie algebra $L'$ generated by $L\otimes \{Id_{S'}\}$ and $\{Id_E\}\otimes{\rm su}({\rm dim}\mathcal H_S\cdot {\rm dim}\mathcal H_S^\prime) $  satisfies the relation 
\begin{eqnarray}
&& \bigoplus_j
\{Id_{B_j}\}\otimes {\rm su}({\rm dim}\mathcal H_{R_j}\cdot {\rm dim}\mathcal H_S\cdot {\rm dim}\mathcal H_{S'})
\nonumber \\
&&\subseteq L' \subseteq \bigoplus_j{\mathcal L}({\rm u}({\rm dim}\mathcal H_{B_j})\otimes \{Id_{R_j}\otimes Id_S\otimes Id_{S'}\}
\cup
\{Id_{B_j}\}\otimes {\rm su}({\rm dim}\mathcal H_{R_j}\cdot {\rm dim}\mathcal H_S\cdot {\rm dim}\mathcal H_{S'})).
\end{eqnarray}

\noindent\textbf{Proof of Theorem \ref{th3:str_ds2_dem}}.  %  Proof of Theorem 3. %
The statement of Theorem \ref{th3:str_ds2_dem} is nothing but the consequence of Lemma \ref{sec:abstract_structure}, which states that when  ${\rm dim}\mathcal H_S=2$ Eq. (\ref{eq:uppper_bound_DLA}) implies
\begin{eqnarray}
L_d&=&\bigoplus_j i\hat J_j\otimes \{Id_S\}, \;\mathrm{and}
\label{eq:str_Lu_2}
\\
L_c&=&\bigoplus_j{\mathcal L}(i\bar J_j\otimes \{Id_S\}\cup J_j\otimes {\rm su}({\rm dim}\mathcal H_S)).
\label{eq:str_Lc_2}
\end{eqnarray}

\noindent\textbf{Proof of Theorem \ref{th:con_discon_Lie_alg}}.   %  Proof of Theorem 1  %
When ${\rm dim}\mathcal H_S\geq 3$, we see from Eqs. (\ref{eq:str_Lu_3}) and (\ref{eq:str_Lc_3}) that $L_d$ and $L_c$ do not have an overlap, thus $L_d\cap L_c=\{0\}$, which is Eq. (\ref{eq:str_abs_L_2}). Also, the inclusion relation of the right side of Eq. (\ref{eq:uppper_bound_DLA_3}) together with Eqs. (\ref{eq:str_Lu_3}) and (\ref{eq:str_Lc_3}) imply $L\subseteq \mathcal{L}(L_d\cup L_c)$ in Eq. (\ref{eq:str_abs_L}).

When  ${\rm dim}\mathcal H_S=2$, Eqs. (\ref{eq:str_Lu_2}) and (\ref{eq:str_Lc_2}) and the explicit expressions of $(J_j,\bar J_j, \hat J_j)$ guarantee the relation in Eq. (\ref{eq:str_abs_L_2}). Also, these expressions and the inclusion relation on the right of Eq. (\ref{eq:uppper_bound_DLA}) leads to Eq.  (\ref{eq:str_abs_L}).

\noindent\textbf{Proof of Theorem \ref{th:str_32_suf}}.  %  Proof of Theorem 4  %
That $\tilde{L}:=\mathcal{L}(\tilde{L}_d^\prime\cup \tilde{L}_c)$ is closed under the commutator can be seen as
\begin{eqnarray}
[\tilde L,\tilde L]&=&[\mathcal L( \tilde L_d^\prime\cup \tilde L_c),\mathcal L(\tilde L_d^\prime\cup \tilde L_c)]
\nonumber\\
&\subseteq&
\mathcal L([ \tilde L_d^\prime,\tilde L_d^\prime]\cup [ \tilde L_d^\prime,\tilde L_c]\cup [ \tilde L_c, \tilde L_c])
\nonumber\\
&=&
\mathcal L([ \tilde L_d^\prime,\tilde L_d^\prime]\cup [ \tilde L_c, \tilde L_c])
\nonumber\\
&\subseteq &
\mathcal L( \tilde L_d^\prime\cup  \tilde L_c) = \tilde L.
\label{eq:tildeL_closed}
\end{eqnarray}
The second inclusion relation stems from the bilinearity of the commutator.
The equality in the third line is due to the commutation relation $[ \tilde L_d^\prime,\tilde L_c]\subseteq [ \tilde L_d,\tilde L_c]=\{0\}$, which is verified with the definitions of $\tilde L_d$ and $\tilde L_c$, i.e.,  Eqs. (\ref{eq:def_tilde_L_d_3}) and (\ref{eq:def_tilde_L_c_3}). Since $\tilde{L}_d^\prime$ is assumed to be closed under the commutator and so is $\tilde{L}_c$ by Eq. (\ref{eq:def_tilde_L_c_3}), we verify the inclusion relation in the fourth line.

Lemma \ref{sec:abstract_structure_lemm10} tells that if Eq. (\ref{eq:tildeL_closed}) and
\begin{equation}
\tilde L_c \subseteq  \tilde L\subseteq \mathcal L(\tilde L_d\cup \tilde L_c ),
\label{eq:LandLdc_inclusion}
\end{equation}
which is trivially obtained from Eqs. (\ref{eq:def_tilde_L_3}) and  (\ref{eq:def_tilde_L'_3}), hold, then $\tilde L_d$ and $\tilde L_c$ are the disconnected and the connected algebras. Therefore, the first half of Theorem \ref{th:str_32_suf} is justified.

The second half of Theorem \ref{th:str_32_suf}, which is for the case of $\mathrm{dim}\mathcal{H}_S=2$, can be proved in a similar manner. 
Although Eq. (\ref{eq:tildeL_closed}) can be shown to be true,  the relations $[\tilde L_d,\tilde L_c]=\{0\}$ and $[\tilde L_c,\tilde L_c]\subseteq \tilde L_c$ need a bit different reasonings. The former is justified by Eq. (\ref{eq:relation_3}) shown in Lemma \ref{lemm:relation}. We can also check $[\tilde L_c,\tilde L_c]\subseteq \tilde L_c$ by using
$i[ \bar J_j,\bar J_j]\subseteq \bar J_j$,
$i[ J_j, J_j]\subseteq \bar J_j$,
$i[ \bar J_j, J_j]\subseteq  J_j$, and
$\{ J_j, J_j\}\subseteq J_j$, which are from Eqs. (\ref{eq:relation_1}) and (\ref{eq:relation_2}) in Lemma \ref{lemm:relation}. 
(Lemma \ref{sec:prop_Gs_suf}).
With Eq. (\ref{eq:LandLdc_inclusion}), which holds when $\mathrm{dim}\mathcal{H}_S=2$ as well, and Lemma \ref{sec:abstract_structure}, we can show that $\tilde{L}_d$ and $\tilde{L}_c$ in Eqs. (\ref{eq:def_tilde_L_d_2}) and (\ref{eq:def_tilde_L_c_2}) are the disconnected and the connected algebras.

\noindent\textbf{Proof of Theorem \ref{th:str_rel}}.   %%%  Proof of Theorem 5 %%%
Given an \textit{expanded} dynamical Lie algebra $L'$, there must be its identifier $(G^{(0)\prime},G^{(1)\prime})$, such that $L'=\mathcal L(G^{(0)\prime}\otimes \{Id_S\otimes Id_{S^\prime}\}\cup iG^{(1)\prime}\otimes {\rm su}({\rm dim}H_S\cdot {\rm dim}H_{S'}))$ (Lemma \ref{sec:str_L}). These sets satisfy 
\begin{eqnarray}
G^{(1)} = \bigoplus_j iJ_j \subseteq \bigoplus_j iJ_j' \subseteq G^{(1)\prime}, &\;\;\mathrm{and}
\label{eq:G^(1)^prime}
\\
\bigoplus_j  [ J_j',J_j'] \subseteq  G^{(0)\prime},\quad\quad
\label{eq:G^(0)^prime}
\end{eqnarray}
where the first equality in Eq. (\ref{eq:G^(1)^prime}) is from Eq. (\ref{eq:G_1}). Each algebra $J_j'$ is equal to one of the following:
\begin{eqnarray}
\label{eq:def_J'_R}
{\mathfrak{R}'}
&:=&   \{Id_A\},
\\
{\mathfrak{M}}_{\gamma}^{\prime(1)}
&:=& i   \{Id_A\}\otimes    {\rm u}({\rm dim}\mathcal H_Q),
\label{eq:def_J'_M1}
\\
{\mathfrak{M}}_{\gamma}^{\prime(2)}
&:=&
i\mathcal L((\{Id_{A^{(+1)}}\}\oplus
\{Id_{A^{(-1)}}\})\otimes {\rm u}({\rm dim}\mathcal H_Q)),
\label{eq:def_J'_M2}
\\
{\mathfrak{M}}_{\gamma}^{\prime(4)}
&:=& i   \{Id_A\}\otimes  {\rm u}({\rm dim}\mathcal H_{Q^{(1)}}\cdot {\rm dim}\mathcal H_{Q}),
\label{eq:def_J'_M4}
\\
{\mathfrak{S}}_{2n'-1}'
&:=& i   \{Id_A\}\otimes
{\rm u}({\rm dim}\mathcal H_{Q^{(n^\prime-1)}}\cdot {\rm dim}\mathcal H_{Q^{(n^\prime-2)}}\cdots {\rm dim}\mathcal H_{Q^{(1)}}),
\label{eq:def_J'_Sodd}
\\
{\mathfrak{S}}_{2n'}'
&:=&
i\mathcal L((\{Id_{A^{(+1)}}\}\oplus
\{Id_{A^{(-1)}}\})
\otimes {\rm u}({\rm dim}\mathcal H_{Q^{(n^\prime-1)}}\cdot {\rm dim}\mathcal H_{Q^{(n^\prime-2)}}\cdots {\rm dim}\mathcal H_{Q^{(1)}})).
\label{eq:def_J'_Seven}
\end{eqnarray}
There is a one-to-one correspondence between these primed algebras and the non-primed ones in Eqs. (\ref{eq:JordanAlg_1})-(\ref{eq:JordanAlg_6}). For example, if one of the $J_j$ was $\mathfrak{M}_\gamma^{(1)}$ when $\mathrm{dim}\mathcal{H}_S=2$, then appending an ancillary space $\mathcal{H}_{S^\prime}$ makes it change to $\mathfrak{M}^{\prime (1)}_\gamma$. 

The right-most inclusion in Eq. (\ref{eq:G^(1)^prime}) can be justified by the following three facts:
First, $L\otimes \{Id_{S'}\} \subseteq L'$, since $L'$ is a Lie algebra generated by $L\otimes \{Id_{S'} \}$ and $Id_E\otimes {\rm su}({\rm dim}\mathcal H_S\cdot {\rm dim}\mathcal H_{S'})$. Second, as Lemma \ref{sec:prop_Gs} tells, $G^{(1)\prime}$ must be closed under two binary operations $[\cdot ,\cdot ]$ and $i\{\cdot ,\cdot \}$ since ${\rm dim}\mathcal H_S\cdot {\rm dim}\mathcal H_{S'}$ is more than $2$. Third, $iJ'_j$ in Eqs. (\ref{eq:def_J'_R})-(\ref{eq:def_J'_Seven}) are the smallest skew-Hermitian operator sets which contain the corresponding $iJ_j$ and are closed under the binary operations (Lemma \ref{sec:str_mod}).
Equation (\ref{eq:G^(0)^prime}) is simply due to Eq. (\ref{eq:rel_Gs_1}) for $L^\prime$, $[G^{(1)\prime},G^{(1)\prime}]\subseteq G^{(0)\prime}$.

Since $L^\prime=\mathcal{L}(G^{(0)\prime}\otimes \{Id_S\otimes Id_{S^\prime}\} \cup G^{(1)\prime}\otimes \mathrm{su}(\mathrm{dim}\mathcal{H}_S\cdot \mathrm{dim}\mathcal{H}_{S^\prime}))$, together with Eqs. (\ref{eq:G^(1)^prime}) and (\ref{eq:G^(0)^prime}), we obtain
\begin{equation}
\bigoplus_j{\mathcal L}([J_j',J_j']\otimes \{Id_S\otimes  Id_{S'}\} \cup
J_j'\otimes {\rm su}({\rm dim}\mathcal H_S\cdot {\rm dim}\mathcal H_{S'})) \subseteq L^\prime.
\label{eq:str_det_Lup}
\end{equation}
On the other hand, we can verify the relation
\begin{equation}
L^\prime \subseteq 
\bigoplus_j{\mathcal L}(
i\hat J_j\otimes \{Id_S\otimes   Id_{S'}\}
\cup
   [J_j',J_j']\otimes \{Id_S\otimes  Id_{S'}\}
 \cup
  J_j'\otimes {\rm su}({\rm dim}\mathcal H_S\cdot {\rm dim}\mathcal H_{S'})),
\label{eq:str_det_Lp}
\end{equation}
using the explicit expressions of $J_j'$ in Eqs. (\ref{eq:def_J'_R})-(\ref{eq:def_J'_Seven}), together with Eqs. (\ref{eq:str_Lu_2}) and (\ref{eq:str_hat_1})$\sim$(\ref{eq:str_hat_6}); that is, we can readily see that the set on the RHS of Eq. (\ref{eq:str_det_Lp}) is closed under the commutator and contains all generators in $L\otimes \{Id_{S'}\}$ and $\{Id_E\}\otimes {\rm su}({\rm dim}\mathcal H_S\cdot {\rm dim}\mathcal H_{S'})$.
By redefining the structure of $\mathcal H_E$ as $\mathcal H_E=\bigoplus_{j'}\mathcal H_{B_{j'}}\otimes \mathcal H_{R_{j'}}$ as in Theorem \ref{th:str_rel},  Eqs. (\ref{eq:str_det_Lup}) and (\ref{eq:str_det_Lp}) can be rewritten as
\begin{eqnarray}
&&\bigoplus_{j'}
\{Id_{B_{j'}}\}\otimes {\rm su}({\rm dim}\mathcal H_{R_{j'}}\cdot {\rm dim}\mathcal H_S\cdot {\rm dim}\mathcal H_{S'})
\nonumber\\
&& \subseteq L^\prime \subseteq 
\bigoplus_{j^\prime}
{\mathcal L}(L_d\otimes \{Id_{S'}\}\cup \{Id_{B_{j'}}\}\otimes {\rm su}({\rm dim}\mathcal H_{R_{j'}}\cdot {\rm dim}\mathcal H_S\cdot {\rm dim}\mathcal H_{S'})),
\end{eqnarray}
where $L_d=\bigoplus_{j'} {\rm u}({\rm dim}\mathcal H_{B_{j'}})\otimes \{Id_{R_{j'}}\otimes Id_S\}$.
This relation then implies, according to Lemma \ref{sec:abstract_structure_lemm10}, that the disconnected algebra $L'_d$ and the connected algebra $L_c'$ for $L'$
are
\begin{eqnarray}
L_d' &=&
L_d\otimes \{ Id_{S'}\},
\\
L_c' &=&
\bigoplus_{j'} \{Id_{B_{j'}}\}\otimes {\rm su}({\rm dim}\mathcal H_{R_{j'}}\cdot {\rm dim}\mathcal H_S\cdot {\rm dim}\mathcal H_{S'}).
\end{eqnarray}
Hence, Theorem \ref{th:str_rel} is justified.

\section{Conclusion}
We have revealed the structures of the Hilbert space and the Lie algebra from only a few very simple assumptions, in the context of indirect quantum control.
The restrictedness of our artificial operations imposes constraints on what can be controlled in the large entire system. An interesting finding includes that there is a clear distinction depending on the dimension of the directly accessible subsystem $S$  (Theorems 2 and 3). While $E$, which only interacts with $S$ through the drift Hamiltonian $h_0$, is virtually a direct sum of fully controllable subspaces, not all operations are necessarily possible when dim$\mathcal{H}_S=2$.

There have been studies \cite{AAR15,AR12} in a similar direction, which have analyzed the 'controllability' issue depending on dim$\mathcal{H}_S$.  Though there are  differences in meaning of  some terms, e.g. controllability, our analysis can be used to prove their results as well; the details are given in the supplementary material. 

The present analysis can be applied to the study of physical situations where we wish to control a large quantum system with minimal access.
Such scenarios have been discussed under the motivation of suppressing unnecessary interactions between the quantum system and its environment. 
As briefly mentioned after Theorem \ref{th:str_rel},  control problems have been addressed in \cite{BMM10} for a one-dimensional XX spin chain
through direct control of two end spins. Also, closely related is the problem of quantum system identification under  limited access,
which has been discussed intensively in the last decade \cite{BMN09,FPK09,BM09,BMN11,KY14}. 
From the system identification perspective, in which the main task is to identify
the drift Hamiltonian $h_0$, what we have clarified in this paper can be understood as the very fundamental structure of what 
we may be able to identify through $S$, regardless of the physical system. 

The structures of the space and the algebra we have clarified can be used to further investigate the possibility of indirect control of large systems.
In this context, for example, a significant consequence of indirect control is the existence of equivalence classes, 
within which any distinct physical configurations of $E$ and its Hamiltonians cannot be distinguished by any operations on $S$. 
While it has already been studied in the literature, such as \cite{BY12} and \cite{OMTK15}, our results would shed more light on this issue 
in a consistent way. 

There should still be a lot of ground to explore  in front of us. One practically important issue we have not discussed here is the 
time optimality or time dependence of the operation on the system size. This problem has been studied quite actively 
(See, e.g., some recent studies \cite{TBG12,TGLS16,GLST17,LARR18,HC18} and references therein).
In addition, we still have very little insight into how to obtain the specific profile of the control pulses \cite{DD09}. It appears, however, that it is likely that
we have to rely on numerical optimization methods for it.

Despite all this, the framework of indirect control under limited access is promising for realistic large-scale quantum control.
Our attempt would be of use to acquire deeper insights into the physics of quantum control systematically and will hopefully be one of the guiding principles in building the future quantum control methodology.

\section*{Acknowledgments}
We acknowledge financial support by the JSPS Kakenhi (C) No. 26400400 and No. 17K05591. MO also thanks for support by the JSPS Kakenhi (C) No. 16K00014.

\appendix
\section{Supplementary material: Proofs of lemmas}
Here, we prove those lemmas used for proving theorems in the main text. The proofs of some lemmas below are rather involved, so this whole 
section may be browsed quickly or even skipped if readers' interest is in grasping the picture of what our main theorems claim. 
However, it should be interesting to see how the Jordan algebra, which might not be particularly common among quantum physicists 
despite its origin, plays a central role in the study of indirect quantum control. 

The first lemma shows the fundamental structure of the Lie algebra of our principal interest.
%%%  Lemma 1  %%%
\begin{lemma}
\label{sec:str_L}
{\rm (Proved in \cite{AAR15})}
Let $L$ be the Lie algebra of skew-Hermitian operators acting on $\mathcal{H}_E\otimes \mathcal{H}_S$, which contains all elements in $\{Id_E\}\otimes {\rm su}({\rm dim}\mathcal H_S)$. Then $L$ can be written in the form
\begin{equation}
{\mathcal L}(G^{(0)}\otimes \{Id_S\}\cup iG^{(1)}\otimes  {\rm su}({\rm dim}\mathcal H_S))
\end{equation}
with appropriate linear spaces  $G^{(0)}$ and $G^{(1)}$ of skew-Hermitian operators acting on $\mathcal H_E$.
\end{lemma}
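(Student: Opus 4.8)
The plan is to exploit the fact that the only property of $L$ we are allowed to use is that it is a Lie algebra of skew-hermitian operators on $\mathcal{H}_E\otimes\mathcal{H}_S$ containing $\{Id_E\}\otimes\mathrm{su}(\mathrm{dim}\,\mathcal{H}_S)$. First I would fix the Pauli-type basis $H_S=\{X_{k,q},Y_{k,q},Z_{k,k+1}\}$ of $i\cdot\mathrm{su}(\mathrm{dim}\,\mathcal{H}_S)$ together with $Id_S$, giving a basis of all of $i\cdot\mathrm{u}(\mathrm{dim}\,\mathcal{H}_S)$, and write every $g\in L$ uniquely as $g=g_{Id}\otimes Id_S+\sum_{W\in H_S}g_W\otimes W$ with skew-hermitian $g_{Id},g_W$ on $\mathcal{H}_E$. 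Then define $G^{(0)}:=\{g_{Id}\}_{g\in L}$ and $G^{(1)}:=\mathcal{L}(\{g_W\}_{g\in L,\,W\in H_S})$; these are by construction real linear spaces of skew-hermitian operators on $\mathcal{H}_E$. The goal is then the set equality $L=\mathcal{L}(G^{(0)}\otimes\{Id_S\}\cup iG^{(1)}\otimes\mathrm{su}(\mathrm{dim}\,\mathcal{H}_S))$.

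The inclusion $\subseteq$ is almost immediate: for $g\in L$, the component $g_{Id}\otimes Id_S$ lies in $G^{(0)}\otimes\{Id_S\}$ by definition, and each $g_W\otimes W$ lies in $iG^{(1)}\otimes\mathrm{su}(\mathrm{dim}\,\mathcal{H}_S)$ since $g_W\in G^{(1)}$ and $W\in i\cdot\mathrm{su}(\mathrm{dim}\,\mathcal{H}_S)$, so $g_W\otimes W=(ig_W)\otimes(-iW)$ with $ig_W\in iG^{(1)}$ hermitian and $-iW\in\mathrm{su}(\mathrm{dim}\,\mathcal{H}_S)$; hence $g$ is a real linear combination of elements of the right-hand set. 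For the reverse inclusion $\supseteq$ I would argue in two parts. For the $G^{(0)}\otimes\{Id_S\}$ part: given $a\in G^{(0)}$, there is some $g\in L$ with $a=g_{Id}$; I need to remove the $\sum_W g_W\otimes W$ tail. Here I use that $\{Id_E\}\otimes\mathrm{su}(\mathrm{dim}\,\mathcal{H}_S)\subseteq L$ and that $L$ is closed under commutators: commuting $g$ repeatedly with suitable elements $Id_E\otimes W'$ isolates and then cancels the off-identity components, because the adjoint action of $\mathrm{su}(\mathrm{dim}\,\mathcal{H}_S)$ on the $Id_S$-component is trivial while on the span of $\{W\}$ it is the (faithful, trivial-free) adjoint representation; averaging/projecting via this action leaves precisely $g_{Id}\otimes Id_S$ in $L$. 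For the $iG^{(1)}\otimes\mathrm{su}(\mathrm{dim}\,\mathcal{H}_S)$ part: given $g\in L$ and $W_0\in H_S$, I want $g_{W_0}\otimes W_0\in L$, and more generally $g_{W_0}\otimes W\in L$ for every $W\in i\cdot\mathrm{su}(\mathrm{dim}\,\mathcal{H}_S)$. Again commuting $g$ with elements $Id_E\otimes W'$ produces operators of the form $\sum_W g_W\otimes[W',W]$, and since $\mathrm{su}(\mathrm{dim}\,\mathcal{H}_S)$ is semisimple its adjoint representation is completely reducible with no trivial subrepresentation on $\mathrm{span}\{W\}_{W\in H_S}$, so by taking appropriate combinations of iterated brackets one can project onto any single coordinate $W_0$ and then rotate $W_0$ to an arbitrary element of $\mathrm{su}(\mathrm{dim}\,\mathcal{H}_S)$; this yields $g_{W_0}\otimes\mathrm{su}(\mathrm{dim}\,\mathcal{H}_S)\subseteq L$, and taking real linear combinations over $g$ and $W_0$ gives $iG^{(1)}\otimes\mathrm{su}(\mathrm{dim}\,\mathcal{H}_S)\subseteq L$.

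The main obstacle I anticipate is making the ``project onto a single Pauli component by iterated commutators'' step fully rigorous and uniform in $\mathrm{dim}\,\mathcal{H}_S$: one must show that the adjoint action of $\{Id_E\}\otimes\mathrm{su}(\mathrm{dim}\,\mathcal{H}_S)$ on the $\mathcal{H}_E$-operator-valued coefficient vector $(g_{Id},(g_W)_W)$ decomposes $\mathbb{R}\cdot Id_S\oplus\mathrm{span}\{W\}$ into the trivial piece plus the adjoint representation of $\mathrm{su}(\mathrm{dim}\,\mathcal{H}_S)$, and then invoke that for a semisimple Lie algebra acting on a finite-dimensional module one can, via the universal enveloping algebra (equivalently, polynomials in the $\mathrm{ad}$ operators, realized concretely by nested brackets), implement the projection onto any isotypic/irreducible summand and any weight vector inside it. Since $\mathrm{ad}$ of a semisimple algebra is diagonalizable with a known spectrum, such projectors are explicit polynomials in the bracket operators, so all the manipulations stay inside $L$. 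Once this representation-theoretic bookkeeping is in place, both inclusions close and the lemma follows; this is exactly the content borrowed from \cite{AAR15}, and I would cite it for the detailed computation while presenting the structure of the argument as above.
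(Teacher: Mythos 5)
Your setup coincides with the paper's: the same basis $H_S$, the same unique decomposition $g=g_{Id}\otimes Id_S+\sum_{W\in H_S} g_W\otimes W$, the same definitions of $G^{(0)}$ and $G^{(1)}$, and the inclusion $L\subseteq\mathcal L(G^{(0)}\otimes\{Id_S\}\cup iG^{(1)}\otimes\mathrm{su}(\dim\mathcal H_S))$ handled identically, with $g_{Id}\otimes Id_S$ recovered by subtraction at the end in both arguments. Where you genuinely diverge is the reverse inclusion: the paper is fully constructive, exhibiting explicit nested commutators with $Id_E\otimes Z_{k,p}$, $Id_E\otimes X_{k,q}$, $Id_E\otimes Y_{k,q}$ that first isolate $g_{X_{k,q}}\otimes X_{k,q}+g_{Y_{k,q}}\otimes Y_{k,q}$, then yield $g_{X_{k,q}}\otimes Z_{k,q}$, $g_{Y_{k,q}}\otimes Z_{k,q}$, and finally the Cartan components $g_{Z_{k,k+1}}\otimes X_{k,k+1}$ via inverting a tridiagonal matrix with $\det M=\dim\mathcal H_S+1$, after which simplicity of $\mathrm{su}(\dim\mathcal H_S)$ rotates each extracted factor to the whole of $\mathrm{su}$. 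You instead argue abstractly: nested brackets with $Id_E\otimes\mathrm{su}(\dim\mathcal H_S)$ act only on the $\mathcal H_S$ factor, annihilate the $Id_S$ component, and act on the traceless hermitian part through the adjoint module, so it suffices that the (non-unital) associative algebra generated by the $\mathrm{ad}$ operators is all of $\mathrm{End}_{\mathbb R}(i\,\mathrm{su}(\dim\mathcal H_S))$. That statement is true, but your justification is slightly off: over $\mathbb R$ the $\mathrm{ad}$ of Cartan elements is not diagonalizable and the real weight spaces are two-dimensional (mixing $X_{k,q}$ with $Y_{k,q}$), so ``projectors onto weight vectors as polynomials in the bracket operators'' built from commuting $\mathrm{ad}$'s cannot by themselves separate $X_{k,q}$ from $Y_{k,q}$; the clean route is to note that the adjoint representation of $\mathrm{su}(d)$ is absolutely irreducible over $\mathbb R$, so by the Jacobson density (Burnside) theorem the generated unital algebra is the full real endomorphism algebra, and the non-unital algebra, being a nonzero two-sided ideal of a simple algebra, already equals it--- which supplies exactly the rank-one extraction maps you need (this is, in effect, what the paper's explicit identities and the matrix-inversion trick implement by hand). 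With that correction, or by importing the explicit computation from the paper or \cite{AAR15} as you indicate, your argument closes; the paper's version buys explicitness and self-containment, while yours is shorter and makes transparent that only the (absolute) irreducibility of the adjoint action of $\mathrm{su}(\dim\mathcal H_S)$ is being used.
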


\begin{proof}
Using the basis $H_S=\{X_{k,q}, Y_{k,q}, Z_{k,k+1}\}$ of the linear space $i\cdot {\rm su}({\rm dim}\mathcal H_S)$, any operator $g\in L$ can be uniquely written in the form 
\begin{equation}
g = g_{Id}\otimes Id_S+\sum_{W \in H_S}g_W\otimes W,
\end{equation}
where $g_{Id}$ and $g_W$ are skew-Hermitian operators on $\mathcal{H}_E$. Let $G^{(0)}$ and $G^{(1)}$ be sets of these operator components:
\begin{eqnarray}
G^{(0)}&:=&\{g_{Id}\}_{g\in L},\\
G^{(1)}&:=&\mathcal L(\{g_W\}_{{g}\in L,W\in  H_S}).
\end{eqnarray}
This definition indicates $L\subseteq  {\mathcal L}(G^{(0)}\otimes \{Id_S\}\cup iG^{(1)}\otimes  {\rm su}({\rm dim}\mathcal H_S))=:L_0$.
Note that the set $G^{(0)} $ is a linear space since the set $L$ is a linear space.

Now, we show the inclusion of the opposite direction $L\supseteq  L_0$, i.e., for any element $g\in L$,
$g_{Id}\otimes Id$ and  $g_W\otimes h$ are in $L$ for arbitrary $h\in i\cdot {\rm su}({\rm dim}\mathcal H_S)$ and $W\in H_S$. 
To this end, we show the following: 
\begin{equation}\label{lemm1:suff_cond}
\forall g\in L, \forall W\in H_S, \exists h^\prime\in \mathrm{su}(\mathrm{dim}\mathcal{H}_S), \;\mathrm{such\; that} \; g_W\otimes h^\prime \in L.
\end{equation}
If this is fulfilled, $g_W\otimes h$ is in $L$ for any elements $h\in i\cdot {\rm su}({\rm dim}\mathcal H_S)$ since 
${\rm su}({\rm dim}\mathcal H_S)$ is a simple algebra. That is, for any nonzero $h^\prime\in \mathrm{su}(\mathrm{dim}\mathcal{H}_S)$, generators obtained by repeatedly taking commutators with elements $g_m$ in $\mathrm{su}(\mathrm{dim}\mathcal{H}_S)$ will span the whole su(dim$\mathcal{H}_S$):
\begin{eqnarray}
 {\rm su}({\rm dim}\mathcal H_S)&=& i\mathcal L(\{[\cdots[[h',g_1],g_2],\cdots g_n]|
n\in{\mathbb Z}_{\geq 1}\land g_m\in  {\rm su}({\rm dim}\mathcal H_S)\}).
\nonumber\\
\label{eq:extension}
\end{eqnarray}
Since the relation $g_{Id}\otimes Id_S=g-\sum_{W \in H_S}g_W\otimes W$ guarantees that $g_{Id}\otimes Id_S$ is in $L$,
showing Eq. (\ref{lemm1:suff_cond}) is sufficient to prove $L\supseteq  L_0$.

We pick an arbitrary element $g=:g_{Id}\otimes Id_S+\sum_{W \in H_S}g_W\otimes W\in L$.
From the relation
\begin{eqnarray}\label{pickingXY}
&&\frac 1{({\rm dim}\mathcal H_S-1)^2}[[g,\sum_{p\neq k }iId_E\otimes Z_{k,p}],\sum_{p\neq q }iId_E\otimes Z_{p,q}]
\nonumber\\&&\makebox[3cm]{}=
 g_{X_{k,q}}\otimes X_{k,q}+g_{Y_{k,q}}\otimes Y_{k,q}=: g_{k,q}',
\end{eqnarray}
$ g_{k,q}'$ is in $L$ since all operators in the LHS of Eq. (\ref{pickingXY}) are in $L$.
Therefore, a linear combination $g$ and $ g'_{k,q}$,
\begin{equation}
g-\sum_{k=0}^{{\rm dim}\mathcal H_S-2}\sum_{q=k+1}^{{\rm dim}\mathcal H_S-1} g_{k,q}'
= g_{Id}\otimes Id+\sum_{k=0}^{{\rm dim}\mathcal H_S-2}g_{Z_{k,k+1}}\otimes Z_{k,k+1}=: g'
\end{equation}
is also in $L$.
Taking commutators between $g_{k,q}^\prime$, $g'$ and generators in $\{Id_E\}\otimes \mathrm{su}(\mathrm{dim}\mathcal{H}_S)\in L$, we can obtain $g_{X_{k,q}}\otimes Z_{k,q}$, $g_{Y_{k,q}}\otimes Z_{k,q}$ and $g_{Z_{k,k+1}}\otimes X_{k,k+1}$, as follows:\begin{eqnarray}
-\frac 12[ g_{k,q}',iId_E\otimes Y_{k,q}]&=&g_{X_{k,q}}\otimes Z_{k,q},\label{gXtimesZ} \\
\frac 12[ g_{k,q}',iId_E\otimes X_{k,q}]&=&g_{Y_{k,q}}\otimes Z_{k,q}, \label{gYtimesZ} \\
\sum_{q=0}^{{\rm dim}\mathcal H_S-2}\bar\mu_{k,q}[ g',iId_E\otimes Y_{q,q+1}]
&=&\sum_{q,p=0}^{{\rm dim}\mathcal H_S-2}\bar \mu_{k,q}\mu_{q,p}g_{Z_{p,p+1}}\otimes X_{p,p+1} \nonumber \\
&=&g_{Z_{k,k+1}}\otimes X_{k,k+1}, \label{gZtimesX} 
\end{eqnarray}
 where $\bar\mu_{k,q}$ is the $(k,q)$-th element of the inverse of $({\rm dim}\mathcal H_S-1)$-dimensional matrix $M$ whose $(k, q)$-th element is $\mu_{k,q}:=2\delta_{k,q}-\delta_{|k-q|,1}$, where $0\leq k,q<{\rm dim}\mathcal H_S-1$.
The existence of the inverse matrix is guaranteed from $\det M={\rm dim}\mathcal H_S+1$.
Equations (\ref{gXtimesZ})-(\ref{gZtimesX}) mean that the condition (\ref{lemm1:suff_cond}) is satisfied, and hence $L\supseteq L_0$.
\end{proof}

Next, we consider a sufficient condition for a pair of sets $G^{(0)}$ and $G^{(1)}$ to be the identifier of the Lie algebra.
%%%  Lemma 2  %%%
\begin{lemma}
\label{sec:prop_Gs}
If  $L={\mathcal L}(G^{(0)}\otimes \{Id_S\}\cup  iG^{(1)}\otimes  {\rm su}({\rm dim}\mathcal H_S))$ is a  Lie algebra, $G^{(0)}$ and $G^{(1)}$ satisfy
Eqs. (\ref{eq:rel_Gs_1})-(\ref{eq:rel_Gs_3}). If dim$\mathcal{H}_S\geq 3$, then another commutation relation $[G^{(1)},G^{(1)}]\subseteq G^{(1)}$ is also required.
\end{lemma}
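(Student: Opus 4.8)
The plan is to verify the three inclusions \eqref{eq:rel_Gs_1}–\eqref{eq:rel_Gs_3} directly by expanding commutators of generic elements of $L$ in the basis $H_S$, and extracting the $Id_S$- and $W$-components. Concretely, take two generic elements $g=g_{Id}\otimes Id_S+\sum_W g_W\otimes W$ and $g'=g'_{Id}\otimes Id_S+\sum_W g'_W\otimes W$ of $L$, and recall from Lemma~\ref{sec:str_L} that $L=\mathcal{L}(G^{(0)}\otimes\{Id_S\}\cup iG^{(1)}\otimes{\rm su}({\rm dim}\mathcal H_S))$, so that \emph{every} $g_{Id}\in G^{(0)}$ and $g_W\in G^{(1)}$ can be realized as an isolated summand; this is the lever that turns statements about particular bracketed elements into set inclusions.

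First I would prove \eqref{eq:rel_Gs_3}, $i\{G^{(1)},G^{(1)}\}\subseteq G^{(1)}$, since it does not even use the Lie-algebra structure directly but follows from closure under commutators together with the $\mathfrak{su}$ multiplication table. The key identity is that for $A,B\in G^{(1)}$ and suitable $W,W'\in H_S$, the commutator $[A\otimes W,\,B\otimes W']$ produces a term proportional to $\{A,B\}\otimes[W,W']$ plus $[A,B]\otimes\{W,W'\}$-type pieces; by choosing $W,W'$ so that $[W,W']$ lands on a basis element of $i\cdot{\rm su}$ and $\{W,W'\}$ is (a multiple of) the identity or another controllable direction, and then using that $\{Id_E\}\otimes{\rm su}({\rm dim}\mathcal H_S)\subseteq L$ to subtract off the unwanted commutator piece, one isolates $i\{A,B\}\otimes(\text{basis element})\in L$, hence $i\{A,B\}\in G^{(1)}$. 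For \eqref{eq:rel_Gs_2}, $[G^{(0)},G^{(1)}]\subseteq G^{(1)}$, I would bracket $g_{Id}\otimes Id_S$ with $B\otimes W$ to get $[g_{Id},B]\otimes W\in L$, whence $[g_{Id},B]\in G^{(1)}$. For \eqref{eq:rel_Gs_1} with $b=0$, bracketing two $Id_S$-type elements gives $[g_{Id},g'_{Id}]\otimes Id_S\in L$ immediately; for $b=1$, bracket $A\otimes W$ with $A'\otimes W$ (same $W$, so $[W,W]=0$) to obtain $\{A,A'\}\otimes W^2$-type contributions — more carefully, use $[A\otimes W, A'\otimes W']$ and collect the $Id_S$-component, which is the $[A,A']\otimes\{W,W'\}$ piece projected onto $Id_S$, giving $[A,A']\in G^{(0)}$ after normalizing by the trace of $\{W,W'\}$.

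For the dim$\mathcal H_S\ge 3$ clause, the extra relation $[G^{(1)},G^{(1)}]\subseteq G^{(1)}$ should come from the availability, when there are at least three levels, of a pair $W,W'\in H_S$ with $\{W,W'\}=0$ and $[W,W']$ a nonzero multiple of a third basis element of $i\cdot{\rm su}$ — e.g.\ $X_{0,1}$ and $X_{0,2}$, whose anticommutator is $X_{1,2}$-ish off-diagonal but whose commutator is $Y_{1,2}$-type, or a cleaner choice among the $Z$'s and off-diagonals. With such a pair, $[A\otimes W,A'\otimes W']$ has its $\{A,A'\}$ piece killed and delivers $[A,A']\otimes[W,W']\in L$, hence $[A,A']\in G^{(1)}$. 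The main obstacle I anticipate is precisely this bookkeeping: choosing, for each target relation, $W$'s whose commutators and anticommutators in $\mathfrak{su}({\rm dim}\mathcal H_S)$ have the right support so that one summand can be projected out cleanly and the other cancelled using $\{Id_E\}\otimes{\rm su}$ — and, for the $\ge 3$ case, checking that no such separating pair exists when ${\rm dim}\mathcal H_S=2$ (where every product of two Pauli matrices is again $\pm$ a Pauli or the identity, so $[W,W']$ and $\{W,W'\}$ can never be simultaneously "on" the desired third direction and "off" the identity in the needed way), which is what makes \eqref{eq:g1_comm} genuinely dimension-dependent rather than universal.
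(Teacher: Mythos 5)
Your proposal is correct and takes essentially the same route as the paper: expand $[A\otimes W,B\otimes W']=\tfrac12[A,B]\otimes\{W,W'\}+\tfrac12\{A,B\}\otimes[W,W']$ for pure tensors that lie in $L$ by hypothesis, then use uniqueness of the decomposition over $\{Id_S\}\cup H_S$ to read off components, the paper's explicit choices being $(X_{0,1},Y_{0,1})$ (vanishing anticommutator) for Eq. (\ref{eq:rel_Gs_3}), same-$W$ brackets for Eq. (\ref{eq:rel_Gs_1}), and the difference $[g_1\otimes X_{0,1},g'_1\otimes X_{0,1}]-[g_1\otimes X_{0,2},g'_1\otimes X_{0,2}]=[g_1,g'_1]\otimes Z_{1,2}$ for ${\rm dim}\mathcal H_S\ge3$. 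One small correction: your suggested ``subtraction using $\{Id_E\}\otimes{\rm su}({\rm dim}\mathcal H_S)$'' cannot remove the $[A,B]\otimes\{W,W'\}$ term (its $E$-factor is not proportional to $Id_E$), but this is harmless because the component-projection step you also invoke already isolates the desired piece.
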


\begin{proof}
For any $g_{b},g_{b}^{\prime}\in G^{(b)}$, we can construct equalities
\begin{eqnarray}
\left[g_{0},g_{0}^{\prime}\right]\otimes Id_{S} &=& \left[g_{0}\otimes Id_{S},g_{0}^{\prime}\otimes Id_{S}\right],
\\
\left[g_{1},g_{1}^{\prime}\right]\otimes Id_{S} &=& \frac{1}{d-1}\sum_{k=0}^{{\rm dim}\mathcal H_S-2}\sum_{q=k+1}^{{\rm dim}\mathcal H_S-1}\left[g_{1}\otimes X_{k,q},g_{1}^{\prime}\otimes X_{k,q}\right],
\\
\left[g_{0},g_{1}\right]\otimes Z_{0,1} &=& \left[g_{0}\otimes Id_{S},g_{1}\otimes Z_{0,1}\right], \; \mathrm{and}
\\
i\left\{ g_{1},g_{1}'\right\} \otimes Z_{0,1}   &=& \left[g_{1}\otimes X_{0,1},g_{1}^{\prime}\otimes Y_{0,1}\right].
\end{eqnarray}
From the assumption, any operator in the RHSs, e.g., $g_{0}\otimes Id_{S}$ and $g_{0}^{\prime}\otimes Id_{S}$, is contained in $L$. Therefore, each operator in the LHSs should also be contained in $L$. Looking at the operator on $\mathcal{H}_S$ of these relations, Eqs. (\ref{eq:rel_Gs_1})-(\ref{eq:rel_Gs_3}) can be justified.

When ${\rm dim}\mathcal H_S\geq 3$, we can have equalities such as
\begin{eqnarray}
\left[g_{1},g_{1}^{\prime}\right]\otimes Z_{1,2}    &=& \left[g_{1}\otimes X_{0,1},g_{1}^{\prime}\otimes X_{0,1}\right]-\left[g_{1}\otimes X_{0,2},g_{1}^{\prime}\otimes X_{0,2}\right],
\end{eqnarray}
which means $[G^{(1)},G^{(1)}]\subseteq G^{(1)}$. Note that if dim$\mathcal{H}_S=2$ there is only a single $X$ operator, $X_{0,1}$ (obviously the same for $Y$ and $Z$), thus the commutation relation for $G^{(1)}$ does not necessarily hold.
\end{proof}

The next lemma is for the necessary condition for a pair of sets $G^{(0)}$ and $G^{(1)}$ to be the identifier of the Lie algebra.
%%%  Lemma 3  %%%
\begin{lemma}
\label{sec:prop_Gs_suf}
Suppose that $G^{(0)}$ and $G^{(1)}$ are sets of linear operators, and $L={\mathcal L}(G^{(0)}\otimes \{Id_S\}\cup i G^{(1)}\otimes  {\rm su}({\rm dim}\mathcal H_S))$.
If $G^{(0)}$ and $G^{(1)}$ satisfy Eqs. (\ref{eq:rel_Gs_1})-(\ref{eq:rel_Gs_3}) and if dim$\mathcal H_S=2$,
the operator space $L$ is closed under the commutator, hence $L$ forms a Lie algebra. The same can be said for the case of dim$\mathcal{H}_S\ge 3$, if Eq. (\ref{eq:g1_comm}), $[G^{(1)},G^{(1)}]\subseteq G^{(1)}$, is satisfied in addition to Eqs. (\ref{eq:rel_Gs_1})-(\ref{eq:rel_Gs_3}).
\end{lemma}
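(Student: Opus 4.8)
The plan is to verify closure, $[L,L]\subseteq L$, directly, where $L=\mathcal L(G^{(0)}\otimes\{Id_S\}\cup iG^{(1)}\otimes{\rm su}({\rm dim}\mathcal H_S))$. By bilinearity of the commutator it suffices to bracket two elements taken from the generating set $G^{(0)}\otimes\{Id_S\}\cup iG^{(1)}\otimes{\rm su}({\rm dim}\mathcal H_S)$, which leaves three cases. For $g_0,g_0'\in G^{(0)}$ we have $[g_0\otimes Id_S,g_0'\otimes Id_S]=[g_0,g_0']\otimes Id_S\in G^{(0)}\otimes\{Id_S\}$ by Eq.~(\ref{eq:rel_Gs_1}). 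For $g_0\in G^{(0)}$, $g_1\in G^{(1)}$, $s\in{\rm su}({\rm dim}\mathcal H_S)$ we have $[g_0\otimes Id_S,ig_1\otimes s]=i[g_0,g_1]\otimes s\in iG^{(1)}\otimes{\rm su}({\rm dim}\mathcal H_S)$ by Eq.~(\ref{eq:rel_Gs_2}). Both inclusions hold regardless of ${\rm dim}\mathcal H_S$ and use nothing beyond Eqs.~(\ref{eq:rel_Gs_1})--(\ref{eq:rel_Gs_3}).

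The substantive case is $[ig_1\otimes s,ig_1'\otimes s']$ for $g_1,g_1'\in G^{(1)}$ and $s,s'\in{\rm su}({\rm dim}\mathcal H_S)$. Here I would invoke the standard identity $[A\otimes B,C\otimes D]=\tfrac12[A,C]\otimes\{B,D\}+\tfrac12\{A,C\}\otimes[B,D]$, which gives
\begin{equation*}
[ig_1\otimes s,ig_1'\otimes s']=-\tfrac12[g_1,g_1']\otimes\{s,s'\}-\tfrac12\{g_1,g_1'\}\otimes[s,s'].
\end{equation*}
Since $s,s'$ are skew-hermitian, $[s,s']\in{\rm su}({\rm dim}\mathcal H_S)$ and $\{s,s'\}$ is hermitian; decompose it as $\{s,s'\}=c\,Id_S+iv$ with $c\in\mathbb R$ and $v\in{\rm su}({\rm dim}\mathcal H_S)$ (its traceless part). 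The $c\,Id_S$ contribution is $-\tfrac{c}{2}[g_1,g_1']\otimes Id_S\in G^{(0)}\otimes\{Id_S\}$ by Eq.~(\ref{eq:rel_Gs_1}). The term $-\tfrac12\{g_1,g_1'\}\otimes[s,s']$ equals $i\bigl(i\{g_1,g_1'\}\bigr)\otimes\bigl(\tfrac12[s,s']\bigr)$, which lies in $iG^{(1)}\otimes{\rm su}({\rm dim}\mathcal H_S)$ because $i\{g_1,g_1'\}\in G^{(1)}$ by Eq.~(\ref{eq:rel_Gs_3}).

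The only remaining piece is the cross term $-\tfrac{i}{2}[g_1,g_1']\otimes v=i\bigl(-\tfrac12[g_1,g_1']\bigr)\otimes v$, and this is exactly where the two regimes separate. When ${\rm dim}\mathcal H_S\ge3$ the extra hypothesis $[G^{(1)},G^{(1)}]\subseteq G^{(1)}$ (Eq.~(\ref{eq:g1_comm})) places $[g_1,g_1']$ in $G^{(1)}$, so the cross term lies in $iG^{(1)}\otimes{\rm su}({\rm dim}\mathcal H_S)$. When ${\rm dim}\mathcal H_S=2$ the cross term vanishes identically: for any $s,s'\in{\rm su}(2)$ the anticommutator $\{s,s'\}$ is a real multiple of $Id_S$ (write $s,s'$ in the Pauli basis and use $\{i\sigma_j,i\sigma_k\}=-2\delta_{jk}Id_S$), hence $v=0$ and no condition on $[G^{(1)},G^{(1)}]$ is ever needed. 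Collecting the three cases yields $[L,L]\subseteq L$ in both regimes, so $L$ is a Lie algebra.

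I expect the only care required is the hermiticity bookkeeping: tracking that $G^{(1)}$ and ${\rm su}(\cdot)$ consist of skew-hermitian operators while $iG^{(1)}$ and $\{s,s'\}$ are hermitian, and isolating the scalar part of $\{s,s'\}$ so that Eqs.~(\ref{eq:rel_Gs_1})--(\ref{eq:rel_Gs_3}) apply verbatim. The conceptual content is simply that ${\rm su}(2)$ is the unique case in which an anticommutator never produces a traceless term, which is precisely why Eq.~(\ref{eq:g1_comm}) is dispensable there.
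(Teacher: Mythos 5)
Your proof is correct and follows essentially the same route as the paper's: both reduce closure to the three types of brackets of spanning elements, use the identity $[A\otimes B,C\otimes D]=\tfrac12[A,C]\otimes\{B,D\}+\tfrac12\{A,C\}\otimes[B,D]$ together with Eqs.~(\ref{eq:rel_Gs_1})--(\ref{eq:rel_Gs_3}), and observe that for ${\rm su}(2)$ the anticommutator is a multiple of $Id_S$ so Eq.~(\ref{eq:g1_comm}) is needed only when ${\rm dim}\mathcal H_S\ge 3$. The only cosmetic difference is that you work with general skew-hermitian $s,s'$ and split off the traceless part, whereas the paper works with the hermitian basis $H_S$; the content is identical.
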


\begin{proof}
Let us define a basis of $L$ by a set of operators, each of which has the form $g_{0}\otimes Id_S$ or $i g_{1}\otimes h$ 
with $g_b\in G^{(b)}$ and $h\in H_S$. Therefore, it is sufficient if we check that
commutators between any two elements of such are in $L$.
For any $g_b,g_b^{\prime}\in G^{(b)}$ and $h,h'\in H_S$, we have the commutation relations,
\begin{eqnarray}
\left[g_{0}\otimes Id_{S},g_{0}^{\prime}\otimes Id_{S}\right]  &=& \left[g_{0},g_{0}^{\prime}\right]\otimes Id_{S}, \label{eq:g0g0} \\
\left[g_{0}\otimes Id_{S},g_{1}\otimes h\right] &=& \left[g_{0},g_{1}\right]\otimes h,\label{eq:g0g1} \\
\left[g_{1}\otimes h,g_{1}^{\prime}\otimes h'\right]    &=&
    \frac{1}{2}\left[g_{1},g_{1}^{\prime}\right]\otimes\left\{ h,h^{\prime}\right\} -\frac{1}{2}i\left\{ g_{1},g_{1}^{\prime}\right\} \otimes i\left[h,h^{\prime}\right]. \label{eq:g1g1}
\;\;
\end{eqnarray}
Due to Eqs. (\ref{eq:rel_Gs_1}) and (\ref{eq:rel_Gs_2}), the RHSs of Eqs. (\ref{eq:g0g0}) and (\ref{eq:g0g1}) are in $L$. As for Eq. (\ref{eq:g1g1}), when ${\rm dim}\mathcal H_S=2$, Eqs. (\ref{eq:rel_Gs_1}) and (\ref{eq:rel_Gs_3}) guarantee that its RHS is in $L$, since $\{h,h'\}\propto Id_S$ holds for any basis elements $h, h^\prime \in H_S$ in this case. 
If dim$\mathcal{H}_S\ge 3$, $\{h, h^\prime\}$ can be written as a linear combination of elements in $H_S$ and $Id_S$, and obviously $i[h, h^\prime]$ is again in $H_S$ (if not zero). Thus, the RHS of Eq. (\ref{eq:g1g1}) is also in $L$ because of Eqs. (\ref{eq:rel_Gs_1}), (\ref{eq:rel_Gs_3}), and (\ref{eq:g1_comm}).
\end{proof}

If the pair of operator sets $(G^{(0)},G^{(1)})$ is the identifier of a Lie algebra $L$, $iG^{(1)}$ is a set of  Hermitian operators which is closed under the anti-commutator. That is,  $iG^{(1)}$ is a {\it formally real Jordan algebra}, which is defined as a linear space closed under the commutative bilinear operator such that
\begin{eqnarray}
\{x,y\} &=& \{y,x\}, \nonumber \\
\{\{\{x,x\},y\},x\} &=& \{\{x,x\},\{y,x\}\}, \nonumber \\
\sum_j\{x_j,x_j\}=0&\Rightarrow&x_j=0. \nonumber
\end{eqnarray}
The following lemmas about the structure of the Jordan algebra are useful for classification of Lie algebras that include all elements in $i\{Id_E\}\otimes {\rm su}({\rm dim}\mathcal H_S)$.

%%%  Lemma 4  %%%
\begin{lemma}
\label{sec:Jor_alg_rel}
{\rm (Theorem 14, 16 and 17 in the paper \cite{JNW34})}
For any formally real Jordan algebra $J$, a basis 
$ \{e_{\rho} \}_{\rho\in\{0,1,\cdots, \rho_0-1\}}\cup\{s^{(\rho,\sigma)}_{\mu}\}_{(\rho,\sigma,\mu)\in \Omega}$ can be constructed, where $\rho_0$ is an integer that can be determined when a specific $J$ is given. The indices $\rho, \sigma,$ and $\mu$ are in the range 
$
 \Omega=\{(\rho,\sigma,\mu)|
\rho<\sigma
\;\land\;
 \exists j,\;\rho,\sigma\in\Gamma_j
\;\land\;
\mu\in\{0,1,\cdots,\chi_j-1\}
 \}
$,  where $\{\Gamma_j\}_j$ are a non-overlapping decomposition of 
$\{0,1,\cdots,\rho_0-1 \}$, i.e., $\bigoplus_j \Gamma_j= \{0,1,\cdots,\rho_0-1 \}$, and $\chi_j$ are positive integers indexed by $j$. The basis elements $ \{e_{\rho} \}$ and $\{s^{(\rho,\sigma)}_{\mu}\}$ satisfy the following three anti-commutation relations:
\begin{eqnarray}
\{e_\rho,e_\sigma\} &=&  2\delta_{\rho,\sigma}e_\rho,
\label{eq:str_jor_2_1}
\\
\{s_\mu^{(\rho,\sigma)},s_\nu^{(\rho,\sigma)}\} &=& 2\delta_{\mu,\nu}(e_\rho+e_\sigma),
\label{eq:str_jor_2_2}
\\
\{e_\rho,s_\mu^{(\sigma,\tau)}\}    &=& (\delta_{\rho,\sigma}+\delta_{\rho,\tau})s_\mu^{(\sigma,\tau)}.
\label{eq:str_jor_2_3}
\end{eqnarray}
\end{lemma}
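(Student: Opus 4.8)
The plan is to recover the statement from the classical idempotent and Peirce machinery for formally real (Euclidean) Jordan algebras, following the cited results of \cite{JNW34}. First I would observe that, since $J$ is finite-dimensional and formally real, the symmetric form $\tau(x,y):=\mathrm{tr}(L_{\{x,y\}})$ is positive definite, so $J$ is semisimple and has an identity $\mathbf 1$; hence $J$ is an orthogonal direct sum of simple ideals $J=\bigoplus_j J_j$, and ideals from distinct summands annihilate one another. In each $J_j$ I would choose a \emph{Jordan frame}: a maximal family of pairwise orthogonal nonzero idempotents $\{e_\rho\}_{\rho\in\Gamma_j}$ whose sum equals the identity of $J_j$. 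Such a family exists because every nonzero idempotent dominates a primitive (minimal) one and a maximal orthogonal system of primitive idempotents is complete; collecting these frames over all $j$ defines the index set $\{0,\dots,\rho_0-1\}=\bigsqcup_j\Gamma_j$, with $\rho_0$ the rank of $J$. By construction $\{e_\rho,e_\sigma\}=2\delta_{\rho,\sigma}e_\rho$, which is Eq.~(\ref{eq:str_jor_2_1}).

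Next I would write down the Peirce decomposition of $J$ relative to the complete orthogonal system $\{e_\rho\}_\rho$: $J=\bigoplus_\rho J_{\rho\rho}\oplus\bigoplus_{\rho<\sigma}J_{\rho\sigma}$, where $J_{\rho\rho}=\{x:\{e_\tau,x\}=2\delta_{\tau\rho}x\ \forall\tau\}$ and $J_{\rho\sigma}=\{x:\{e_\tau,x\}=(\delta_{\tau\rho}+\delta_{\tau\sigma})x\ \forall\tau\}$. Primitivity of $e_\rho$ forces $J_{\rho\rho}=\mathbb R e_\rho$ (a finite-dimensional formally real Jordan algebra with identity and no nontrivial idempotent is $\mathbb R$), and because distinct simple ideals annihilate each other, $J_{\rho\sigma}=\{0\}$ whenever $\rho$ and $\sigma$ lie in different blocks $\Gamma_j$. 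The defining property of $J_{\sigma\tau}$ is exactly Eq.~(\ref{eq:str_jor_2_3}), read off on its elements. Moreover, since $e_\rho,e_\sigma$ are connected primitive idempotents inside the simple algebra $J_j$, each off-diagonal Peirce space with $\rho,\sigma\in\Gamma_j$ is nonzero, and a standard transitivity argument shows its dimension is independent of the pair; call it $\chi_j$. This fixes the range $\Omega$ of $(\rho,\sigma,\mu)$ claimed in the lemma.

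It remains to choose, inside each $J_{\rho\sigma}$ with $\rho,\sigma\in\Gamma_j$, a basis $\{s^{(\rho,\sigma)}_\mu\}_{\mu<\chi_j}$ obeying Eq.~(\ref{eq:str_jor_2_2}). The key step is that for $0\neq x\in J_{\rho\sigma}$ the square $x^2=\tfrac12\{x,x\}$ lies in $J_{\rho\rho}\oplus J_{\sigma\sigma}=\mathbb R e_\rho\oplus\mathbb R e_\sigma$ and is in fact a strictly positive multiple of $e_\rho+e_\sigma$. The cleanest route is to pass to the rank-two corner $(e_\rho+e_\sigma)J(e_\rho+e_\sigma)$, which, being a simple formally real Jordan algebra of rank two, is a spin factor $\mathbb R\oplus V$; there $e_\rho,e_\sigma$ are the two idempotents determined by a unit vector, $J_{\rho\sigma}$ is the orthogonal complement of that vector, and $x^2=\|x\|^2(e_\rho+e_\sigma)$ by direct computation, with $\|x\|^2>0$ by formal reality. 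Polarizing, $\langle x,y\rangle(e_\rho+e_\sigma):=\tfrac12\{x,y\}$ is a positive-definite inner product on $J_{\rho\sigma}$, and Gram--Schmidt orthonormalization produces $\{s^{(\rho,\sigma)}_\mu\}$ with $\{s^{(\rho,\sigma)}_\mu,s^{(\rho,\sigma)}_\nu\}=2\delta_{\mu,\nu}(e_\rho+e_\sigma)$, i.e.\ Eq.~(\ref{eq:str_jor_2_2}). Together with the $e_\rho$'s, this delivers the advertised basis and all three relations.

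The part I expect to be the main obstacle is precisely this reduction: establishing that the rank-two corner is a spin factor — equivalently, that a rank-two simple formally real Jordan algebra is the Jordan algebra of a quadratic form — and, inside the full argument, the connectedness/transitivity statement that makes $\chi_j$ well defined. Both are genuinely Jordan-algebraic inputs (the content of the cited theorems of \cite{JNW34}) rather than soft linear algebra, and both are where formal reality is indispensable: it makes the trace form positive definite, rules out nilpotents, and forces the squares of Peirce elements to be positive multiples of $e_\rho+e_\sigma$. Everything else — existence and completeness of the frame, the Peirce multiplication rules, and the final orthonormalization — is routine once those structural facts are in hand.
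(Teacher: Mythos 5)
Your route is essentially the classical abstract structure-theoretic one (semisimplicity from the trace form, Jordan frames, Peirce decomposition, one-dimensional diagonal Peirce spaces for primitive idempotents, and a spin-factor reduction for the rank-two corners), and as an outline it is sound; but it is genuinely different from the proof in the paper, and it defers exactly the two hard steps. The paper never passes through simple ideals or the classification of rank-two algebras: it works directly in the concrete hermitian-operator realization (which is all that is needed for $iG^{(1)}$), builds the frame $\{e_\rho\}$ by a minimal-rank recursion inside nested subalgebras $J^{(\rho)}$, uses the ambient associative product to define $E_\rho=\{e_\rho h e_\rho\}$ and $S^{(\rho,\sigma)}=\{e_\rho h e_\sigma+e_\sigma h e_\rho\}$, proves $E_\rho=\mathbb{R}e_\rho$ by a rank argument, gets the normalization (\ref{eq:str_jor_2_2}) by an explicit trace computation that also yields ${\rm Tr}\,e_\rho={\rm Tr}\,e_\sigma$, shows completeness of the basis via $IhI=h$, and obtains the block structure $\{\Gamma_j\}$ and the constancy of $\chi_j$ by the explicit injective map $h\mapsto\{s_0^{(\rho,\sigma)},h\}:S^{(\rho,\tau)}\to S^{(\sigma,\tau)}$, i.e.\ by defining the blocks through nonvanishing of the off-diagonal spaces rather than through simplicity. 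Your approach buys generality (it works for an abstract formally real Jordan algebra, with no ambient associative algebra) and conceptual economy, but the two steps you flag as the main obstacles --- that a rank-two corner with nonzero off-diagonal part is a spin factor (equivalently $x^2$ is a positive multiple of $e_\rho+e_\sigma$, which also hides the equality of the two coefficients, the abstract counterpart of ${\rm Tr}\,e_\rho={\rm Tr}\,e_\sigma$), and the transitivity/constancy statement defining $\chi_j$ --- are precisely the content you would be citing back to \cite{JNW34}, i.e.\ to the very theorems the lemma restates; the paper's point in giving the proof is to make these self-contained, and it does so by elementary rank/trace manipulations. If you want your version to stand on its own you would need to supply those two arguments (your trace-form polarization plus an equal-trace argument for connected primitive idempotents would suffice, without invoking the full spin-factor classification); otherwise your proposal is a correct but partially deferred alternative to the paper's elementary, representation-based proof.
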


As a quick consequence of Eqs. (\ref{eq:str_jor_2_1})-(\ref{eq:str_jor_2_3}) in this lemma, let us show three useful relations. The first one is
\begin{eqnarray}
e_{\rho}e_{\sigma}
&=& \frac 12 \{e_{\rho},e_{\rho}\}e_{\sigma}
\nonumber\\
    &=&  e_{\rho}\{e_{\rho},e_{\sigma}\}-\frac{1}{2}\{\{e_{\rho},e_{\sigma}\},e_{\rho}\}+\frac{1}{4}\{\{e_{\rho},e_{\rho}\},e_{\sigma}\}
\nonumber\\
    &=&  2\delta_{\rho,\sigma}e_{\rho}^{2}-2\delta_{\rho,\sigma}e_{\rho}+\delta_{\rho,\sigma}{e_{\rho}}
\nonumber\\
    &=&  \delta_{\rho,\sigma}(2\{e_{\rho},e_{\rho}\}-e_{\rho})
\nonumber\\
    &=&  \delta_{\rho,\sigma}e_{\rho},
\label{eq:str_jor_2_1+}
\end{eqnarray}
where Eq. (\ref{eq:str_jor_2_1}) is used in the first, third and the last equalities, while the second and the fourth equalities can be verified just by the definition of the anti-commutator.
The second is, for $\rho<\sigma$,
\begin{eqnarray}
s_{\mu}^{(\rho,\sigma)}
&=& \{e_\rho,\{e_\sigma,s_\mu^{(\rho,\sigma)}\}\}
\nonumber\\
&=& e_{\rho}s_{\mu}^{(\rho,\sigma)}e_{\sigma}+e_{\sigma}s_{\mu}^{(\rho,\sigma)}e_{\rho},
\label{eq:str_jor_2_2+}
\end{eqnarray}
where Eq. (\ref{eq:str_jor_2_3}) has been used recursively in the first equality, and Eq. (\ref{eq:str_jor_2_1+}) in the second line. The last one we show here is
\begin{eqnarray}
 &   &  e_{\rho}s_{\mu}^{(\rho,\sigma)}e_{\sigma}s_{\nu}^{(\rho,\sigma)}e_{\rho}+e_{\rho}s_{\nu}^{(\rho,\sigma)}e_{\sigma}s_{\mu}^{(\rho,\sigma)}e_{\rho}
\nonumber\\
    &=& e_{\rho}\{e_{\rho}s_{\mu}^{(\rho,\sigma)}e_{\sigma}+e_{\sigma}s_{\mu}^{(\rho,\sigma)}e_{\rho},e_{\rho}s_{\nu}^{(\rho,\sigma)}e_{\sigma}+e_{\sigma}s_{\nu}^{(\rho,\sigma)}e_{\rho}\}e_{\rho}
\nonumber\\
    &=& e_{\rho}\{s_{\mu}^{(\rho,\sigma)},s_{\nu}^{(\rho,\sigma)}\}e_{\rho}
\nonumber\\
    &=& 2\delta_{\mu,\nu}e_{\rho},
\label{eq:str_jor_2_3+}
\end{eqnarray}
where Eqs. (\ref{eq:str_jor_2_1+}) and (\ref{eq:str_jor_2_2+}) have been applied in the first and the second equalities. In the last step, Eqs. (\ref{eq:str_jor_2_2}) and (\ref{eq:str_jor_2_1+}) are used.

Although Lemma \ref{sec:Jor_alg_rel} has been known since \cite{JNW34}, we shall give a proof of these relations in order to make this paper self-contained. As we are interested in the (Jordan) algebra, which is expressed on a Hermitian operator space, our discussion is automatically restricted to the formally real Jordan algebra.

\begin{proof}
Let us start with a simple proposition about the Jordan algebra with Hermitian operators. That is, for any element $h$ of the algebra, projection operators onto the eigenspace of $h$ for any non-zero eigenvalue are in the algebra.
To prove this, pick an arbitrary element $h$ in the Jordan algebra and let $v_{k}$ and $h_{k}$ for $k\in \{1,2,\cdots,n_{0}\}$ be its non-zero eigenvalues and projection operators onto the corresponding eigenspaces.
Then, define a matrix $M$ such that its $(q, k)$-element $\mu_{q,k}$ is equal to $(v_k)^{q-1}$ for $k,q\in\{1,2,\cdots,n_{0}\}$. Similarly, we define $\bar \mu_{q,k}$ as the $(q, k)$-element of the inverse $M^{-1}$. The existence of the inverse matrix is guaranteed by the fact that ${\rm det}M=\prod_{k>k^{\prime}}(v_{k}-v_{k^{\prime}})\neq 0$.
Using $\mu_{q,k}$ and $\bar{\mu}_{q,k}$, the projection operator $h_k$ can be written as
\begin{equation}
h_{k} = \sum_{q,k'}\bar \mu_{k,q}\mu_{q,k'}h_{k'}
=\sum_{q}\bar \mu_{k,q}h^{q-1}.
\end{equation}
Because $\{h^{n},h\}=2h^{ n+1}$ for all $n\in\mathbb{N}_{>0}$, the above equation implies that the projection operator onto an eigenspace $h_k$ is also an element in the algebra. 

Next, we shall define a set $\{e_\rho\}_\rho$ in the following way and look into its properties.
First, let a set $J^{(0)}$ be $J$. For $\rho\geq 0$, $e_\rho$  is defined from a subset $J^{(\rho)}$ of $J$ such that $e_\rho$ is a non-zero operator which has the smallest rank in the set $J^{(\rho)}$ whose largest eigenvalue is $1$.
Then, $J^{(\rho+1)}$ is defined as a set of elements of $J^{(\rho)}$ that anti-commute with $e_\rho$, i.e., $J^{(\rho+1)}=\{h|\{h,e_{\rho}\}=0,h\in J^{(\rho)}\}$.
As we have seen in the above argument, $e_0$ is a projection operator, and, for any element $h \in J^{(1)}$, $he_0=\frac12(\{e_0,h\}+\{e_0,h\}e_0- e_0\{e_0,h\})=0$ holds. Thus, for any element $h$, we see $h'\in J^{(1)}$, $\{\{h,h'\},e_0\}=\{h,\{h',e_0\}\}+\{h',\{h,e_0\}\}+2(he_0)h'+2(h'e_0)h=0$, and this means that $J^{(1)}$ is also a Jordan subalgebra of $J$.
Iterating this process for larger $\rho$, we can state that $e_\rho$ is a projection operator, any element in $J^{(\rho')}$
anti-commutes with $e_\rho$ if $\rho^\prime>\rho$, thus $\{e_\rho,e_{\rho'}\}=0$ as $e_{\rho^\prime} \in J^{(\rho')}$.
Therefore, Eq. (\ref{eq:str_jor_2_1}), as well as Eq. (\ref{eq:str_jor_2_1+}), can be justified. The sets $\{J^{(\rho)}\}$ clearly have the inclusion relations $J^{(0)}\supsetneq J^{(1)} \supsetneq \cdots$. Since the entire space is finite dimensional, there exists a number $\rho_0$ such that $\cdots J^{(\rho_0-1)}\supsetneq J^{(\rho_0)}=\{0\}$.

In order to show other properties of $\{e_\rho\}$ and $\{s_\mu^{(\rho,\sigma)}\}$, let us now define linear spaces $E_{\rho}:=\{e_{\rho}he_{\rho}| h\in J\}$ and $S^{(\rho,\sigma)}:=\{e_{\sigma} h e_{\rho}+e_{\rho} h e_{\sigma}| h\in J\}$ for $\rho\neq \sigma$.
Any elements in  $E_{\rho}$ and  $S^{(\rho,\sigma)}$ are in $J$ since, using Eq. (\ref{eq:str_jor_2_1+}), their elements can be written as
\begin{eqnarray}\label{EandS}
e_{\rho}he_{\rho} &=& \frac{1}{2}\{\{h,e_{\rho}\},e_{\rho}\}-\frac{1}{2}\{h,e_{\rho}\} \\
e_{\sigma}he_{\rho}+e_{\rho}he_{\sigma} &=& \{\{h,e_{\rho}\},e_{\sigma}\}.
\end{eqnarray}

An immediate consequence of this definition of $E_\rho$ is that any element in $E_{\rho}$ is proportional to $e_\rho$.
Let us prove it by contradiction. Suppose that there exists an operator $h$ in $E_{\rho}$ which is not proportional to $e_\rho$.
We pick a projection operator $e_{h}$ onto an eigenspace of $h$ that corresponds to a nonzero eigenvalue.
Since  $h\in E_{\rho}$, $e_\rho h e_\rho=h$ holds due to Eq. (\ref{eq:str_jor_2_1+}), thus the range of $h$ is not larger than $e_\rho$.
Because the range of $e_h$ is smaller than $h$ by assumption, the range of $e_h$ is smaller than $e_\rho$.
This implies $e_\rho e_h e_\rho=e_h$ and $\mathrm{rank}\, e_h < \mathrm{rank}\,e_\rho$.
Meanwhile, $e_{h}$ is not only in $J$ (since $h\in J$), but also in $J^{(\rho)}$ because $\{e_{\rho'},e_h\}=e_{\rho'}e_\rho e_h e_\rho+e_\rho e_h e_\rho e_{\rho'}=0$ for $\rho^\prime\neq \rho$. The existence of a projection $e_{h}\in J^{(\rho)}$ such that $\mathrm{rank}\, e_h < \mathrm{rank}\,e_\rho$ contradicts with the definition of $e_\rho$, that is, $e_\rho$ must have the smallest rank in $J^{(\rho)}$.

Let us now prove Eqs. (\ref{eq:str_jor_2_2}) and (\ref{eq:str_jor_2_3}), the equalities concerning $s_\mu^{(\rho,\sigma)}$ in the set $S^{(\rho,\sigma)}$. We shall consider only the case where all elements of $S^{(\rho,\sigma)}$ are nonzero. By defining an inner product $f_{\rho,\sigma}$ on the linear space $S^{(\rho,\sigma)}$ as 
\begin{equation}
f_{\rho,\sigma}(h,h') = \frac{{\rm Tr}hh'}{{\rm Tr}(e_\rho+e_\sigma)},
\end{equation}
we can construct a normalized orthogonal basis with respect to $f_{\rho,\sigma}$, and we shall let $s_\mu^{(\rho,\sigma)}$ be such a basis.

Equation (\ref{eq:str_jor_2_3}), thus Eq. (\ref{eq:str_jor_2_2+}) as well, can be verified rather straightforwardly with the definition of $S^{(\rho,\sigma)}$ and Eq. (\ref{eq:str_jor_2_1+}). In order to prove Eq. (\ref{eq:str_jor_2_2}), let us note that $e_\tau\{s^{(\rho,\sigma)}_\mu,s^{(\rho,\sigma)}_\nu \}e_\tau=a^{(\rho,\sigma,\tau)}_{\mu,\nu}e_\tau$, where $a_{\mu,\nu}^{(\rho,\sigma,\tau)}$ is a real number. This is because $e_\tau\{s^{(\rho,\sigma)}_\mu,s^{(\rho,\sigma)}_\nu \}e_\tau$ is in the set $E_\tau$, thus it is proportional to $e_\tau$ and we let $a^{(\rho,\sigma,\tau)}_{\mu,\nu}$ denote the proportionality constant.
Then, together with Eqs. (\ref{eq:str_jor_2_1+}) and (\ref{eq:str_jor_2_2+}), we can see the following relations hold:
\begin{eqnarray}
a^{(\rho,\sigma,\rho)}_{\mu,\nu}e_\rho&=&e_\rho \{s^{(\rho,\sigma)}_\mu,s^{(\rho,\sigma)}_\nu \}
 e_\rho
\nonumber\\
&=&
(e_\rho s^{(\rho,\sigma)}_\mu e_\sigma)
(e_\rho s^{(\rho,\sigma)}_\nu e_\sigma)^\dagger
+
(e_\rho s^{(\rho,\sigma)}_\nu e_\sigma)
(e_\rho s^{(\rho,\sigma)}_\mu e_\sigma)^\dagger ,
\nonumber\\
\label{eq:tmp_1_1}
a^{(\rho,\sigma,\sigma)}_{\mu,\nu}e_\sigma&=&e_\sigma \{s^{(\rho,\sigma)}_\mu,s^{(\rho,\sigma)}_\nu \}
 e_\sigma
\nonumber\\&=&
(e_\rho s^{(\rho,\sigma)}_\nu e_\sigma)^\dagger
(e_\rho s^{(\rho,\sigma)}_\mu e_\sigma)
+
(e_\rho s^{(\rho,\sigma)}_\mu e_\sigma)^\dagger
(e_\rho s^{(\rho,\sigma)}_\nu e_\sigma),
\nonumber\\
\label{eq:tmp_1_2}
a^{(\rho,\sigma,\rho)}_{\mu,\nu}e_\rho
+
a^{(\rho,\sigma,\sigma)}_{\mu,\nu}e_\sigma
&=&
\{s^{(\rho,\sigma)}_\mu,s^{(\rho,\sigma)}_\nu \}.
\label{eq:tmp_1_3}
\end{eqnarray}
Complex conjugates are included in the first two equations simply for the convenience for the next step, recalling that the algebra $J$ consists of only Hermitian operators. By setting $\nu=\mu$ in Eqs. (\ref{eq:tmp_1_1}) and (\ref{eq:tmp_1_2}), we have 
\begin{eqnarray}
a^{(\rho,\sigma,\rho)}_{\mu,\mu}e_\rho
&=&
2(e_\rho s^{(\rho,\sigma)}_\mu e_\sigma)
(e_\rho s^{(\rho,\sigma)}_\mu e_\sigma)^\dagger,
\\
a^{(\rho,\sigma,\sigma)}_{\mu,\mu}e_\sigma
&=&
2(e_\rho s^{(\rho,\sigma)}_\mu e_\sigma)^\dagger
(e_\rho s^{(\rho,\sigma)}_\mu e_\sigma).
\end{eqnarray}
Since $s^{(\rho,\sigma)}_\mu$ is a Hermitian operator and has the form (\ref{eq:str_jor_2_2+}),
the operator $e_\rho s^{(\rho,\sigma)}_\mu e_\sigma$ is not equal to zero.  This fact and the above two relations  guarantee that the rank of $e_\rho$ is equal to that of $e_\sigma$, thus
\begin{equation}\label{tre=trs}
{\rm Tr}e_\rho = {\rm Tr}e_\sigma.
\end{equation}
It is clear from the RHSs of Eqs. (\ref{eq:tmp_1_1}) and (\ref{eq:tmp_1_2}) that their traces are equal, i.e.,
\begin{equation}\label{a=a}
a^{(\rho,\sigma,\rho)}_{\mu,\nu}{\rm Tr}e_\rho = a^{(\rho,\sigma,\sigma)}_{\mu,\nu}{\rm Tr}e_\sigma.
\end{equation}
Equation (\ref{eq:tmp_1_3}) and the orthogonality of $\{s^{(\rho,\sigma)}_\mu\}$ imply
\begin{eqnarray}\label{ae_orthogonality}
a^{(\rho,\sigma,\rho)}_{\mu,\nu}{\rm Tr}e_\rho+a^{(\rho,\sigma,\sigma)}_{\mu,\nu}{\rm Tr}e_\sigma&=&
{\rm Tr} \{s^{(\rho,\sigma)}_\mu, s^{(\rho,\sigma)}_\nu\}
\nonumber \\
&=& 2{\rm Tr} s^{(\rho,\sigma)}_\mu s^{(\rho,\sigma)}_\nu \nonumber \\
&=& 2\delta_{\mu,\nu}{\rm Tr}(e_\rho+e_\sigma).
\end{eqnarray}
It then follows from Eqs. (\ref{tre=trs}), (\ref{a=a}), and (\ref{ae_orthogonality}) that $a^{(\rho,\sigma,\rho)}_{\mu,\nu}=a^{(\rho,\sigma,\sigma)}_{\mu,\nu}=2\delta_{\mu,\nu}$, hence we obtain Eq. (\ref{eq:str_jor_2_2}) from Eq. (\ref{eq:tmp_1_3}). As described above (before the proof of this lemma), Eq. (\ref{eq:str_jor_2_2}) also leads to Eq. (\ref{eq:str_jor_2_3+}).

It still has to be shown that $\{e_\rho\}_\rho\cup \{s_\mu^{(\rho,\sigma)}\}_{\rho<\sigma,\mu}$ forms a basis of $J$.
From Eqs.  (\ref{eq:str_jor_2_1+}),  (\ref{eq:str_jor_2_2+}), and that $\{s_\mu^{(\rho,\sigma)}\}_{\mu}$ are orthogonal to each other, we can check that these elements are linearly independent.
As shown above, $\{e_\rho\}$ and $\{s_\mu^{(\rho,\sigma)}\}_{\mu}$ are the bases of $E_\rho$ and $S^{(\rho,\sigma)}$, respectively, and any elements in $E_\rho$ and $S^{(\rho,\sigma)}$ are in $J$. Therefore, all we have to check is that every element in $J$ can be expressed as a linear combination of $\{e_\rho\}_\rho \cup\{s_\mu^{(\rho,\sigma)}\}_{\rho<\sigma ,\mu}$.
Let $h$ be an element in $J$ and define $I=\sum_{\rho}e_{\rho}\in J$. 
Noting Eq. (\ref{eq:str_jor_2_1+}), which implies $I^2=I$, and that $I$ and $h$ are Hermitian, we can see the following relations:
\begin{eqnarray}
\{I,h\}-IhI-h &=& \frac 32\{h,I\}-h-\frac12\{\{h,I\},I\}, \label{eq:tmp_ihi01}
\\
(hI-IhI)(hI-IhI)^\dagger &=&
-\{\{h,h\},I\}
+\frac32\{\{h,I\},h\}
+\frac14\{\{\{h,I\},\{h,I\}\},I\}
\nonumber\\
&& {}+\frac14\{\{\{h,h\},I\},I\}
-\{\{h,I\},\{h,I\}\}.\label{eq:tmp_ihi02}
\end{eqnarray}
The RHSs contain only anti-commutators of elements in $J$, thus they are in $J$.
Since $e_\rho I=I e_\rho=e_\rho$, the LHSs of Eqs. (\ref{eq:tmp_ihi01}) and (\ref{eq:tmp_ihi02}) anti-commute with $e_\rho$ for any $\rho$.
By definition of $e_\rho$, such operators should be equal to 0, i.e., $\{I,h\}-IhI-h=0$ and $hI-IhI=0$. Then, obviously
\begin{equation}
\{I,h\}-IhI-h-(hI-IhI)-(hI-IhI)^\dagger = IhI-h=0
\end{equation}
holds, that is, $IhI=h$.
Resubstituting $I=\sum_\rho e_\rho$, we see
\begin{equation}
h=IhI=\sum_{\rho}e_{\rho}he_{\rho}+\sum_{\sigma>\rho}(e_{\rho}he_{\sigma}+e_{\sigma}he_{\rho}),
\end{equation}
which means that $h$ is in the space spanned by $E_\rho$ and $S^{(\rho,\sigma)}$.

Next, we focus on the dimension $\chi_{\rho,\sigma}$ of the space $S^{(\rho,\sigma)}$, and will show that
if $\chi_{\rho,\sigma}\neq 0$, $\chi_{\rho,\tau}$ is not larger than  $\chi_{\sigma,\tau}$ for mutually distinct $\sigma$, $\rho$ and $\tau$.
Due to the symmetry with respect to the permutation of $\sigma,\rho,\tau$, this means that
if  $\chi_{\rho,\sigma}\neq 0$ and $\chi_{\rho,\tau}\neq 0$, these two and $\chi_{\sigma,\tau}$ are equal to each other. To this end, let us pick a basis  $\{s^{(\rho,\tau)}_\mu\}_\mu$ of $S^{(\rho,\tau)}$ and a normalized element  $s^{(\rho,\sigma)}_0$ in  $S^{(\rho,\sigma)}$.
Defining $s_{\mu}^{\prime(\sigma,\tau)}:=\{s_{0}^{(\rho,\sigma)},s_{\mu}^{(\rho,\tau)}\}\in J$ for $\mu\in \{0,\cdots ,\chi_{\rho,\tau}-1\}$, we see
\begin{eqnarray}\label{sprime_offdiag}
 &&     e_{\sigma}s_{\mu}^{\prime(\sigma,\tau)}e_{\tau}+e_{\tau}s_{\mu}^{\prime(\sigma,\tau)}e_{\sigma}
\nonumber\\
    &=& e_{\sigma}(s_{0}^{(\rho,\sigma)}s_{\mu}^{(\rho,\tau)}+s_{\mu}^{(\rho,\tau)}s_{0}^{(\rho,\sigma)})e_{\tau}+e_{\tau}(s_{0}^{(\rho,\sigma)}s_{\mu}^{(\rho,\tau)}+s_{\mu}^{(\rho,\tau)}s_{0}^{(\rho,\sigma)})e_{\sigma}
\nonumber\\
    &=& e_{\sigma}s_{0}^{(\rho,\sigma)}e_{\rho}s_{\mu}^{(\rho,\tau)}e_{\tau}+e_{\tau}s_{\mu}^{(\rho,\tau)}e_{\rho}s_{0}^{(\rho,\sigma)}e_{\sigma}
\nonumber\\
    &=& (e_{\sigma}s_{0}^{(\rho,\sigma)}e_{\rho}+e_{\rho}s_{0}^{(\rho,\sigma)}e_{\sigma})(e_{\rho}s_{\mu}^{(\rho,\tau)}e_{\tau}+e_{\tau}s_{\mu}^{(\rho,\tau)}e_{\rho})
+(e_{\rho}s_{\mu}^{(\rho,\tau)}e_{\tau}+e_{\tau}s_{\mu}^{(\rho,\tau)}e_{\rho})(e_{\sigma}s_{0}^{(\rho,\sigma)}e_{\rho}+e_{\rho}s_{0}^{(\rho,\sigma)}e_{\sigma})
\nonumber\\
    &=& \{s_{0}^{(\rho,\sigma)},s_{\mu}^{(\rho,\tau)}\}=s_{\mu}^{\prime(\sigma,\tau)}.
\end{eqnarray}
In the second and the fourth equalities, Eqs. (\ref{eq:str_jor_2_1+}) and (\ref{eq:str_jor_2_2+}) are used, and the third equality can be verified by expanding the RHS, using Eq. (\ref{eq:str_jor_2_1+}). Equation (\ref{sprime_offdiag}) implies that $s_{\mu}^{\prime(\sigma,\tau)}$ is an element in $S^{(\sigma,\tau)}$.
In addition, the following relation can also be verified in a similar manner:
\begin{eqnarray}
&&\{s_{\mu}^{\prime(\sigma,\tau)},s_{\nu}^{\prime(\sigma,\tau)}\}
\nonumber\\
 &=& \{\{s_{0}^{(\rho,\sigma)},s_{\mu}^{(\rho,\tau)}\},\{s_{0}^{(\rho,\sigma)},s_{\nu}^{(\rho,\tau)}\}\}
\nonumber\\
    &=&
 s_{0}^{(\rho,\sigma)}s_{\mu}^{(\rho,\tau)}s_{0}^{(\rho,\sigma)}s_{\nu}^{(\rho,\tau)}
+s_{0}^{(\rho,\sigma)}s_{\nu}^{(\rho,\tau)}s_{0}^{(\rho,\sigma)}s_{\mu}^{(\rho,\tau)}
+s_{\mu}^{(\rho,\tau)}s_{0}^{(\rho,\sigma)}s_{\nu}^{(\rho,\tau)}s_{0}^{(\rho,\sigma)}
\nonumber\\&&{}
+s_{\nu}^{(\rho,\tau)}s_{0}^{(\rho,\sigma)}s_{\mu}^{(\rho,\tau)}s_{0}^{(\rho,\sigma)}
+s_{\mu}^{(\rho,\tau)}s_{0}^{(\rho,\sigma)}s_{0}^{(\rho,\sigma)}s_{\nu}^{(\rho,\tau)}
+s_{\nu}^{(\rho,\tau)}s_{0}^{(\rho,\sigma)}s_{0}^{(\rho,\sigma)}s_{\mu}^{(\rho,\tau)}
+s_{0}^{(\rho,\sigma)}\{s_{\nu}^{(\rho,\tau)},s_{\mu}^{(\rho,\tau)}\}s_{0}^{(\rho,\sigma)}
\nonumber\\
    &=& s_{\mu}^{(\rho,\tau)}(e_{\rho}+e_{\sigma})s_{\nu}^{(\rho,\tau)}+s_{\nu}^{(\rho,\tau)}(e_{\rho}+e_{\sigma})s_{\mu}^{(\rho,\tau)}+2\delta_{\mu,\nu}s_{0}^{(\rho,\sigma)}(e_{\rho}+e_{\tau})s_{0}^{(\rho,\sigma)}
\nonumber\\
    &=& e_{\tau}s_{\mu}^{(\rho,\tau)}e_{\rho}s_{\nu}^{(\rho,\tau)}e_{\tau}+e_{\tau}s_{\nu}^{(\rho,\tau)}e_{\rho}s_{\mu}^{(\rho,\tau)}e_{\tau}+2\delta_{\mu,\nu}e_{\sigma}s_{0}^{(\rho,\sigma)}e_{\rho}s_{0}^{(\rho,\sigma)}e_{\sigma}
\nonumber\\
    &=& 2\delta_{\mu,\nu}(e_{\sigma}+e_{\tau}).
\end{eqnarray}
This means that the operators $\{s_{\mu}^{\prime(\sigma,\tau)}\}_{\mu\in\{0,1,\cdots,\chi_{\rho,\tau}-1\}}$ are all non-zero and linearly independent. Thus, the number of linearly independent $s_\mu^{\prime(\sigma,\tau)}$ is larger than or equal to $\chi_{\rho,\tau}$, i.e., $\chi_{\rho,\tau}\leq\chi_{\sigma,\tau}$. Hence, $\chi_{\rho,\sigma}=\chi_{\rho,\tau}=\chi_{\sigma,\tau}$ as mentioned above. 

The above argument, $\chi_{\rho,\sigma}=\chi_{\rho,\tau}=\chi_{\sigma,\tau}$ if  $\chi_{\rho,\sigma}\neq 0$ and $\chi_{\rho,\tau}\neq 0$,
shows that the set $\{0,1,\cdots, \rho_0-1\}$ can be decomposed into non-overlapping subsets $\Gamma_j$, i.e.,  $\{0,1,\cdots, \rho_0-1\}=\bigoplus_j\Gamma_j$. Grouping for each $\Gamma_j$ is done so that $\chi_{\rho,\sigma}\neq 0$ if and only if both $\rho$ and $\sigma$ are in a single set $\Gamma_j$. Within the same $\Gamma_j$, all $\chi_{\rho,\sigma}$ are the same, namely, $\chi_{\rho,\sigma}=\chi_{\rho^\prime,\sigma^\prime}$ for any $\rho, \sigma, \rho^\prime, \sigma^\prime \in \Gamma_j$.
To prove this statement, we define an equivalence relation $\sim$ such that the relation $\rho\sim\sigma$ holds if and only if $\chi_{\rho,\sigma}\neq0$ or $\rho=\sigma$.
The reflexivity and the symmetry relations hold trivially, and the transitivity relation is guaranteed by the above argument, where we have seen $(\rho\sim\sigma)\land (\rho\sim\tau) \Rightarrow \sigma\sim\tau$.
Noting the fact $(\rho\sim\sigma) \land (\rho\sim\tau) \Rightarrow \chi_{\sigma,\tau}=\chi_{\rho,\sigma}=\chi_{\rho,\tau}$,
we can group the indices that are connected with the equivalence relation "$\sim$" as $\{\Gamma_j\}_j$. Then, by rewriting $\chi_j:=\chi_{\rho,\sigma}$ for $\rho,\sigma\in\Gamma_j$, all properties of the basis stated in Lemma \ref{sec:Jor_alg_rel} about the formally real Jordan algebra have been derived.
\end{proof}

The relations between the basis vectors shown in Lemma \ref{sec:Jor_alg_rel} imply a very unique structure of the formally real Jordan algebra. They then allow us to obtain explicit expressions of $J$ on the space of Hermitian operators with an appropriate basis.
\begin{lemma}
\label{sec:rep_Jor}
Suppose that  $J$ is a representation of the Jordan algebra on a Hermitian-operator space that includes the identity operator, i.e. $J$ is a linear space of Hermitian operators such that $\{J,J\}\subseteq J$ and $Id\in J$.
Then, $J$ is a direct sum of simple Jordan algebras, each of which has one of the forms {\rm (\ref{eq:JordanAlg_1})-(\ref{eq:JordanAlg_6})} with an appropriate basis.
\end{lemma}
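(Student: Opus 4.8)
The plan is to feed the abstract basis supplied by Lemma~\ref{sec:Jor_alg_rel} into the representation theory of Clifford algebras (for the spin factors) and into Jordan coordinatization (for the matrix algebras), and then to transcribe each simple summand into one of the forms (\ref{eq:JordanAlg_1})--(\ref{eq:JordanAlg_6}). First I would apply Lemma~\ref{sec:Jor_alg_rel} to $J$, obtaining the orthogonal idempotents $\{e_\rho\}_\rho$, the elements $\{s_\mu^{(\rho,\sigma)}\}$, the partition $\{0,\dots,\rho_0-1\}=\bigsqcup_j\Gamma_j$, and the integers $\chi_j$. Since $e_\rho e_\sigma=\delta_{\rho,\sigma}e_\rho$ by Eq.~(\ref{eq:str_jor_2_1+}) and each $s_\mu^{(\rho,\sigma)}$ is supported on $\mathrm{ran}(e_\rho)\oplus\mathrm{ran}(e_\sigma)$ by Eq.~(\ref{eq:str_jor_2_2+}), the projections $I_j:=\sum_{\rho\in\Gamma_j}e_\rho\in J$ are mutually orthogonal; moreover the computation inside the proof of Lemma~\ref{sec:Jor_alg_rel} gives $\bigl(\sum_\rho e_\rho\bigr)h\bigl(\sum_\rho e_\rho\bigr)=h$ for all $h\in J$, and together with $Id\in J$ this forces $\sum_\rho e_\rho=Id$. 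Hence $J=\bigoplus_j J_j$, where $J_j$ is the simple Jordan algebra spanned by the basis elements whose indices lie in $\Gamma_j$, acting on $\mathcal H_{E_j}^\ast:=\mathrm{ran}(I_j)$, and it suffices to classify one connected block $J_j$; the fine decomposition of $\mathcal H_{E_j}^\ast$ in Eq.~(\ref{eq:str_sub_e}) will fall out of the classification.

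Next I would split into cases according to $|\Gamma_j|$. If $|\Gamma_j|=1$, then $J_j=\mathbb R\,e_\rho$ is one-dimensional, i.e.\ $\mathfrak R$ with $\mathcal H_{A_j}=\mathrm{ran}(e_\rho)$. If $|\Gamma_j|=2$, say $\Gamma_j=\{\rho,\sigma\}$, put $t:=e_\rho-e_\sigma$; by Eqs.~(\ref{eq:str_jor_2_2}) and (\ref{eq:str_jor_2_3}) the operators $t,s_0^{(\rho,\sigma)},\dots,s_{\chi_j-1}^{(\rho,\sigma)}$ are $\chi_j+1$ pairwise anticommuting Hermitian operators, each squaring to $I_j$, and $J_j$ is their real span together with $Id$ --- a spin factor of dimension $n:=\chi_j+2$. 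The representation theory of the complex Clifford algebra on $\chi_j+1$ anticommuting involutions then identifies $\mathcal H_{E_j}^\ast$ with $\mathcal H_{A_j}\otimes(\mathbb C^2)^{\otimes(\lceil n/2\rceil-1)}$ and writes the generators in the Jordan--Wigner string form of Eq.~(\ref{eq:JordanAlg_5}) when $\chi_j+1$ is even, or of Eq.~(\ref{eq:JordanAlg_6}) when $\chi_j+1$ is odd; in the latter case the two inequivalent irreducible representations combine into a Hermitian involution $Z_j^\ast$ grading $\mathcal H_{A_j}$ into its $\pm1$ eigenspaces. This reproduces the $\mathfrak S_n$ entries of Eq.~(\ref{eq:str_sub_e}).

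For $|\Gamma_j|\ge3$ I would run the Jordan coordinatization argument. The operators $u_\mu^{(\rho,\sigma)}:=e_\rho s_\mu^{(\rho,\sigma)}e_\sigma$ satisfy, by Eqs.~(\ref{eq:str_jor_2_2}) and (\ref{eq:str_jor_2_3+}), the relations $u_\mu^{(\rho,\sigma)}(u_\nu^{(\rho,\sigma)})^\dagger+u_\nu^{(\rho,\sigma)}(u_\mu^{(\rho,\sigma)})^\dagger=2\delta_{\mu,\nu}e_\rho$ and the conjugate identity with $e_\sigma$, so each $u_\mu^{(\rho,\sigma)}$ is a unitary from $\mathrm{ran}(e_\sigma)$ onto $\mathrm{ran}(e_\rho)$; in particular all blocks $\mathrm{ran}(e_\rho)$, $\rho\in\Gamma_j$, have the same dimension and can be identified, via the fixed unitaries $u_0^{(\rho,0)}$, with one space $\mathcal H_{A_j}\otimes\mathcal H_{D_j}$, so that $\mathcal H_{E_j}^\ast\cong\mathcal H_{A_j}\otimes\mathcal H_{D_j}\otimes\mathbb C^{\gamma}$ with $\gamma:=|\Gamma_j|$, $e_\rho\mapsto Id\otimes Id\otimes\ket{\rho}\bra{\rho}$ and $s_0^{(\rho,0)}\mapsto Id\otimes Id\otimes X_{\rho,0}$. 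The remaining generators $s_\mu^{(\rho,0)}$ ($\mu\ge1$) become operators of the form $D_\mu\otimes(\ket{\rho}\bra{0}+\ket{0}\bra{\rho})$, and the anticommutation relations together with the associativity (triangle) relation coming from the cycle $0\to\rho\to\tau\to0$ --- where $|\Gamma_j|\ge3$ is essential --- force the $D_\mu$ to generate an associative $\ast$-subalgebra, the coordinate algebra. Since $J$ is realized by genuine operators, hence is a special Jordan algebra, the exceptional octonionic coordinate algebra cannot appear, leaving only $\mathbb R$ ($\chi_j=1$, giving $\mathfrak M_\gamma^{(1)}$), $\mathbb C$ ($\chi_j=2$, with the imaginary unit realized through a Hermitian involution $Z_j^\ast$ on $\mathcal H_{A_j}$, giving $\mathfrak M_\gamma^{(2)}$), or $\mathbb H$ ($\chi_j=4$, with $\mathcal H_{D_j}=\mathbb C^2=\mathcal H_{Q_j^{(1)}}$, giving $\mathfrak M_\gamma^{(4)}$); no other value of $\chi_j$ is possible. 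Identifying $\mathcal H_{Q_j}=\mathbb C^\gamma$ and, where present, $\mathcal H_{Q_j^{(1)}}=\mathbb C^2$ yields the $\mathfrak M_\gamma^{(k)}$ rows of Eq.~(\ref{eq:str_sub_e}).

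I expect the main obstacle to be this last coordinatization step: showing that the off-diagonal generators for all pairs $(\rho,\sigma)$ can be normalized consistently, that the induced coordinate multiplication is associative (equivalently, that $J_j$ carries no exceptional octonionic part), and hence that $\chi_j\in\{1,2,4\}$. This is essentially Jacobson's coordinatization theorem combined with the non-speciality of the $3\times3$ octonionic Hermitian algebra, so one option is to invoke those classical facts via \cite{JNW34}; the alternative is to carry out the bookkeeping directly with the matrix units $u_\mu^{(\rho,\sigma)}$. The spin-factor case is comparatively routine, the only delicate point there being the emergence of $Z_j^\ast$ when the number of anticommuting involutions is odd.
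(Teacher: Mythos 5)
Your proposal is correct in substance but takes a genuinely different route from the paper's. Both start the same way: apply Lemma \ref{sec:Jor_alg_rel}, deduce $\sum_\rho e_\rho=Id$ from $Id\in J$, and split $J$ and $\mathcal H_E$ into blocks labelled by the classes $\Gamma_j$. From there the paper proceeds entirely by hand: it proves, by induction on $\chi$ and by explicitly constructing an isometry $U$, that the corner data $\{e_\rho\}\cup\{s_\mu^{(0,1)}\}$ can be conjugated into the Jordan--Wigner/Pauli-string normal form of Eqs. (\ref{eq:str_jor_m_03})--(\ref{eq:str_jor_m_07}) for every $\gamma\ge 2$ and every $\chi$, and only then distinguishes $\gamma=2$ (spin factors $\mathfrak S_{\chi+2}$) from $\gamma\ge 3$, where closure of $J$ under the anticommutator of the regenerated elements $s_1^{\prime(0,1)}$ and $s_2^{\prime(0,2)}$ (Eq. (\ref{sprime_str_12})) forces $\chi\in\{1,2,4\}$ and $Z^*\propto Id$ in the quaternionic case. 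You instead handle the two-idempotent blocks by Clifford representation theory ($\chi+1$ anticommuting involutions; a unique irreducible representation when $\chi+1$ is even, two chirality sectors recorded by $Z^*$ when it is odd) and the $\gamma\ge 3$ blocks by coordinatization: the corner maps $e_\rho s_\mu^{(\rho,\sigma)}e_\sigma$ identify the blocks, the coordinate operators span an associative real division algebra (formal reality gives invertibility, a third idempotent gives multiplicative closure -- the step you rightly flag as the crux), Frobenius gives $\mathbb R,\mathbb C,\mathbb H$, and the complex representation theory of these algebras then produces the multiplicity space, the involution $Z^*$, and the qubit factor $\mathcal H_{Q^{(1)}}$ in Eqs. (\ref{eq:JordanAlg_2})--(\ref{eq:JordanAlg_4}). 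Your route is shorter and more conceptual but leans on classical machinery (Clifford representations, Jacobson-type coordinatization, Frobenius) or on the normalization bookkeeping you yourself identify; the paper's construction is longer but self-contained, and the explicit forms it produces, Eqs. (\ref{sprime_str_0})--(\ref{sprime_str_chim1}), are reused verbatim in later lemmas. One small correction: after identifying the blocks, $s_\mu^{(\rho,0)}$ takes the form $v_\mu\otimes\ket{\rho}\bra{0}+v_\mu^\dagger\otimes\ket{0}\bra{\rho}$ with $v_\mu$ unitary but in general non-hermitian (for $\mathfrak M_\gamma^{(2)}$ one finds $v_1=-iZ^*$), not $D_\mu\otimes(\ket{\rho}\bra{0}+\ket{0}\bra{\rho})$ with $D_\mu$ hermitian; the coordinate algebra should be built from the $v_\mu$'s, so this is a slip of notation rather than of substance.
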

We assign the characters $\mathfrak R$, $\mathfrak M_\gamma^{(k)}$ and $\mathfrak S_n$ to the possible simple Jordan algebras, following the notations in \cite{JNW34}. 

\begin{proof}
From Lemma \ref{sec:Jor_alg_rel}, we can choose a basis  $ \{e_{\rho} \}_{\rho\in\{0,1,\cdots, \rho_0-1\}}\cup\{s^{(\rho,\sigma)}_{\mu}\}_{(\rho,\sigma,\mu)\in \Omega}$ which satisfies  Eqs. (\ref{eq:str_jor_2_1})-(\ref{eq:str_jor_2_3}).

First,  we show that $\sum_\sigma {e_\sigma}$ is equal to the identity.
Since $Id\in J$, $Id$ can be expressed as a linear combination of the basis vectors.
This fact and the relations (\ref{eq:str_jor_2_1+}) and (\ref{eq:str_jor_2_2+}) indicate that  $(\sum_\sigma {e_\sigma})Id=Id$. On the other hand, obviously $(\sum_\sigma {e_\sigma})Id=(\sum_\sigma {e_\sigma})$ also holds, thus these lead to
\begin{equation}
\sum_{\rho}e_{\rho} = Id.
\label{eq:str_jor_2_4}
\end{equation}

 From the properties  (\ref{eq:str_jor_2_1+}) and  (\ref{eq:str_jor_2_4}), we can define a basis
$\{\ket{k,\rho}\}$ of the complex linear space $\mathcal{H}_{E}$
  such that $\ket{k,\rho}$ is the $k$-th basis vector in the space projected by $e_{\rho}$, where the range of the parameter $k$ is $\{0,1, \cdots, \mathrm{rank}e_\rho-1\}$. Next, we define a $j$-dependent subspace $\mathcal{H}_{E_{j}}$
  as a space spanned by $\{\ket{k,\rho}\}_{\rho\in\Gamma_{j},k}$. Since $\oplus_{j}\Gamma_{j}=\{0,1,\cdots,\rho_0-1\}$,
the space $\mathcal{H}_{E}$ can be expressed as a direct sum of $\mathcal{H}_{E_{j}}$, i.e.
$\oplus_{j}\mathcal{H}_{E_{j}}=\mathcal{H}_{E}$.
The basis of $J$ can also be divided into subsets, each of which is characterized by $j$, that is,  
$\{e_{\rho}\}_{\rho\in \Gamma_j}\cup\{s_{\mu}^{(\rho,\sigma)}\}_{\rho<\sigma\in \Gamma_j,\mu\in\{0,1,\cdots ,\chi_j-1\}}$. 
From the relations (\ref{eq:str_jor_2_1+}) and (\ref{eq:str_jor_2_2+}),
we can check that
any range of elements in 
the subset $\{e_{\rho}\}_{\rho\in \Gamma_j}\cup\{s_{\mu}^{(\rho,\sigma)}\}_{\rho<\sigma\in \Gamma_j,\mu\in\{0,1,\cdots ,\chi_j-1\}}$ is in the space $\mathcal H_{E_j}$.
Therefore, $J$ has a direct sum structure, and
all we have to check is that a subalgebra generated by $\{e_{\rho}\}_{\rho\in \Gamma_j}\cup\{s_{\mu}^{(\rho,\sigma)}\}_{\rho<\sigma\in \Gamma_j,\mu\in\{0,1,\cdots ,\chi_j-1\}}$  has one of the structures (\ref{eq:JordanAlg_1})$\sim$(\ref{eq:JordanAlg_6}) on the space $\mathcal H_{E_j}$. In the following, we consider a certain $j$, so that we can omit the index $j$, and relabel the indices $\rho$ and $\sigma$ for simplicity such that $\Gamma_{j}=\{0,1,\cdots,\gamma_{j}-1\}$.

If $\gamma=1$, the subalgebra consists of only the projection operator $e_{0}$. Therefore, this situation corresponds to Eq. (\ref{eq:JordanAlg_1}), i.e. the corresponding simple Jordan algebra has the structure $\mathfrak R$. We now assume $\gamma\geq 2$.

Due to Eq. (\ref{tre=trs}), the range of the parameter $k$ is independent of $\rho$ in a subalgebra for a fixed $j$. This implies that the space spanned by $\{\ket{k,\rho}\}_{k,\rho}$ can be regarded to have a direct product structure, i.e., $\{\ket{k}\otimes\ket{\rho}\}_{k,\rho}$. We can show that there exists a unitary transformation $U$ that connects these two structures, such that
\begin{eqnarray}
Us_{0}^{(\rho,\rho+1)}U^{\dagger}&  =&  Id\otimes X_{\rho,\rho+1},
\label{eq:str_jor_m_01}
\\
Ue_{\rho}U^{\dagger}    &=& Id\otimes\ket{\rho}\bra{\rho}.
\label{eq:str_jor_m_02}
\end{eqnarray}
Equation (\ref{eq:str_jor_2_2+}) indicates that we can write $s_{0}^{(\rho,\rho+1)}$ as $A_{\rho}+A_{\rho}^{\dagger}$ where $A_{\rho}$ is an element in the space spanned by $\{\ket{k,\rho+1}\bra{k^{\prime},\rho}\}_{k,k^{\prime}}$.
By setting $\mu=\nu=0$ and $\sigma=\rho+1$ in Eq. (\ref{eq:str_jor_2_3}), i.e., $(s_{0}^{(\rho,\rho+1)})^2=e_{\rho}+e_{\rho+1}$, we have
\begin{equation}
A_{\rho}^{\dagger}A_{\rho}=e_{\rho}, \;\;\mbox{and} \;\;
A_{\rho}A_{\rho}^{\dagger}=e_{\rho+1}.
\end{equation}
Therefore, we can define $U$ as $U :=  \sum_{\rho=0}^{\gamma-1}Id\otimes\ket{\rho}\bra{0}A_{0}^{\dagger}\cdots A_{\rho-2}^{\dagger}A_{\rho-1}^{\dagger}$, with which we can derive Eqs. (\ref{eq:str_jor_m_01}) and (\ref{eq:str_jor_m_02}).

Now we focus on the structure of $\{s_{\mu}^{(0,1)}\}_{\mu\in\{0,1,\cdots,\chi-1\}}$. We will show that an isometry $U$ can be constructed such that, in addition to Eqs. (\ref{eq:str_jor_m_01}) and (\ref{eq:str_jor_m_02}), the following are satisfied:
\begin{eqnarray}
 Us_{0}^{(\rho,\rho+1)}U^{\dagger}  &=& Id\otimes\overbrace{Id\otimes\cdots\otimes Id}^{\lfloor\frac{\chi-1}{2}\rfloor}\otimes X_{\rho,\rho+1},
\label{eq:str_jor_m_03}\\
Us_{2n'-1}^{(0,1)}U^{\dagger}   &=& Id\otimes\overbrace{Id\otimes\cdots\otimes Id}^{\lfloor\frac{\chi-1}{2}\rfloor-n'}\otimes Z\otimes\overbrace{Y\otimes\cdots\otimes Y}^{n'-1}\otimes Y_{0,1} \;\;\;
{\rm for}\ \lfloor\frac{\chi-1}{2}\rfloor\geq n'\geq1,
%\nonumber\\
\label{eq:str_jor_m_04}\\
Us_{2n'}^{(0,1)}U^{\dagger} &=& Id\otimes\overbrace{Id\otimes\cdots\otimes Id}^{\lfloor\frac{\chi-1}{2}\rfloor-n'}\otimes X\otimes\overbrace{Y\otimes\cdots\otimes Y}^{n'-1}\otimes Y_{0,1} \;\;\;
{\rm for}\ \lfloor\frac{\chi-1}{2}\rfloor\geq n'\geq1,
%\nonumber\\
\label{eq:str_jor_m_05}\\
Us_{\chi-1}^{(0,1)}U^{\dagger}  &=& Z^{*}\otimes\overbrace{Y\otimes\cdots\otimes Y}^{\lfloor\frac{\chi-1}{2}\rfloor}\otimes Y_{0,1}\;\;\mbox{when $\chi$ is even},
\label{eq:str_jor_m_06}\\
Ue_{\rho}U^{\dagger}    &=& Id\otimes\overbrace{Id\otimes\cdots\otimes Id}^{\lfloor\frac{\chi-1}{2}\rfloor}\otimes\ket{\rho}\bra{\rho},
\label{eq:str_jor_m_07}
\end{eqnarray}
 where $Z^{*}$ is a Hermitian matrix whose eigenvalues are $1$ or $-1$ only.
As seen in these equations, the space of the image of $U$ has a direct product structure consisting of a single arbitrary dimensional space, $\lfloor\frac{\chi-1}{2}\rfloor$ of $2$-dimensional spaces, and a single $\gamma$-dimensional space. The basis is now denoted as $\{\ket{a}\otimes\ket{b_{\lfloor\frac{\chi-1}{2}\rfloor-1}}\otimes\cdots\otimes\ket{b_{0}}\otimes\ket{\rho}\}$, where the ranges of indices are $a \in\{0,1,\cdots a_0-1\},b_{m}\in\{0,1\},\rho\in\{0,1,\cdots,\gamma-1\}$ with a certain integer $a_0$.

We shall give a proof of the existence of $U$ by induction in terms of $\chi$. When $\chi=1$, Eqs. (\ref{eq:str_jor_m_03}) and (\ref{eq:str_jor_m_07}) are simply a paraphrase of Eqs. (\ref{eq:str_jor_m_01}) and (\ref{eq:str_jor_m_02}).

Assume that the proposition holds when $\chi$ is an odd number $2n-1$, and consider the case of $\chi=2n$.
By this assumption, even if $\chi=2n$, there exists an isometry $U$ for the first $2n-1$ $s_\mu^{(0,1)}$'s, i.e., for $\mu\in\{0,1,...,2n-2\}$,
such that Eqs. (\ref{eq:str_jor_m_03})-(\ref{eq:str_jor_m_05}) and (\ref{eq:str_jor_m_07}) hold. 
Then, we attempt to show that Eq. (\ref{eq:str_jor_m_06}) also holds for the remaining basis, $s^{(0,1)}_{2n-1}$. 

Because $s_{\chi-1}^{(0,1)}$ has nonzero entries for the $(0,1)$-th and the $(1,0)$-th off-diagonal blocks due to Eq. (\ref{eq:str_jor_2_2+}), $Us_{\chi-1}^{(0,1)}U^{\dagger}$ should have $X_{0,1}$ and/or $Y_{0,1}$ components for the rightmost space spanned by $\ket{\rho}$. Thus, it can be written as a linear combination of terms, each of which has the form
\begin{equation}
V\otimes W_{n-1}\otimes\cdots\otimes W_{1}\otimes Y_{0,1}
\makebox{ or }
V\otimes W_{n-1}\otimes\cdots\otimes W_{1}\otimes X_{0,1},
\label{eq:tmp_2_1}
\end{equation}
where  $W_m\in\{Id,X,Y,Z\}$, and $V$ is an arbitrary Hermitian operator.
Equation (\ref{eq:str_jor_2_2}) for $\rho=0$, $\sigma=1$, $\mu=0$ and $\nu=\chi-1$, i.e., $\{s_0^{(0,1)},s_{\chi-1}^{(0,1)}\}=0$, guarantees that the second type in Eq. (\ref{eq:tmp_2_1}) must be $0$.
Similarly, $\{s_{\mu_0}^{(0,1)},s_{\chi-1}^{(0,1)}\}=0$ for $0<\mu_0<\chi-1$ implies that the terms of the form
\begin{eqnarray}
&&V\otimes W_{n-1}\otimes\cdots \otimes W_{\frac {\mu_0+3}2}\otimes Z\otimes \overbrace{Y\otimes\cdots\otimes Y}^{\frac{\mu_0-1}{2}}\otimes Y_{0,1}
\makebox{ or}\\
&&V\otimes W_{n-1}\otimes\cdots \otimes W_{\frac {\mu_0}2+2}\otimes X\otimes\overbrace{Y\otimes\cdots\otimes Y}^{\frac{\mu_0}{2}-1}\otimes Y_{0,1}
\end{eqnarray}
have no contributions to $Us_{\chi-1}^{(0,1)}U^\dagger$ when $\mu_0$ is odd or even, respectively. 
Therefore, $Us_{\chi-1}^{(0,1)} U^\dagger$ must have the form
\begin{eqnarray}
&&V\otimes\overbrace{Y\otimes\cdots\otimes Y}^{n-1}\otimes Y_{0,1}.
\label{eq:tmp_2_2}
\end{eqnarray}
Further, another relation $\left(s_{\chi-1}^{(0,1)}\right)^2=e_0+e_1$, which is also obtained from Eq. (\ref{eq:str_jor_2_2}), requires that the square of Eq. (\ref{eq:tmp_2_2}) be equal to $U(e_0+e_1)U^\dagger$, which means $V^{2}=Id$, thus $V$ can be taken to be $Z^{*}$.
Therefore, Eq. (\ref{eq:str_jor_m_06}) holds for $\chi=2n$. 

Let us now prove the remaining step for induction. Assume that the proposition holds when $\chi$ is an even number $2n$, and show that it also does when $\chi=2n+1$.
Let us rewrite Eqs. (\ref{eq:str_jor_m_03})-(\ref{eq:str_jor_m_07}) for clarity for the case $\chi=2n$. Since $\lfloor \frac{\chi-1}{2}\rfloor=n-1$, the assumption is that an isometry $U$ exists, such that the following hold for the subset of $s_\mu$'s and $e_\rho$'s,
% chi=2n
\begin{eqnarray}
\label{eq:str_indctn01}
Us_{0}^{(\rho,\rho+1)}U^{\dagger}   &=& Id\otimes\overbrace{Id\otimes\cdots\otimes Id}^{n-1}\otimes X_{\rho,\rho+1}, 
\\
Us_{2n'-1}^{(0,1)}U^{\dagger}    &=& Id\otimes\overbrace{Id\otimes\cdots\otimes Id}^{n-n'-1}\otimes Z\otimes\overbrace{Y\otimes\cdots\otimes Y}^{n'-1}\otimes Y_{0,1}
\;\;\; {\rm for}\ n-1\geq n'\geq1,
%\nonumber\\
\label{eq:str_indctn02}
\\
Us_{2n'}^{(0,1)}U^{\dagger}  &=& Id\otimes\overbrace{Id\otimes\cdots\otimes Id}^{n-n'-1}\otimes X\otimes\overbrace{Y\otimes\cdots\otimes Y}^{n'-1}\otimes Y_{0,1}
\;\;\; {\rm for}\ n-1\geq n'\geq1,
%\nonumber\\
\label{eq:str_indctn03}
\\
\label{eq:str_indctn04}
Us_{2n-1}^{(0,1)}U^{\dagger}   &=& Z^{*}\otimes\overbrace{Y\otimes\cdots\otimes Y}^{n-1}\otimes Y_{0,1},
\\
\label{eq:str_indctn05}
Ue_{\rho}U^{\dagger}    &=& Id\otimes\overbrace{Id\otimes\cdots\otimes Id}^{n-1}\otimes\ket{\rho}\bra{\rho}.
\end{eqnarray}
Then we will show the existence of an isometry $U^\prime$ that transforms $s_{2n}^{(0,1)}$ to Eq. (\ref{eq:str_jor_m_05}) with $n^\prime=n$, i.e., $Id\otimes X\otimes Y\otimes \cdots \otimes Y\otimes Y_{0,1}$, while other relations (\ref{eq:str_jor_m_03})-(\ref{eq:str_jor_m_07}) for $n^\prime<n$ are also satisfied with $U^\prime$ instead of $U$. 

Similarly as above, due to Eq. (\ref{eq:str_jor_2_2+}), $Us_{\chi-1}^{(0,1)}U^{\dagger}$ can be written as a linear combination of terms in Eq. (\ref{eq:tmp_2_1}).
Also, it follows from the $\chi-2$ relations, Eq. (\ref{eq:str_jor_2_2}) with $\rho=0$, $\sigma=1$, $0\leq\mu<\chi-2$ and $\nu=\chi-1$,
that $Us_{\chi-1}^{(0,1)}U^{\dagger}$ should have the form of (\ref{eq:tmp_2_2}).
The relation (\ref{eq:str_jor_2_2}) with $\rho=0$, $\sigma=1$ and $\mu+1=\nu=\chi-1$ leads to $Z^{*}V+VZ^{*}=0$. This means that $V$ can be written as $A+A^{\dagger}$, where the kernel of $A$ is not smaller than the eigenspace of $Z^*$ corresponding to the eigenvalue $-1$, while the range of $A$ is not larger than the same eigenspace.
Equation (\ref{eq:str_jor_2_2}) for $\rho=0$, $\sigma=1$ and $\mu=\nu=\chi-1$, i.e., $ \left(s_{\chi-1}^{(0,1)}\right)^2=e_{0}+e_{1}$, indicates $V^2=Id$, thus $A^{\dagger}A=\frac{1}{2}(Id+Z^{*})$ and $AA^{\dagger}=\frac{1}{2}(Id-Z^{*})$. Therefore, the dimensions of the eigenspaces of $Z^*$ for eigenvalues $\pm 1$ are equal, and the space on which $Z^*$ acts has an even dimension $k_0$.

Now we consider an isometry $U_{Z^*}$ from the $k_0$-dimensional space to a product of two spaces, which are $k_{0}/2$- and $2$-dimensional spaces. It transforms an eigenvector of $Z^*$, corresponding to the eigenvalue $b=\pm 1$, to the $k_0/2$-dimensional subspace with the eigenvalue being encoded in the second (2-dim) subspace as $ \ket{(1-b)/2}$. 
With $U_{Z^*}$, we can consider a unitary operator $U^\sharp:=(Id\otimes X)U_{Z^*}A+(Id\otimes \ket1\bra1)U_{Z^*}$, so that a new isometry $U^\prime=(U^\sharp \otimes\overbrace{Id\otimes\cdots\otimes Id}^{n}) U$ transforms the space spanned by $\{\ket{k,\rho}\}_{k,\rho}$ to the one spanned by $\{\ket{k}\otimes\ket{b_{n}}\otimes\cdots\otimes\ket{b_{0}}\otimes\ket{\rho}\}_{k\in\{0,1,\cdots\frac {k_0}2-1\},b_{m}\in\{0,1\},\rho\in\{0,1,\cdots,\gamma-1\}}$.
We can directly check that Eqs. (\ref{eq:str_jor_m_03})-(\ref{eq:str_jor_m_05}) and (\ref{eq:str_jor_m_07}) hold after replacing $U$ with $U^\prime$, noting the effect of $U^\sharp$, e.g., $U^\sharp U^{\sharp\dagger}=Id\otimes Id$ and $U^\sharp V U^{\sharp\dagger}=U^\sharp(A+A^\dagger)U^{\sharp\dagger}=I\otimes X$, where the second space on the right is two-dimensional.
Now that the induction is complete, an isometry $U$ exists such that Eqs. (\ref{eq:str_jor_m_03})-(\ref{eq:str_jor_m_07}) as well as Eqs. (\ref{eq:str_jor_m_01}) and (\ref{eq:str_jor_m_02}) for any positive integer $\chi$.

Equations (\ref{eq:str_jor_m_03})-(\ref{eq:str_jor_m_07}) can be generalized to arbitrary combinations of $\rho$ and $\sigma$, leading to the justification of  Eqs. (\ref{eq:JordanAlg_2})-(\ref{eq:JordanAlg_6}). Let us see how this can be done.

If $\gamma=2$, the algebra will look like either $\mathfrak{S}_{2n^\prime-1}$ in Eq. (\ref{eq:JordanAlg_5}) or $\mathfrak{S}_{2n^\prime}$ in (\ref{eq:JordanAlg_6}), depending on whether $\chi$ is odd or even, respectively. That is, the corresponding simple Jordan algebra has the structure of $\mathfrak S_{\chi+2}$.

When $\gamma\geq3$, recall that the linear space  $ J$ spanned by
$\{Ue_{\rho}U^\dagger\}_{\rho\in \Gamma}\cup\{Us_{\mu }^{(\rho,\sigma)}U^\dagger\}_{\rho<\sigma\in \Gamma,\mu\in\{0,1,\cdots ,\chi-1\}}$ is closed under anti-commutation, and all the generators are linearly independent, for $\{e_{\rho}\}_{\rho\in \Gamma}$ and $\{s_{\mu }^{(\rho,\sigma)}\}_{\rho<\sigma\in \Gamma,\mu\in\{0,1,\cdots ,\chi-1\}}$ are a basis of the space.
Since any other $s_\mu^{(\rho,\sigma)}$-type basis generators can be obtained by taking anti-commutators as 
\begin{eqnarray}
s_{\mu}^{\prime(0,1)}&:=& s_{\mu}^{(0,1)},
\\
s_{0}^{\prime(\rho,\sigma)}&:=& \{\cdots\{s_{0}^{(\rho,\rho+1)},s_{0}^{(\rho+1,\rho+2)}\},\cdots ,s_{0}^{(\sigma-1,\sigma)}\},\\
s_{\mu}^{\prime(1-b,\sigma)}&:=& \{s_{\mu}^{(0,1)},s_{0}^{\prime(b,\sigma)}\},
\\
s_{\mu}^{\prime(\sigma,\tau)}&:=& \{\{s_{\mu}^{(0,1)},s_{0}^{\prime(0,\sigma)}\},s_{0}^{\prime (1,\tau)}\},
\end{eqnarray}
with $b\in\{0,1\}$, $\rho,\sigma,\tau\in\Gamma$, $\mu \in\{0,\cdots,\chi-1\}$ and $\rho+1<\sigma<\tau$, we can see their structures, following Eqs. (\ref{eq:str_jor_m_03}) - (\ref{eq:str_jor_m_06}),
\begin{eqnarray}
U s_{0}^{\prime(\rho,\sigma)}U^{\dagger}    &=&  Id\otimes\overbrace{Id\otimes\cdots\otimes Id}^{\lfloor\frac{\chi-1}{2}\rfloor}\otimes X_{\rho,\sigma}, \label{sprime_str_0}
\\
Us_{2n'-1}^{\prime(\rho,\sigma)} U^{\dagger}&=&  Id\otimes\overbrace{Id\otimes\cdots\otimes Id}^{\lfloor\frac{\chi-1}{2}\rfloor-n'}\otimes Z\otimes\overbrace{Y\otimes\cdots\otimes Y}^{n'-1}\otimes Y_{\rho,\sigma} \;\;\;{\rm for}\ \lfloor\frac{\chi-1}{2}\rfloor\geq n'\geq1, 
%\nonumber\\
\label{sprime_str_odd}
\\
Us_{2n'}^{\prime(\rho,\sigma)}  U^{\dagger} &=& Id\otimes\overbrace{Id\otimes\cdots\otimes Id}^{\lfloor\frac{\chi-1}{2}\rfloor-n'}\otimes X\otimes\overbrace{Y\otimes\cdots\otimes Y}^{n'-1}\otimes Y_{\rho,\sigma} \;\;\;
{\rm for}\ \lfloor\frac{\chi-1}{2}\rfloor\geq n'\geq1, 
%\nonumber\\
\label{sprime_str_even}
\\
Us_{\chi-1}^{\prime(\rho,\sigma)} U^{\dagger}   &=&  Z^{*}\otimes\overbrace{Y\otimes\cdots\otimes Y}^{\lfloor\frac{\chi-1}{2}\rfloor}\otimes Y_{\rho,\sigma} \;\;\mbox{when $\chi$ is even.} \label{sprime_str_chim1}
\end{eqnarray}

By construction, $\{e_{\rho}\}_{\rho\in \Gamma}\cup\{s_{\mu }^{\prime(\rho,\sigma)}\}_{\rho<\sigma\in \Gamma,\mu\in\{0,1,\cdots ,\chi-1\}}$ is the set of linearly independent operators, the number of which is equal to the dimension of $J$, thus this set is a basis of $J$.

When $\chi=1$ or $2$, it is straightforward to see that $J$ has a structure of $\mathfrak M_{\gamma}^{(\chi)}$ in Eqs. (\ref{eq:JordanAlg_2}) or (\ref{eq:JordanAlg_3}), respectively, due to Eqs. (\ref{eq:str_jor_m_03}), (\ref{eq:str_jor_m_06}), and (\ref{eq:str_jor_m_07}).

Let us consider the remaining cases of $\gamma \geq3$ and $\chi\geq3$.
From Eqs. (\ref{sprime_str_odd}) and (\ref{sprime_str_even}), we see that
\begin{eqnarray}
\{s_{1}^{\prime(0,1)},s_{2}^{\prime(0,2)}\}
&=&
U^\dagger(    Id\otimes\overbrace{Id\otimes\cdots\otimes Id}^{\lfloor\frac{\chi-1}{2}\rfloor-1}\otimes\{Z \otimes Y_{0,1},X\otimes  Y_{0,2}\})U
\nonumber \\
&=&
U^\dagger  (  Id\otimes\overbrace{Id\otimes\cdots\otimes Id}^{\lfloor\frac{\chi-1}{2}\rfloor-1}\otimes Y \otimes Y_{1,2})U
\label{sprime_str_12}
\end{eqnarray}
should be in $ J$, thus this must be written as a linear combination of  $\{e_{\rho}\}_{\rho\in \Gamma}\cup\{s_{\mu }^{\prime(\rho,\sigma)}\}_{\rho<\sigma\in \Gamma,\mu\in\{0,1,\cdots ,\chi-1\}}$. Nevertheless, since Eq. (\ref{sprime_str_12}) can be obtained from $\{s_\mu^{\prime (\rho,\sigma)}\}$ in Eqs. (\ref{sprime_str_0})-(\ref{sprime_str_chim1}) only by setting $\chi=4$ and $Z^*=Id$ in Eq. (\ref{sprime_str_chim1}), the explicit forms of the algebra $J$ indicate that such a requirement holds only when 
$\chi=4$ and $Z^*\propto Id$. Therefore, the corresponding simple Jordan algebra has the structure $\mathfrak M_{\gamma}^{(4)}$ in Eq. (\ref{eq:JordanAlg_4}).

In conclusion, when it is spanned by the basis $\{e_{\rho}\}_{\rho\in \Gamma}$ and $\{s_{\mu }^{(\rho,\sigma)}\}_{\rho<\sigma\in \Gamma,\mu\in\{0,1,\cdots ,\chi-1\}}$ that satisfy (\ref{eq:str_jor_2_1})$\sim$(\ref{eq:str_jor_2_3}) and (\ref{eq:str_jor_2_4}), the space must have one of the structures in Eq. (\ref{eq:JordanAlg_1})$\sim$(\ref{eq:JordanAlg_6}).
\end{proof}

Other linear spaces we have seen in Sec. \ref{sec:main}, namely $\bar{J}$ and $\hat{J}$, satisfy simple algebraic relations as follows.

%%%%%%%%%%%%%%%%%%
%%%%%  Lemma 6  %%%%%%%
%%%%%%%%%%%%%%%%%%
\begin{lemma}
\label{lemm:relation}
Let a triple $(J,\bar{J}, \hat {J})$ be equal to either one of the three combinations; 
$({\mathfrak R}, \bar{\mathfrak R}, \hat {\mathfrak R})$,
$({\mathfrak M}_\gamma^{(k)}, \bar{\mathfrak M}_\gamma^{(k)}, \hat {\mathfrak M}_\gamma^{(k)})$, and
$({\mathfrak S}_n, \bar{\mathfrak S}_n, \hat{\mathfrak S}_n)$, where $\gamma\geq3$, $k\in\{1,2,4\}$ and $n\geq3$.
Then the following relations
\begin{eqnarray}
i\mathcal L([ \bar J,\bar J]) &\subseteq& \bar J=i\mathcal L([ J, J]),
\label{eq:relation_1}
\\
i\mathcal L([ \bar J, J]) &\subseteq& J= \{J,J\},
\label{eq:relation_2}
\\
i[\hat  J, J] &=& i[ \hat  J, \bar J]=\{0\}
\label{eq:relation_3}
\end{eqnarray}
hold.
\end{lemma}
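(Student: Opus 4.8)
The plan is to split the lemma into its structural content and a short list of explicit verifications. First I would record the two facts that hold for every $J$ listed in Eqs.~(\ref{eq:JordanAlg_1})--(\ref{eq:JordanAlg_6}): each such $J$ is a real linear space of hermitian operators closed under the anti-commutator and containing $Id$, so $\{J,J\}\subseteq J$ is the Jordan axiom, while $h=\{\frac12 Id,h\}$ gives $J\subseteq\{J,J\}$; hence $J=\{J,J\}$, the right-hand equality in (\ref{eq:relation_2}). For the right-hand equality in (\ref{eq:relation_1}), i.e. $\bar J=i\mathcal L([J,J])$, I would verify case by case, using the explicit forms in Eqs.~(\ref{eq:Rbar})--(\ref{eq:Sbar_2n}), that the listed $\bar J$ coincides with the real span of $i$ times all commutators of the (tensor-product Pauli) generators of $J$. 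Typical ingredients are $[Id_A\otimes X_{k,q},Id_A\otimes X_{k',q'}]=i\,Id_A\otimes(\text{a }Y\text{-type operator})$, $[Id_A\otimes X_{k,q},Id_A\otimes \ket{k}\bra{k}]=i\,Id_A\otimes Y_{k,q}$, $[Z^{*}\otimes Y_{k,q},Id_A\otimes X_{k',q'}]=i\,Z^{*}\otimes(\text{an }X\text{- or }Z\text{-type operator})$, together with their analogues carrying strings of $Y$'s; one then checks that closing the generators of $J$ under the commutator and multiplying by $i$ reproduces exactly Eqs.~(\ref{eq:Rbar})--(\ref{eq:Sbar_2n}). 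This bookkeeping, especially for the nested $Y$-strings in $\mathfrak M_\gamma^{(4)}$ and $\mathfrak S_n$, is the one genuinely laborious ingredient; I would organise it by disposing first of $\mathfrak R$ and $\mathfrak M_\gamma^{(1)}$ (no $Z^{*}$), then $\mathfrak M_\gamma^{(2)}$, and finally $\mathfrak M_\gamma^{(4)}$ and $\mathfrak S_n$ by an induction on the $Y$-string length that parallels the construction used in Lemma~\ref{sec:rep_Jor}.

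Next I would obtain the two inclusions in (\ref{eq:relation_1}) and (\ref{eq:relation_2}) from two identities valid for any Jordan algebra realised by operators. The first is $[[J,J],J]\subseteq J$: the linearised Jordan identity $abc+cba=2\big((a\circ b)\circ c+a\circ(b\circ c)-(a\circ c)\circ b\big)$, with $a\circ b:=\frac12\{a,b\}$, shows $abc+cba\in J$ for $a,b,c\in J$, and since $[[a,b],c]=(abc+cba)-(bac+cab)$ is a difference of two such triple products, it lies in $J$. The second is $[[J,J],[J,J]]\subseteq\mathcal L([J,J])$, which follows from the first via the Jacobi identity $[[a,b],[c,d]]=[[[a,b],c],d]-[[[a,b],d],c]$, because $[[a,b],c]\in J$ implies $[[[a,b],c],d]\in[J,J]$. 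Combining these with $\bar J=i\mathcal L([J,J])$, and using that $[ix,iy]=-[x,y]$ and $[ix,a]=i[x,a]$ while signs are irrelevant for real linear subspaces, I get $i\mathcal L([\bar J,\bar J])\subseteq i\mathcal L([[J,J],[J,J]])\subseteq i\mathcal L([J,J])=\bar J$ and $i\mathcal L([\bar J,J])\subseteq i\cdot i\mathcal L([[J,J],J])\subseteq \mathcal L(J)=J$.

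Finally, for (\ref{eq:relation_3}) I would first check $[\hat J,J]=\{0\}$ directly from Eqs.~(\ref{eq:str_hat_1})--(\ref{eq:str_hat_6}): every generator of $\hat J$ has the form $h\otimes Id\otimes\cdots\otimes Id$ with $h$ an arbitrary hermitian operator on $\mathcal H_A$ --- subject, in the $\mathfrak M_\gamma^{(2)}$ and $\mathfrak S_{2n'}$ cases, to $[h,Z^{*}]=0$ --- whereas every generator of $J$ acts on the first ($\mathcal H_A$) factor as either $Id_A$ or $Z^{*}$, so commutativity holds factor by factor (equivalently, one may simply invoke that $\hat J$ is by definition the commutant of $J$). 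Commutation with $\bar J$ is then automatic from $\bar J=i\mathcal L([J,J])$ and the Jacobi identity: for $a,b\in J$ one has $[\hat J,[a,b]]=[[\hat J,a],b]+[a,[\hat J,b]]=0$, hence $i[\hat J,\bar J]=i\mathcal L([\hat J,[J,J]])=\{0\}$. This closes all three relations; since everything except the per-type verification $\bar J=i\mathcal L([J,J])$ reduces to a two-line structural argument, that verification --- tracking signs and the positions of $Z^{*}$ inside the long tensor strings --- is where I expect the real work to lie.
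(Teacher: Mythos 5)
Your proposal is correct and follows essentially the same route as the paper: the right-hand equalities $J=\{J,J\}$ and $\bar J=i\mathcal L([J,J])$, a triple-product (linearised Jordan) identity giving $i\mathcal L([\bar J,J])\subseteq J$, a Jacobi argument giving $i\mathcal L([\bar J,\bar J])\subseteq\bar J$, and a direct factor-by-factor check of Eq.~(\ref{eq:relation_3}) (your Jacobi shortcut for $[\hat J,\bar J]$ is a harmless variant of the paper's direct verification). The one caveat is that the case-by-case confirmation of $\bar J=i\mathcal L([J,J])$ for $\mathfrak R$, $\mathfrak M_\gamma^{(k)}$ and $\mathfrak S_n$, which you only sketch, is exactly where the paper's proof does its real work, listing the explicit commutators of the Pauli-string generators for all six types; your outlined bookkeeping matches those computations, so no gap in substance remains.
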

\begin{proof}
Equation (\ref{eq:relation_3}) can be verified straightforwardly by using the definition of $\bar{J}$ and $\hat{J}$.
Lemma \ref{sec:rep_Jor} leads to $\{J,J\}\subseteq J$, and $\{J,J\}\supseteq J$ also holds because $\frac12Id\in J$.
Thus, $J=\{J,J\}$ as in Eq. (\ref{eq:relation_2}).

The equality of $i\mathcal L([ J, J])=\bar J$, which is in Eq. (\ref{eq:relation_1}), turns out to be a sufficient condition for the remaining two inclusion relations in Eqs. (\ref{eq:relation_1}) and (\ref{eq:relation_2}). An inclusion
\begin{equation}\label{eq:inclusion_JJJ}
\mathcal L(i[i[J,J],J]) \subseteq  
\mathcal L(\{J,\{J,J\}\}),
\end{equation}
can be obtained because of the identity
\begin{equation}
i[i[j_1,j_2],j_3] = 
-\{j_1,\{j_2,j_3\}\}
+\{j_2,\{j_1,j_3\}\}.
\end{equation}
The LHS of Eq. (\ref{eq:inclusion_JJJ}) is equal to $i\mathcal L([\bar J,J])$ if $i\mathcal L([ J, J])=\bar J$.
Also its RHS must be a subset of $J$ since $\{J,J\}=J$, hence, $i\mathcal L([ \bar J, J])\subseteq J$.
Similarly, we have 
\begin{equation}\label{eq:inclusion_JJbarJ}
\mathcal L(i[i[J,J],\bar J]) \subseteq  
\mathcal L(i[i[J,\bar J],J]),
\end{equation}
due to
\begin{equation}
i[i[j_1,j_2],j_3] = 
i[i[j_1,j_3],j_2]
-i[i[j_2,j_3],j_1].
\end{equation}
Thanks to the condition $i\mathcal L([ J, J])=\bar J$ and its consequence $i\mathcal L([ \bar J, J])\subseteq J$, Eq. (\ref{eq:inclusion_JJbarJ}) implies $i\mathcal L([ \bar J, \bar J])\subseteq \bar J$, which is the first inclusion relation in Eq. (\ref{eq:relation_1}).

The proof of $i\mathcal L([ J, J])=\bar J$ is straightforward from the explicit forms of $J$ and $\bar J$, albeit rather tedious.
In the following, we use trivial symmetries $X_{k,q}=X_{q,k}$ and $Y_{k,q}=-Y_{q,k}$ without mentioning.

Let us consider the six cases of $J$ being $\mathfrak R, \mathfrak M_\gamma ^{(1)}, \mathfrak M_\gamma ^{(2)}, \mathfrak M_\gamma ^{(4)}, \mathfrak S_{2n'-1}$, and $\mathfrak S_{2n'}$.

\begin{enumerate}
\item[(i)] $J=\mathfrak R$.  Trivially, $[J,J]=\{0\}=\bar {\mathfrak R}$.

\item[(ii)] $J=\mathfrak M_\gamma ^{(1)}$.
For $k\neq q$ and $k'\neq q'$, the commutators
\begin{eqnarray}
i[Id\otimes X_{k,q},Id\otimes X_{k',q'}]
&=&\;
\left\{
\begin{array}{ll}
-Id\otimes Y_{q,q'} & 
\makebox[.7cm][l]{}
\makebox {if $k=k'$ and $q\neq q'$,}\\
\makebox[.7cm][l]{$0$} &
\makebox {if $k=k'$ and $q= q'$ or both $k$ and $q$ are neither $k'$ nor $q'$,}
\end{array}
\right.
\nonumber \\
i[Id\otimes X_{k,q},Id\otimes \ket{k'}\bra{k'}]
&=& 
(\delta(k,k')-\delta (q,k'))Id\otimes Y_{k,q},
\nonumber\\
i[Id\otimes \ket{k}\bra{k},Id\otimes \ket{k'}\bra{k'}]
&=& 0,
\label{eq:relation_2_1}
\end{eqnarray}
imply $i\mathcal L([J,J])\subseteq \mathcal L( \bar J)$. The inclusion in the opposite direction is guaranteed by 
\begin{equation}
Id\otimes Y_{k,q} = \frac 12 i[Id\otimes X_{k,q}, Id\otimes \ket{k}\bra{k}-Id\otimes \ket{q}\bra{q}].
\label{eq:relation_2_2}
\end{equation}

\item[(iii)] $J=\mathfrak M_\gamma ^{(2)}$.
For $k\neq q$ and $k'\neq q'$,
\begin{eqnarray}
i[Id\otimes X_{k,q},Z^*\otimes Y_{k',q'}]
&=&\;
\left\{
\begin{array}{ll}
Z^*\otimes X_{q,q'}&\makebox {if $k=k'$ and $q\neq q'$,}\\
-2\;Z^*\otimes Z_{k,q}&\makebox {if $k=k'$ and $q= q'$,}\\
0&\makebox {if both $k$ and $q$ are neither $k'$ nor $q'$,}
\end{array}
\right.
\nonumber \\
i[Z^*\otimes Y_{k,q},Id\otimes \ket{k'}\bra{k'}]
&=&
(-\delta(k,k')+\delta (q,k'))Z^*\otimes X_{k,q},
\nonumber \\
i[Z^*\otimes Y_{k,q},Z^*\otimes Y_{k',q'}]
&=&\;
\left\{
\begin{array}{ll}
-Id\otimes Y_{q,q'} &
\makebox[.5cm]{}
\makebox {if $k=k'$ and $q\neq q'$,}\\
\makebox[.5cm]{$0$} &
\makebox {if $k=k'$ and $q= q'$ or 
both $k$ and $q$ are neither $k'$ nor $q'$,}
\end{array}
\right.
\end{eqnarray}
and Eqs. (\ref{eq:relation_2_1}) imply $i\mathcal L([J,J])\subseteq \mathcal L( \bar J)$. The inclusion in the opposite direction can be shown by 
\begin{eqnarray}
Z^*\otimes X_{k,q}
&=&
-\frac12 i[Z^*\otimes Y_{k,q}
,Id\otimes \ket{k}\bra{k}-Id\otimes \ket{q}\bra{q}],
\\
Z^*\otimes Z_{k,q}
&=&
-\frac 12 i[Id\otimes X_{k,q},Z^*\otimes Y_{k,q}],
\end{eqnarray}
and Eq. (\ref{eq:relation_2_2}).

\item[(iv)] $J=\mathfrak M_\gamma ^{(4)}$, where $\gamma \geq 3$.
Let $k\neq q$, $k'\neq q'$ and $(W,W')$ be equal to either of $(X,Y)$, $(Y,Z)$ or $(Z,X)$, then
\begin{eqnarray}
i[ Id\otimes X_{k,q}, W\otimes Y_{k',q'}]
&=&\;
\left\{
\begin{array}{ll}
\makebox[4cm][l]{$
W\otimes X_{q,q'}
$} &
\makebox {if $k=k'$ and $q\neq q'$,}\\
-2( W\otimes\ket{k}\bra{k}-W\otimes \ket{q}\bra{q}) &
\makebox[5mm]{}
\makebox {if $k=k'$ and $q= q'$,}\\
\makebox[4cm][l]{$0$} &
\makebox {if both $k$ and $q$ are neither $k'$ nor $q'$,}
\end{array}
\right.
\nonumber \\
i[W\otimes Y_{k,q},Id\otimes \ket{k'}\bra{k'}]
&=&
(-\delta(k,k')+\delta (q,k'))W\otimes X_{k,q},
\nonumber \\
i[W\otimes Y_{k,q},W\otimes Y_{k',q'}]
&=&
\left\{
\begin{array}{ll}
-Id\otimes Y_{q,q'} &
\makebox[.5cm][l]{}
\makebox {if $k=k'$, $q\neq q'$,}
\\
\makebox[.5cm][l]{$0$} &
\makebox {if $k=k'$ and $q= q'$ or both $k$ and $q$ are neither $k'$ nor $q'$,}\\
\end{array}\right.
\nonumber \\
i[W\otimes Y_{k,q},W'\otimes Y_{k',q'}] &=&
\left\{
\begin{array}{ll}
\makebox[4cm][l]{$
W''\otimes X_{q,q'}
$} &
\makebox {if $k=k'$, $q\neq q'$ and $W=W'$,}\\
-( W''\otimes\ket{k}\bra{k}+W''\otimes \ket{q}\bra{q}) &
\makebox[5mm][l]{}
\makebox {if $k=k'$ and $q= q'$,}
\\
\makebox[4cm][l]{$0$} &
\makebox {if both $k$ and $q$ are neither $k'$ nor $q'$,}
\end{array}\right.
\end{eqnarray}
can be shown for $W''$ being equal to either of the Pauli operators, $X$, $Y$ and $Z$, when the pair $(W,W')$ is equal to either of $(Y,Z)$, $(Z,X)$ or $(X,Y)$, respectively. 
These relations and Eqs. (\ref{eq:relation_2_1}) lead to $i\mathcal L([J,J])\subseteq \mathcal L( \bar J)$. 
The one in the opposite direction can be verified by 
\begin{eqnarray}
W\otimes X_{k,q}
&=&
-\frac12 i[W\otimes Y_{k,q}
,Id\otimes \ket{k}\bra{k}-Id\otimes \ket{q}\bra{q}],
\\
W\otimes \ket k\bra k
&=&
\frac 12 (i[W'\otimes Y_{q,r},W''\otimes Y_{q,r}]
- i[W'\otimes Y_{k,q},W''\otimes Y_{k,q}]
\nonumber\\&&{}
- i[W'\otimes Y_{k,r},W''\otimes Y_{k,r}]),
\end{eqnarray}
where $r\in \{0,1\cdots ,\gamma -1\}$ is a number different from $k$ and $q$, as well as Eq. (\ref{eq:relation_2_2}).

\item[(v)] $J=\mathfrak S_{2n'-1}$.
For $m\geq m'\in\{1,2,\cdots n'-1\}$ and $W,W'\in \{X,Z\}$,
\begin{eqnarray}
&& i[
	\overbrace{Id\otimes\cdots\otimes Id}^{n'-m}\otimes W\otimes\overbrace{Y\otimes\cdots\otimes Y}^{m-1},
	Id\otimes\cdots\otimes Id
]=  0,
\nonumber \\
%\vspace{-2zh}
&& i[
	\overbrace{Id\otimes\cdots\otimes Id}^{n'-m}\otimes W\otimes\overbrace{Y\otimes\cdots\otimes Y}^{m-1},
	\overbrace{Id\otimes\cdots\otimes Id}^{n'-m'}\otimes W'\otimes\overbrace{Y\otimes\cdots\otimes Y}^{m'-1}
] \nonumber\\
&& =
\left\{
\begin{array}{cc}
2s \overbrace{Id\otimes\cdots\otimes Id}^{n'-m}\otimes W\otimes\overbrace{Y\otimes\cdots\otimes
Y}^{m-m'-1}\otimes \overline W^{\prime}\otimes\overbrace{Id\otimes\cdots\otimes
Id}^{m'-1} & 
\mbox{when $m>m'$,}
\\
%\makebox[6cm][l]{$
-2s\overbrace{Id\otimes\cdots\otimes Id}^{n'-m}\otimes Y\otimes\overbrace{Id\otimes\cdots\otimes Id}^{m-1}
%$} &
& 
\mbox{when $m=m'$, $W\neq W'$,}
\\
%\mbox[2cm][l]{$
0
%$} &
& 
\mbox{when $m=m'$, $W= W'$,}
\end{array}
\right.
\label{eq:relation_5_1}
\end{eqnarray}
can be shown for $(\overline W', W', s)$ being equal to either $(X, Z, 1)$ or $(Z, X, -1)$.
These relations imply the inclusion $i\mathcal L([J,J])\subseteq \mathcal L( \bar J)$. 
That in the opposite direction can be derived from 
\begin{eqnarray}
&&\overbrace{Id\otimes\cdots\otimes Id}^{n'-m}\otimes W\otimes\overbrace{Y\otimes\cdots\otimes
Y}^{m-m'-1}\otimes W^{\prime}\otimes\overbrace{Id\otimes\cdots\otimes
Id}^{m'-1}
\nonumber\\
&&=\; -\frac s2 i[
	\overbrace{Id\otimes\cdots\otimes Id}^{n'-m}\otimes W\otimes\overbrace{Y\otimes\cdots\otimes Y}^{m-1},
	\overbrace{Id\otimes\cdots\otimes Id}^{n'-m'}\otimes \overline W'\otimes\overbrace{Y\otimes\cdots\otimes Y}^{m'-1}
],
\nonumber\\
&&\overbrace{Id\otimes\cdots\otimes Id}^{n'-m}\otimes
Y\otimes\overbrace{Id\otimes\cdots\otimes Id}^{m-1}
\nonumber\\
&&=\; \frac12 i[
	\overbrace{Id\otimes\cdots\otimes Id}^{n'-m}\otimes X\otimes\overbrace{Y\otimes\cdots\otimes Y}^{m-1},
	\overbrace{Id\otimes\cdots\otimes Id}^{n-m}\otimes Z\otimes\overbrace{Y\otimes\cdots\otimes Y}^{m-1}
],
\nonumber\\
\label{eq:relation_5_2}
\end{eqnarray}
where $m\neq m'$.

\item[(vi)]  $J=\mathfrak S_{2n'}$.
For $m\in\{1,2,\cdots n'-1\}$ and $W\in \{X,Z\}$,
\begin{eqnarray}
i[ Z^{*}\otimes\overbrace{Y\otimes\cdots\otimes Y}^{n'-1}, Id\otimes\cdots\otimes Id]
&=&
i[
Z^{*}\otimes\overbrace{Y\otimes\cdots\otimes Y}^{n'-1},
Z^{*}\otimes\overbrace{Y\otimes\cdots\otimes Y}^{n'-1}
] = 0,
\nonumber\\
i[
\overbrace{Id\otimes\cdots\otimes Id}^{n'-m}\otimes W\otimes\overbrace{Y\otimes\cdots\otimes Y}^{m-1},
Z^{*}\otimes\overbrace{Y\otimes\cdots\otimes Y}^{n'-1}
]
&=&
2s \overbrace{Id\otimes\cdots\otimes Id}^{n'-m}\otimes\overline W\otimes\overbrace{Id\otimes\cdots\otimes
Id}^{m-1}
\end{eqnarray}
can be shown for $(\overline W, W, s)$ being equal to either $(X, Z, 1)$ or $(Z, X, -1)$.  
These relations and Eqs. (\ref{eq:relation_5_1}) 
indicate $i\mathcal L([J,J])\subseteq \mathcal L( \bar J)$. The inclusion in the opposite direction is verified by using the following commutator
\begin{equation*}
Z^*\otimes\overbrace{Y\otimes\cdots\otimes Y}^{n'-m-1}\otimes W\otimes\overbrace{Id\otimes\cdots\otimes Id}^{m-1}
=
-\frac s2 i[
	\overbrace{Id\otimes\cdots\otimes Id}^{n'-m}\otimes \overline W\otimes\overbrace{Y\otimes\cdots\otimes Y}^{m-1},
Z^{*}\otimes\overbrace{Y\otimes\cdots\otimes Y}^{n'-1}
]
\end{equation*}
as well as Eq. (\ref{eq:relation_5_2}).
\end{enumerate}

\end{proof}

We now give two algebraic relations below, whose proofs are simple, but rather lengthy.
The first one states that, for a simple Jordan algebra $J$, the maximum set $G$ of operators that satisfies $[G,J]\subseteq J$ can be written in a compact form. In the following Lemma \ref{lemm:pseudo_Ideal}, the triple $(J, \bar{J}, \hat {J})$ is assumed to be equal to either one of the three,
$({\mathfrak R},\bar{\mathfrak R}, \hat{\mathfrak R})$,
$({\mathfrak M}_\gamma^{(k)}, \bar{\mathfrak M}_\gamma^{(k)}, \hat{\mathfrak M}_\gamma^{(k)})$, and
$({\mathfrak S}_n, \bar{\mathfrak S}_n, \hat{\mathfrak S}_n)$, where $\gamma\geq3$, $k\in\{1,2,4\}$ and $n\geq3$, as in Lemma \ref{lemm:relation}. Also, $\mathcal{H}$ denotes the range of the largest projection operator in $J$.

%%%%%%%%%%%
%%% Lemma 7 %%%
%%%%%%%%%%%
\begin{lemma}
\label{lemm:pseudo_Ideal}
Let $i\tilde{G}^{(0)}$ be the maximal set of Hermitian operators that satisfies $[i\tilde{G}^{(0)}, J]\subseteq J$. Then, it is equal to $i{\mathcal L}(\hat J \cup \bar J)$, namely,
\begin{equation}
i\tilde{G}^{(0)}:= \{\tilde{h}|\tilde{h}\in i \cdot {\rm u}({\rm dim}\mathcal H)\land \forall h\in J,\;i[\tilde{h},h]\in J\}={\mathcal L}(\hat J\cup \bar J).
\end{equation}
\end{lemma}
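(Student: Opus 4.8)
The plan is to prove the two inclusions $\mathcal{L}(\hat J\cup\bar J)\subseteq i\tilde{G}^{(0)}$ and $i\tilde{G}^{(0)}\subseteq\mathcal{L}(\hat J\cup\bar J)$ separately. The first is immediate: since the commutator is bilinear and $J$ is a linear space, $i\tilde{G}^{(0)}$ is itself a real linear space, so it is enough to observe that every element of $\hat J$ and of $\bar J$ lies in $i\tilde{G}^{(0)}$. This holds because such elements are hermitian operators on $\mathcal{H}$ (clear from the explicit forms~(\ref{eq:str_hat_1})--(\ref{eq:str_hat_6}) and~(\ref{eq:Rbar})--(\ref{eq:Sbar_2n})) and satisfy $i[\hat J,J]=\{0\}\subseteq J$ by Eq.~(\ref{eq:relation_3}) and $i\mathcal{L}([\bar J,J])\subseteq J$ by Eq.~(\ref{eq:relation_2}).

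For the reverse inclusion, which is the real content, I would move to the associative picture. Set $A:=\mathcal{L}(J\cup i\bar J)$, a real subspace of the operators on $\mathcal{H}$ whose hermitian part is $J$ and whose skew-hermitian part is $i\bar J$. One checks from the explicit representations~(\ref{eq:JordanAlg_1})--(\ref{eq:Sbar_2n}) that $A$ is closed under operator multiplication; since $Id_{\mathcal{H}}\in J\subseteq A$ and $A$ is $*$-closed, $A$ is a unital $*$-subalgebra of the operator algebra on $\mathcal{H}$, and it is in fact the associative algebra generated by $J$. Because $A$ acts on the complex Hilbert space $\mathcal{H}$ and is $*$-closed, $x^{\dagger}x=0\Rightarrow x=0$ forces its radical to vanish, so $A$ is semisimple and all its derivations are inner. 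Its commutant $A'$ therefore equals the commutant of $J$; being a complex $*$-algebra, $A'$ is the complex span of its hermitian part, which by the explicit forms~(\ref{eq:str_hat_1})--(\ref{eq:str_hat_6}) is precisely $\hat J$, so $A'=\mathcal{L}(\hat J\cup i\hat J)$. (Concretely, for each of the six types $A$ is visibly one of $Id_A\otimes M_\gamma(\mathbb{R})$, a two-block sum $\bigoplus_{b=\pm1}Id_{A^{(b)}}\otimes M_\gamma(\mathbb{C})$, or the quaternionic/Clifford matrix algebra dictated by Eq.~(\ref{eq:str_sub_e}), which makes both its semisimplicity and its commutant evident.)

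Now let $\tilde h$ be hermitian with $i[\tilde h,J]\subseteq J$, so $[\tilde h,J]\subseteq iJ$. A short Jacobi-identity computation using $\bar J=i\mathcal{L}([J,J])$ shows that $i[\tilde h,\bar J]\subseteq\bar J$ as well, whence $[\tilde h,A]\subseteq\mathcal{L}(iJ\cup\bar J)=iA$; that is, the skew-hermitian operator $i\tilde h$ normalises $A$. Since every derivation of the semisimple algebra $A$ is inner, there is $a\in A$ with $[i\tilde h,\,\cdot\,]|_{A}=[a,\,\cdot\,]|_{A}$, and we may take $a$ skew-hermitian (replace it by $\tfrac12(a-a^{\dagger})$, using that $A$ is $*$-closed). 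Then $i\tilde h-a$ commutes with $A$, hence $i\tilde h-a\in A'$, and it is skew-hermitian; comparing skew-hermitian parts yields $a\in i\bar J$ and $i\tilde h-a\in i\hat J$, so $\tilde h\in\bar J+\hat J\subseteq\mathcal{L}(\hat J\cup\bar J)$. This completes the proof.

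The step I expect to be the main obstacle is the uniform verification that $A=\mathcal{L}(J\cup i\bar J)$ is an associative $*$-algebra together with the identification of its Wedderburn type and commutant across all six families --- most delicately $\mathfrak{M}_\gamma^{(4)}$ (quaternionic) and $\mathfrak{S}_n$ (spin-factor/Clifford), where the tensor-product structure of Eq.~(\ref{eq:str_sub_e}) is most involved. Everything afterwards is routine: the Jacobi identity, plus the classical fact that derivations of semisimple associative algebras are inner.
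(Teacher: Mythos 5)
Your first inclusion and the Jacobi-identity step (showing $i[\tilde h,\bar J]\subseteq\bar J$) are fine, but the structural core of your reverse inclusion breaks down. The set $A:=\mathcal{L}(J\cup i\bar J)$ is \emph{not} closed under operator multiplication for the spin-factor families. Concretely, for $J=\mathfrak{S}_5$ take the Clifford generators $\gamma_1=Id\otimes Id\otimes X$, $\gamma_2=Id\otimes Id\otimes Z$, $\gamma_3=Id\otimes X\otimes Y$; then $\gamma_1\gamma_2\gamma_3=-i\,Id\otimes X\otimes Id$, which is skew-hermitian but $Id\otimes X\otimes Id\notin\bar{\mathfrak{S}}_5$, so a product of elements of $A$ leaves $A$. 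The same happens for every $\mathfrak{S}_n$ with $n\geq5$, and closure also fails for $\mathfrak{S}_4$ and $\mathfrak{M}_\gamma^{(2)}$ (there the triple products generate $\mp i\,Z^*\otimes Id$, though in those two cases the offending element at least lies in the commutant $\hat J$). In general $J\oplus i\bar J$ is only the degree-$\leq2$ part of the enveloping associative algebra, so the chain ``unital $*$-algebra $\Rightarrow$ semisimple $\Rightarrow$ derivations inner, commutant $=\mathcal{L}(\hat J\cup i\hat J)$'' has no object to apply to, precisely in the cases you yourself flag as delicate (ironically, the quaternionic case $\mathfrak{M}_\gamma^{(4)}$, like $\mathfrak{R}$ and $\mathfrak{M}_\gamma^{(1)}$, does close; it is the Clifford cases that do not).

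Nor is the argument repaired by letting $A$ be the true associative algebra generated by $J$ (the Clifford algebra for $\mathfrak{S}_n$). The inner-derivation step then applies and gives a skew-hermitian $a\in A$ with $i\tilde h-a$ in the commutant, but your final step ``comparing skew-hermitian parts yields $a\in i\bar J$'' relies on the skew-hermitian part of $A$ being exactly $i\bar J$, which is false: odd-degree Clifford monomials such as $-i\,Id\otimes X\otimes Id$ above are skew-hermitian elements of $A$ lying neither in $i\bar J$ nor in $i\hat J$. To conclude $\tilde h\in\mathcal{L}(\hat J\cup\bar J)$ you would still have to rule out such higher-degree elements implementing a derivation that preserves $J$, and that residual verification is essentially the whole content of the lemma. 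The paper does exactly this by elementary means: it expands a general hermitian $\tilde h$ in an explicit operator basis for each of the six types and eliminates the unwanted coefficients by imposing $i[\tilde h,h]\in J$ for well-chosen $h\in J$, with an induction over the auxiliary sets $\mathfrak{S}_{n',n''}$ handling the spin-factor case. Your route could in principle be salvaged (e.g.\ by working in the enveloping algebra and then analysing how $\mathrm{ad}_a$ acts on the degree filtration to force $a$ into degree $\leq2$ modulo the commutant), but as written there is a genuine gap.
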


\begin{proof}
From Lemma \ref{lemm:relation},  $i[\hat{J},J]=\{0\}\subseteq  J$ 
and $i[\bar{J},J]\subseteq J$ hold, thus
${\mathcal L}(\hat J \cup \bar J)\subseteq i\tilde{G}^{(0)}$.

So, let us focus on the proof of the opposite inclusion, $  {\mathcal L}(\hat J \cup \bar J)\supseteq i\tilde{G}^{(0)}$. 
We will prove this relation for each form of $J$, one by one.

\begin{enumerate}
\item[(i)]  $J=\mathfrak{R}$.
We can easily identify $i\tilde{G}^{(0)}$ to be $ i \cdot{\rm u}({\rm dim}\mathcal H)$, because it is equal to $\hat{\mathfrak{R}}$,
thus ${\mathcal L}(\hat J \cup \bar J)\supseteq i\tilde{G}^{(0)}$ holds trivially.

\item[(ii)] $J=\mathfrak{M}_{\gamma}^{(1)}$. Any element in the set $i\tilde{G}^{(0)}$ should be expanded with respect to the basis on the second space as
\begin{equation}
\tilde{h}  :=    \sum_{k=0}^{\gamma-1}\left(\tilde{h}_{\ket{k}\bra{k}}\otimes\ket{k}\bra{k}
+\sum_{q=k+1}^{\gamma-1}
\left(\tilde{h}_{X_{k,q}}\otimes X_{k,q}+\tilde{h}_{Y_{k,q}}\otimes Y_{k,q}\right)\right).
\label{eq:str_c-dyn2_m_2}
\end{equation}
From the requirement $i[\tilde{h},Id\otimes\ket{k}\bra{k}]\in\mathfrak{M}_{\gamma}^{(1)}$ for $k\in\{0,1,\cdots,\gamma-1\}$, we have $\tilde{h}_{X_{k,q}}=0$ and $\tilde{h}_{Y_{k,q}}\propto Id_A$.
Using these and $i[\tilde{h},Id\otimes X_{k,k+1}]\in\mathfrak{M}_{\gamma}^{(1)}$ for $k\in\{0,1,\cdots,\gamma-2\}$, we see $\tilde{h}_{\ket{k}\bra{k}}=\tilde{h}_{\ket{k+1}\bra{k+1}}$.
Therefore, $\tilde{h}$ must have the form
\begin{equation}
\tilde{h}_{\ket{0}\bra{0}}\otimes Id_Q
+\sum_{k=0}^{\gamma-2}\sum_{q=k+1}^{\gamma-1}y_{k,q}Id_A\otimes Y_{k,q}
\end{equation}
with an appropriate $y_{k,q}\in\mathbb R$. Since  $\tilde{h}_{\ket{0}\bra{0}}\otimes Id_Q\in \hat {\mathfrak{M}}_{\gamma}^{(1)}$  and
$Id_A\otimes Y_{k,q}\in \bar {\mathfrak{M}}_{\gamma}^{(1)}$, $\tilde{h}\in  {\mathcal L}(\hat J\cup \bar J)$ is guaranteed.

\item[(iii)] $J=\mathfrak{M}_{\gamma}^{(2)}$. Again, any element in the set $i\tilde{G}^{(0)}$ can be written as Eq. (\ref{eq:str_c-dyn2_m_2}).
Because $i[\tilde{h},Id\otimes\ket{k}\bra{k}]\in\mathfrak{M}_{\gamma}^{(2)}$ for $k\in\{0,1,\cdots,\gamma-1\}$, which is the condition for $i\tilde{G}^{(0)}$, 
$\tilde{h}_{X_{k,q}}\propto Z^{*}$ and $\tilde{h}_{Y_{k,q}}\propto Id$ should hold.
From these and $i[\tilde{h},Id\otimes X_{k,k+1}]\in\mathfrak{M}_{\gamma}^{(2)}$ for $k\in\{0,1,\cdots,\gamma-2\}$, we have $\tilde{h}_{\ket{k}\bra{k}}-\tilde{h}_{\ket{k+1}\bra{k+1}}\propto Z^{*}$. Then, using these relations and $i[\tilde{h},Z^{*}\otimes Y_{0,1}]\in\mathfrak{M}_{\gamma}^{(2)}$, we also obtain 
$[\tilde{h}_{\ket{0}\bra{0}},Z^{*}]=0$, i.e., $\tilde{h}_{\ket{0}\bra{0}}\in \mathrm{u(dim}\mathcal{H}_A)^*$.
Thus, the form of $\tilde{h}$ is reduced to
\begin{equation}
\label{jprime_mgamma2}
\sum_{k=0}^{\gamma-2}\left(z_kZ^*\otimes Z_{k,k+1}
+
\sum_{q=k+1}^{\gamma-1}\left(x_{k,q}Z^*\otimes X_{k,q}+y_{k,q}Id_A\otimes Y_{k,q}\right)
\right)
+ (\tilde{h}_{\ket{0}\bra{0}}-z_0Z^*)\otimes Id_Q
\end{equation}
for appropriate $x_{k,q},y_{k,q},z_{k}\in\mathbb R$.  Since $(\tilde{h}_{\ket{0}\bra{0}}-z_0Z^*)\otimes Id_Q\in \hat {\mathfrak{M}}_{\gamma}^{(2)}$ as $\tilde{h}_{\ket{0}\bra{0}}\in \mathrm{u(dim}\mathcal{H}_A)^*$ and
$Z^*\otimes X_{k,q},Z^*\otimes Z_{k,q},Id_A\otimes Y_{k,q}$ are all in $\bar {\mathfrak{M}}_{\gamma}^{(2)}$, Eq. (\ref{jprime_mgamma2}) means
$h\in  {\mathcal L}(\hat J\cup \bar J)$.

\item[(iv)] $J=\mathfrak{M}_{\gamma}^{(4)}$. In accordance with the structure of $\mathfrak{M}_{\gamma}^{(4)}$ in Eq. (\ref{eq:JordanAlg_4}), any element in the set $i\tilde{G}^{(0)}$ should be written as
\begin{equation}\label{mgamma4bar_gen}
\tilde{h} := \sum_{W\in\{X,Y,Z,Id\}}\sum_{k=0}^{\gamma-1}\Bigl(
\tilde{h}_{W,\ket{k}\bra{k}}\otimes W\otimes\ket{k}\bra{k} +
\sum_{q=k+1}^{\gamma-1}(\tilde{h}_{W,X_{k,q}}\otimes W\otimes X_{k,q}+\tilde{h}_{W,Y_{k,q}}\otimes W\otimes Y_{k,q})\Bigr).
\end{equation}
Similarly as the previous cases, since $i[\tilde{h},Id\otimes Id\otimes\ket{k}\bra{k}]\in\mathfrak{M}_{\gamma}^{(4)}$ for $k\in\{0,1,\cdots,\gamma-1\}$, we obtain $\tilde{h}_{Id,X_{k,q}}=\tilde{h}_{W,Y_{k,q}}=0$, $\tilde{h}_{W,X_{k,q}}\propto Id$, and $\tilde{h}_{Id,Y_{k,q}}\propto Id$, where $W\in\{X,Y,Z\}$. In addition to these, because $i[\tilde{h},Id\otimes Id\otimes X_{k,k+1}]\in\mathfrak{M}_{\gamma}^{(4)}$ should hold for $k\in\{0,1,\cdots,\gamma-2\}$, we can have $\tilde{h}_{Id,\ket{k}\bra{k}}=\tilde{h}_{Id,\ket{k+1}\bra{k+1}}$ and $\tilde{h}_{W,\ket{k}\bra{k}}-\tilde{h}_{W,\ket{k+1}\bra{k+1}}\propto Id$, where $W\in\{X,Y,Z\}$. These relations, together with another condition, $i[\tilde{h},Id\otimes W\otimes Y_{k,k+1}]\in\mathfrak{M}_{\gamma}^{(4)}$ for $k\in\{0,1,\cdots,\gamma-2\}$ with $W\in\{X,Y,Z\}$, imply $\tilde{h}_{W,\ket{k}\bra{k}}\propto Id$. 
Since $\sum_k \tilde{h}_{Id,\ket{k}\bra{k}}\otimes Id \otimes \ket{k}\bra{k} = \tilde{h}_{Id,\ket{0}\bra{0}}\otimes Id \otimes Id \in \hat {\mathfrak{M}}_{\gamma}^{(4)}$ 
and the remaining terms in Eq. (\ref{mgamma4bar_gen}) are in $\bar {\mathfrak{M}}_{\gamma}^{(4)}$,
we can conclude $\tilde{h}\in  {\mathcal L}(\hat J\cup \bar J)$.

\item[(v)] 
$J=\mathfrak{S}_{n}$. Since it is rather hard to consider a general form of Hermitian operators $\tilde{h}$ that fulfill $i[\tilde{h},\mathfrak{S}_n]\subseteq \mathfrak{S}_n$, we shall define a larger set $\mathfrak{S}_{n^\prime,n^{\prime\prime}}$, and attempt to show $\tilde{h}$ of the form of Eq. (\ref{eq:str_c-dyn2_m_3})  will satisfy $i[\tilde{h},\mathfrak{S}_n]\subseteq \mathfrak{S}_{n^\prime n^{\prime\prime}}$. Then, we will tighten the condition for $\tilde{h}$ later to make it satisfy $i[\tilde{h},\mathfrak{S}_n]\subseteq \mathfrak{S}_n$.

Let us define the set $\mathfrak{S}_{n^\prime,n^{\prime\prime}}$ by
\begin{eqnarray}
{\mathfrak{S}}_{n',n''} &:=& \mathcal{L}(\{\overbrace{Id\otimes\cdots\otimes Id}^{n'-m}\otimes W\otimes\overbrace{Y\otimes\cdots\otimes Y}^{m-1} \cup Id\otimes\cdots\otimes Id
\nonumber\\
&& \cup U\otimes\overbrace{Y\otimes\cdots\otimes Y}^{n''}\}_{W\in\{X,Z\},U\in \mathrm{u(dim}\mathcal{H}^\prime),m\in\{0,1,\cdots,n^{\prime}-1\}}),
\nonumber\\
\end{eqnarray}
where $n^\prime=\lceil\frac{n}{2}\rceil$ and $\mathcal{H}'$ is a direct product of the first $n'-n''$ spaces. We now show by induction that any Hermitian operator $\tilde{h}$, which satisfies $i[\tilde{h},\mathfrak{S}_{n}]\subseteq\mathfrak{S}_{n',n''}$ for $0\leq n''<n'=\lceil\frac{n}{2}\rceil$, has the form:
\begin{eqnarray}
\tilde{h} &:=&   \tilde{h}_{Id}\otimes\overbrace{Id\otimes\cdots\otimes Id}^{n''}
\nonumber\\
&&{}    +\sum_{W,W^{\prime}\in\{X,Z\}}\sum_{m_{1}<m_{2}\in\{1,\cdots,n''\}} \tilde{h}_{W,W^{\prime},m_{1},m_{2}}\otimes \overbrace{Id\otimes\cdots\otimes Id}^{n''-m_{2}}
\otimes W\otimes\overbrace{Y\otimes\cdots\otimes Y}^{m_{2}-m_{1}-1}\otimes W^{\prime}\otimes\overbrace{Id\otimes\cdots\otimes Id}^{m_{1}-1}
\nonumber\\
&&{}    +\sum_{W\in\{X,Z\}}\sum_{m\in\{1,\cdots,n''\}} \tilde{h}_{W,m}\otimes\overbrace{Y\otimes\cdots\otimes Y}^{n''-m}
\otimes W\otimes\overbrace{Id\otimes\cdots\otimes Id}^{m-1}
\nonumber\\
&&{}+\sum_{m\in\{1,\cdots,n''\}} \tilde{h}_{m}\otimes \overbrace{Id\otimes\cdots\otimes Id}^{n''-m}\otimes Y\otimes\overbrace{Id\otimes\cdots\otimes Id}^{m-1},
\label{eq:str_c-dyn2_m_3}
\end{eqnarray}
where $\tilde{h}_{Id}, \tilde{h}_{W,W^{\prime},m_{1},m_{2}}, \tilde{h}_{W,m},$ and $\tilde{h}_{m}$ are the operators acting on $\mathcal{H}^\prime$.

When $n''=0$, because $\mathfrak{S}_{n',0}$ contains all unitaries $U\in \mathrm{u(dim}\mathcal{H}^\prime)$, $i[\tilde{h},\mathfrak{S}_{n}]\subseteq\mathfrak{S}_{n',n''}$ does not impose any condition on $\tilde{h}$. Thus, it can also be arbitrary unitary and it is of the form of Eq. (\ref{eq:str_c-dyn2_m_3}). Assume that the proposition holds for $n''=n_{0}''-1$, then the general form of $\tilde{h}$ that satisfies $i[\tilde{h},\mathfrak{S}_{n}]\subseteq{\mathfrak{S}}_{n',n_{0}''-1}$ should have the form of
\begin{eqnarray}\label{eq:g_Sn'n''}
\tilde{h}  &=&   \tilde{h}_{Id}'\otimes\overbrace{Id\otimes\cdots\otimes Id}^{n_{0}''}
\nonumber\\
    && {}    +\sum_{W,W^{\prime}\in\{X,Z\}}\sum_{m_{1}<m_{2}\in\{1,\cdots,n_{0}''\}} \tilde{h}'_{W,W^{\prime},m_{1},m_{2}}\otimes\overbrace{Id\otimes\cdots\otimes Id}^{n_{0}^{\prime\prime}-m_{2}}
\otimes W\otimes\overbrace{Y\otimes\cdots\otimes Y}^{m_{2}-m_{1}-1}\otimes W^{\prime}\otimes\overbrace{Id\otimes\cdots\otimes Id}^{m_{1}-1}
\nonumber\\
    && {}    +\sum_{W\in\{X,Z\}}\sum_{m\in\{1,\cdots,n_{0}''\}}
                \tilde{h}_{W,m}'\otimes\overbrace{Y\otimes\cdots\otimes Y}^{n_{0}''-m}\otimes W\otimes\overbrace{Id\otimes\cdots\otimes Id}^{m-1}
\nonumber\\
    && {}+\sum_{m\in\{1,\cdots,n_{0}''\}} \tilde{h}_{m}'\otimes \overbrace{Id\otimes\cdots\otimes Id}^{n''_0-m}\otimes Y\otimes\overbrace{Id\otimes\cdots\otimes Id}^{m-1}
\nonumber\\
    && {}    +\sum_{W\in\{X,Y,Z\}}\sum_{W',W''\in\{X,Z\}}\sum_{m_{1}<m_{2}\in\{1,\cdots,n_{0}''-1\}}\Delta \tilde{h}_{W,W',W'',m_{1},m_{2}}
\nonumber\\&& {}\quad
\otimes W \otimes\overbrace{Id\otimes\cdots\otimes Id}^{n_{0}''-m_{2}-1}
\otimes W'\otimes\overbrace{Y\otimes\cdots\otimes Y}^{m_{2}-m_{1}-1}\otimes W''\otimes\overbrace{Id\otimes\cdots\otimes Id}^{m_{1}-1}
\nonumber\\
    && {}    +\sum_{W\in\{X,Z\}}\sum_{m\in\{1,\cdots,n_{0}''-1\}}
                \Delta \tilde{h}_{Id,W,m}\otimes Id
\otimes\overbrace{Y\otimes\cdots\otimes Y}^{n_{0}''-m-1}\otimes W\otimes\overbrace{Id\otimes\cdots\otimes Id}^{m-1}
\nonumber\\
    && {}    +\sum_{W\in\{X,Y,Z\}}\sum_{m\in\{1,\cdots,n_{0}''-1\}}\Delta \tilde{h}_{W,m}\otimes W
    \otimes\overbrace{Id\otimes\cdots\otimes Id}^{n_{0}''-m-1}\otimes Y\otimes\overbrace{Id\otimes\cdots\otimes Id}^{m-1},
\end{eqnarray}
where we have split the operator for the left-most space in Eq. (\ref{eq:str_c-dyn2_m_3}) into two parts according to the tensor product structure of $\mathfrak{S}_{n}$. Since one of the spaces thereby split is two-dimensional, it can be spanned by the basis $\{Id, X, Y, Z\}$. Thus, the $\tilde{h}$ operators in Eq. (\ref{eq:str_c-dyn2_m_3}) can be written as tensor products as follows.
\begin{eqnarray}
\tilde{h}_{Id}&=& \tilde{h}_{Id}'\otimes Id+\tilde{h}_{n''_0}'\otimes Y+\sum_{W\in\{X,Z\}} \tilde{h}_{W,n''_0}'\otimes W,
\\
\tilde{h}_{W,W',m_1,m_2}&=&  \tilde{h}_{W,W',m_1,m_2}'\otimes Id+\!\!\sum_{W''\in\{X,Y,Z\}}\!\!\Delta \tilde{h}_{W'',W,W',m_1,m_2}\otimes W,
\nonumber\\
\\
\tilde{h}_{W,m}&=& \tilde{h}_{W,m}'\otimes Y+\Delta \tilde{h}_{Id,W,m}\otimes Id+ 
\!\!\!\!\!\sum_{W'\in\{X,Z\}}\!\!\!\!\! \tilde{h}_{W',W,n''_0,m}'\otimes W',
\nonumber\\
\\
\tilde{h}_{m}&=& \tilde{h}_{m}'\otimes Id+\sum_{W\in\{X,Y,Z\}}\Delta \tilde{h}_{W,m}\otimes W.
\end{eqnarray}

Due to the inclusion ${\mathfrak{S}}_{n',n''+1}\subseteq{\mathfrak{S}}_{n',n''}$, any Hermitian operator $\tilde{h}$ which satisfies $i[\tilde{h},\mathfrak{S}_{n}]\subseteq{\mathfrak{S}}_{n',n_{0}''}$ should also have the form of Eq. (\ref{eq:g_Sn'n''}). Further, $i[\tilde{h},\mathfrak{S}_{n}]\subseteq{\mathfrak{S}}_{n',n_{0}''}$ imposes additional conditions that are of help to get rid of some terms in Eq. (\ref{eq:g_Sn'n''}).
Since $i[\tilde{h},\overbrace{Id\otimes\cdots\otimes Id}^{n'-n_0''}\otimes X\otimes\overbrace{Y\otimes\cdots\otimes Y}^{n_{0}''-1}]\in{\mathfrak{S}}_{n',n_0''}$, we can have $\Delta \tilde{h}_{W,W',W'',m_1,m_2}=\Delta \tilde{h}_{Id,W',m}=\Delta \tilde{h}_{W,m}=0$ for $W\in \{Y,Z\}$ and $W',W''\in\{X,Z\}$. Similarly, $i[\tilde{h},\overbrace{Id\otimes\cdots\otimes Id}^{n'-n_0''}\otimes Z\otimes\overbrace{Y\otimes\cdots\otimes Y}^{n_{0}''-1}]\in{\mathfrak{S}}_{n',n_0''}$ implies $\Delta \tilde{h}_{X,W,W',m_1,m_2}=\Delta \tilde{h}_{X,m}=0$ for $W,W'\in \{Y,Z\}$.

Let us now consider the case where $n$ is an odd number. Since $\mathfrak{S}_{2n'-1}\subseteq{\mathfrak{S}}_{n',n'-1}$, any Hermitian operator $\tilde{h}$ which satisfies $i[\tilde{h},\mathfrak{S}_{2n'-1}]\subseteq\mathfrak{S}_{2n'-1}$ can be written as Eq. (\ref{eq:str_c-dyn2_m_3}) for  $n''=n'-1$.
Because $i[\tilde{h},\overbrace{Id\otimes\cdots\otimes Id}^{n'-m-1}\otimes W\otimes\overbrace{Y\otimes\cdots\otimes Y}^{m-1}]\in\mathfrak{S}_{2n'-1}$ for $m\in\{1,2,\cdots,n'-1\}$ and $W\in\{X,Z\}$, both $\tilde{h}_{W_1,W_2,m_1,m_2}$ and $\tilde{h}_{m}$ are proportional to $Id$ and $\tilde{h}_{W,m}=0$. Therefore, such Hermitian operators $\tilde{h}$ are in $\mathcal L(\hat J, \bar J)$.

When $n$ is even,  any Hermitian operator $\tilde{h}$ satisfying $i[\tilde{h},\mathfrak{S}_{2n'}]\subseteq\mathfrak{S}_{2n'}$ has the form of Eq. (\ref{eq:str_c-dyn2_m_3}) for $n''=n'-1$, because $\mathfrak{S}_{2n'}\subseteq{\mathfrak{S}}_{n',n'-1}$.
The condition $i[\tilde{h},\overbrace{Id\otimes\cdots\otimes Id}^{n'-m-1}\otimes W\otimes\overbrace{Y\otimes\cdots\otimes Y}^{m-1}]\in\mathfrak{S}_{2n'}$ for $m\in \{1,2,\cdots,n'-1\}$ then implies $\tilde{h}_{W_1,W_2,m_1,m_2}\propto Id, \tilde{h}_{m}\propto Id$ and $\tilde{h}_{W,m}\propto Z^*$, where  $W,W_1,W_2\in\{X,Z\}$.
Another one, $i[\tilde{h}, Z^*\otimes\overbrace{Y\otimes\cdots\otimes Y}^{n'-1}]=0$, leads to $[\tilde{h}_{Id},Z^{*}]\subseteq\mathfrak{S}_{2n'}$. Hence, such Hermitian operators $\tilde{h}$ are in $\mathcal L(\hat J\cup \bar J)$, and the proposition of the lemma has been proved.
\end{enumerate}
\end{proof}

The next lemma demonstrates that,  we can easily express a set of all the Hermitian operators that commute with a simple Jordan algebra.

%%%%%%%%%%%%%%%%%
%%%%%  Lemma 8  %%%%%%
%%%%%%%%%%%%%%%%%

\begin{lemma}
\label{lemm:orthogonal_subspace}
Suppose that a pair $(J, \hat J)$ of sets of  Hermitian operators on $\mathcal H$ is equal to
$({\mathfrak R}, \hat{\mathfrak R})$,
$({\mathfrak M}_\gamma^{(k)}, \hat{\mathfrak M}_\gamma^{(k)})$ or
$({\mathfrak S}_n, \hat{\mathfrak S}_n)$ for $\gamma\geq3$, $k\in\{1,2,4\}$ and $n\geq3$.
If we define $J^\prime$ to be the set of all the Hermitian operators that commute with the simple Jordan algebra $J$, i.e., 
\begin{equation}
J^\prime:=\{h^\prime|h^\prime\in i\cdot {\rm u}({\rm dim}\mathcal H)\land \forall h\in J,\;[h^\prime,h]= 0\},
\end{equation}
then $J^\prime$ is equal to $\hat J$.
\end{lemma}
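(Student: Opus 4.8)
The plan is to prove the two inclusions $\hat J\subseteq J'$ and $J'\subseteq\hat J$ separately. The first is immediate from the explicit forms already in hand: every element of $\hat J$ in Eqs.~(\ref{eq:str_hat_1})--(\ref{eq:str_hat_6}) has the shape $h\otimes Id_R$, where $Id_R$ is the identity on the tensor product of all factors of $\mathcal H$ other than $\mathcal H_A$ and where $h$ acts on $\mathcal H_A$ and, in the $\mathfrak M_\gamma^{(2)}$ and $\mathfrak S_{2n'}$ cases, commutes with $Z^*$; meanwhile every generator of $J$ in Eqs.~(\ref{eq:JordanAlg_1})--(\ref{eq:JordanAlg_6}) has the shape $(Id_A$ or $Z^*)\otimes(\text{operator on }\mathcal H_R)$, so the two commute. (This is also what Eq.~(\ref{eq:relation_3}) asserts.) The real content is the reverse inclusion, which I would reduce to a commutant computation.

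Because $J$ consists of hermitian operators, the commutant $\{h':[h',h]=0\ \forall h\in J\}$ is preserved under sums, products, scalar multiples and adjoints, hence coincides with the commutant of the associative $*$-algebra $\mathcal A(J)$ generated by $J$ together with the identity; thus $J'$ is exactly the hermitian part of $\mathcal A(J)'$, and it suffices to identify $\mathcal A(J)$ and read off its commutant. I would show, uniformly across the six cases, that $\mathcal A(J)=\mathcal A_0\otimes B(\mathcal H_R)$, where $\mathcal A_0=\mathbb C\,Id_A$ for $J\in\{\mathfrak R,\mathfrak M_\gamma^{(1)},\mathfrak M_\gamma^{(4)},\mathfrak S_{2n'-1}\}$ and $\mathcal A_0=\mathbb C\,Id_A+\mathbb C\,Z^*$ for $J\in\{\mathfrak M_\gamma^{(2)},\mathfrak S_{2n'}\}$ (with $\mathcal H_R$ trivial for $\mathfrak R$). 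The inclusion $\mathcal A(J)\subseteq\mathcal A_0\otimes B(\mathcal H_R)$ is clear since each generator sits in the right-hand side and, thanks to $Z^{*2}=Id_A$, that is already an algebra. For the opposite inclusion: in the $\mathfrak M_\gamma^{(k)}$ cases the operators $Id_A\otimes X_{k,q}$ and $Id_A\otimes\ket k\bra k$ generate all matrix units on $\mathcal H_Q$, hence $Id_A\otimes B(\mathcal H_Q)$; for $k=4$ the triple $\{X,Y,Z\}$ on $\mathcal H_{Q^{(1)}}$ additionally generates $B(\mathcal H_{Q^{(1)}})$; for $k=2$ the family $Z^*\otimes Y_{k,q}$, multiplied by the projectors $Id_A\otimes\ket k\bra k$, produces $Z^*$ tensored with all matrix units, supplying the extra summand $\mathbb C\,Z^*$. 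In the $\mathfrak S_n$ cases I would argue by induction on qubit position, peeling off factors from the right: the $m=1$ generators give the full one-qubit algebra on the rightmost qubit, and once this algebra is available on qubits $1,\dots,m-1$, multiplying the generator carrying $W$ in slot $m$ (and $Y$'s in slots $1,\dots,m-1$) by the product of those same $Y$'s isolates $W$ on qubit $m$; for $n=2n'$ the generator $Z^*\otimes Y\otimes\cdots\otimes Y$ times $Id_A\otimes Y\otimes\cdots\otimes Y$ yields $Z^*$ on $\mathcal H_A$ alone.

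Granting this, the commutant is immediate: $\mathcal A(J)'=\mathcal A_0'\otimes\mathbb C\,Id_R$, with $\mathcal A_0'=B(\mathcal H_A)$ in the first group and $\mathcal A_0'=\{h:[h,Z^*]=0\}$ in the $Z^*$ cases (once more using $Z^{*2}=Id_A$, so that $\mathcal A_0$ is spanned by the two eigenprojections of $Z^*$). Intersecting $\mathcal A(J)'$ with the hermitian operators reproduces precisely $\hat{\mathfrak R}$, $\hat{\mathfrak M}_\gamma^{(k)}$ and $\hat{\mathfrak S}_n$ as written in Eqs.~(\ref{eq:str_hat_1})--(\ref{eq:str_hat_6}), since $i\cdot{\rm u}(\dim\mathcal H_A)^*$ is by definition the hermitian part of $\{h:[h,Z^*]=0\}$. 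Together with the first paragraph this gives $J'=\hat J$.

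I expect the main obstacle to be the generating argument of the middle step: one must verify carefully that the real span of the listed generators of $J$ really does produce the full matrix algebra $B(\mathcal H_R)$ on the non-$A$ factors — for the $\mathfrak S_n$ family this is the slightly delicate right-to-left induction above — and, no less importantly, that nothing beyond $\mathbb C\,Id_A$ (respectively $\mathbb C\,Id_A+\mathbb C\,Z^*$) is produced on $\mathcal H_A$, since otherwise the commutant, and hence $J'$, would come out strictly smaller than $\hat J$.
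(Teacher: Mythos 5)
Your proof is correct, but it follows a genuinely different route from the paper's. The paper proceeds by brute force: for each type of $J$ it expands a general hermitian $h^\prime$ in a product basis of the factors of $\mathcal H$, imposes $[h^\prime,h]=0$ for a handful of specific generators $h$ to kill coefficients one by one, and for the $\mathfrak S_n$ family runs an induction over the auxiliary sets $\mathfrak S_{n^\prime,n^{\prime\prime}}$ (mirroring the proof of Lemma \ref{lemm:pseudo_Ideal}) before treating odd and even $n$ separately via the $Z^*$ condition. You instead observe that commuting with the generators of $J$ is the same as commuting with the unital associative $*$-algebra $\mathcal A(J)$ they generate, identify $\mathcal A(J)=\mathcal A_0\otimes B(\mathcal H_R)$ with $\mathcal A_0=\mathbb C\,Id_A$ or $\mathbb C\,Id_A+\mathbb C\,Z^*$ (using $Z^{*2}=Id_A$ to see no leakage beyond $\mathcal A_0$ on $\mathcal H_A$), and then read off the commutant as $\mathcal A_0^\prime\otimes\mathbb C\,Id_R$, whose hermitian part is exactly Eqs.~(\ref{eq:str_hat_1})--(\ref{eq:str_hat_6}); your generating computations (matrix units for $\mathfrak M_\gamma^{(k)}$, the right-to-left peeling induction using $Y^2=Id$ for $\mathfrak S_n$, and multiplication by $Id_A\otimes Y^{\otimes(n^\prime-1)}$ to isolate $Z^*$ in the even case) all check out. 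What your approach buys is brevity and conceptual clarity -- it replaces the coefficient-killing case analysis by two standard facts about commutants of matrix algebras and of tensor products -- and it dovetails with the paper's Lemma \ref{sec:str_mod}: the algebras $\mathcal A_0\otimes B(\mathcal H_R)$ you construct are precisely the associative envelopes whose hermitian parts are the primed sets in Eqs.~(\ref{eq:def_J'_R})--(\ref{eq:def_J'_Seven}), so your middle step could even be merged with that lemma. What the paper's hands-on route buys is uniformity of technique with Lemma \ref{lemm:pseudo_Ideal}, where a commutant shortcut is not available because the condition there is $i[\tilde h,J]\subseteq J$ rather than $=0$. One small wording caveat: the reason the commutant of $J$ equals that of $\mathcal A(J)$ is simply that $[h^\prime,h_1]=[h^\prime,h_2]=0$ implies $[h^\prime,h_1h_2]=0$ and linearity, not the closure properties of the commutant itself; hermiticity of $J$ is only needed to make $\mathcal A(J)$ a $*$-algebra so that taking the hermitian part of its commutant is the natural last step.
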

Here, $\mathcal{H}$ is again the support of the largest projection operator in $J$.

\begin{proof}
It is straightforward to verify $[\hat{J},J]=\{0\}$ by using the explicit forms of these algebras, Eqs. (\ref{eq:str_hat_1})-(\ref{eq:str_hat_6}) and Eqs. (\ref{eq:JordanAlg_1})-(\ref{eq:JordanAlg_6}), thus $\hat J\subseteq J^\prime$.
So, in the following, we show the inclusion of the opposite direction, that is, if a Hermitian operator $h^\prime$ commute with any operator in $J$, then 
$h^\prime\in \hat{J}$. Let us examine each case, depending on the type of $J$. Below, the range of $k$ is $\{0, 1, ..., \gamma-1\}$, unless it is for the Pauli operators for which its range is $\{0, 1, ..., \gamma-2\}$.

\begin{enumerate}
\item[(i)] $J=\mathfrak{R}$. Any operator commutes with $Id_A$, so $J'$ is ${\rm u}({\rm dim}\mathcal H)$, which is the same as $\hat{\mathfrak{R}}$. Thus, $ \hat J\supseteq J'$.

\item[(ii)] $J=\mathfrak{M}_{\gamma}^{(1)}$ or $\mathfrak{M}_{\gamma}^{(2)}$. any element in the set $J'$
can be written as
\begin{equation}
h^\prime := h^\prime_{Id}\otimes Id+
\sum_{k=0}^{\gamma-2}\!
\left(\!h^\prime_{Z_{k,k+1}}\otimes Z_{k,k+1}
+
\!\!\!\sum_{q=k+1}^{\gamma-1}\!\!\!
\left(h^\prime_{X_{k,q}}\otimes X_{k,q}+h^\prime_{Y_{k,q}}\otimes Y_{k,q}\right)\!\right)\!,
\label{eq:str_c-dyn2_m_2_sub}
\end{equation}
where $h^\prime_W$ is a Hermitian operator on the first space that makes a pair with the operator $W$ on the second space.
Because of the condition, $[h^\prime,Id\otimes\ket{k}\bra{k}]=0$, we have $h^\prime_{X_{k,q}}=h^\prime_{Y_{k,q}}=0$.
With this and another condition, $[h^\prime,Id\otimes X_{k,k+1}]=0$, we obtain $2h^\prime_{Z_{k,k+1}}=h^\prime_{Z_{k-1,k}}+h^\prime_{Z_{k+1,k+2}}$, where we set $h^\prime_{Z_{-1,0}}=h^\prime_{Z_{\gamma-1,\gamma}}=0$. This is enough to conclude $h^\prime_{Z_{k,k+1}}=0$, and thus we have $h^\prime=h^\prime_{Id}\otimes Id$. Therefore,  in the case of $J=\mathfrak{M}_{\gamma}^{(1)}$, $h^\prime\in \hat J=\hat{\mathfrak{M}}_\gamma^{(1)}$ holds.
When $J=\mathfrak{M}_{\gamma}^{(2)}$, an additional condition, $[h^\prime,Z^{*}\otimes Y_{0,1}]=0$, leads to $[h^\prime_{Id},Z^{*}]=0$, and thus $h^\prime\in \hat {\mathfrak{M}}_\gamma^{(2)}$.

\item[(iii)] $J=\mathfrak{M}_{\gamma}^{(4)}$. Any element in the set $J'$ can be written as
\begin{eqnarray}
h^\prime &:=&   \sum_{W\in\{X,Y,Z,Id\}}  \left(h^\prime_{W,Id}\otimes W\otimes Id+\sum_{k=0}^{\gamma-2}\Big(h^\prime_{W,Z_{k,k+1}}\otimes W\otimes Z_{k,k+1}\right.
\nonumber\\
&&{}    +\left.\sum_{q=k+1}^{\gamma-1}\left(h^\prime_{W,X_{k,q}}\otimes W\otimes X_{k,q}+h^\prime_{W,Y_{k,q}}\otimes W\otimes Y_{k,q}\right)\Big)\right),
\nonumber\\
\end{eqnarray}
where $h^\prime_{W,W'}$ is a Hermitian operator on the first space that makes a tensor product with $W$ and $W'$. The commutation condition $[h^\prime,Id\otimes Id\otimes\ket{k}\bra{k}]=0$ implies $h^\prime_{W,X_{k,q}}=h^\prime_{W,Y_{k,q}}=0$ for $W\in\{Id,X,Y,Z\}$. Together with this and $[h^\prime,Id\otimes Id\otimes X_{k,k+1}]=0$, we get $2h^\prime_{W,Z_{k,k+1}}=h^\prime_{W,Z_{k-1,k}}+h^\prime_{W,Z_{k+1,k+2}}$, where we set $h^\prime_{W,Z_{-1,0}}=h^\prime_{W,Z_{\gamma-1,\gamma}}=0$. These allow us to obtain $h^\prime_{W,Z_{k,k+1}}=0$.
Further, another commutation, $[h^\prime,Id\otimes W\otimes Y_{k,k+1}]\in\mathfrak{M}_{\gamma}^{(4)}$ for $W\in\{X,Y,Z\}$, as well as the relations obtained above, lead to $h^\prime_{W,Id}=0$ for $W\in\{X,Y,Z\}$. Therefore, the remaining term is only $h^\prime_{Id,Id}\otimes Id\otimes Id$ and this is obviously in $\hat{\mathfrak{M}}_{\gamma}^{(4)}$.

\item[(iv)] $J=\mathfrak{S}_{n}$. As in the proof of Lemma \ref{lemm:pseudo_Ideal}, instead of considering the general form of operators in $J^\prime$, we first define a larger set $\mathfrak{S}_{n^\prime,n^{\prime\prime}}$, and find the form of $h^\prime$ that commutes with any operator in $\mathfrak{S}_{n^\prime,n^{\prime\prime}}$. Then, we will tighten the condition to have the set of $h^\prime$ that meets the condition $[h^\prime,h]=0$ for all $h\in \mathfrak{S}_{n^\prime,n^{\prime\prime}}$. 

Let us define 
\begin{equation}
{\mathfrak{S}}_{n',n''} := \mathcal{L}(\{\overbrace{Id\otimes\cdots\otimes Id}^{n'-m}\otimes W\otimes\overbrace{Y\otimes\cdots\otimes Y}^{m-1}\}_{W\in\{X,Z\},m\in\{0,1,...,n''\}}),
\end{equation}
where $n^\prime=\lceil\frac{n}{2}\rceil$. We prove by induction that any Hermitian operator $h^\prime$ that satisfies $[h^\prime,h]=0$ for any $h\in \mathfrak{S}_{n',n''}$ has the following form:
\begin{equation}
h^\prime  := h^\prime_{Id}\otimes\overbrace{Id\otimes\cdots\otimes Id}^{n''},
\label{eq:str_c-dyn2_m_3_sub}
\end{equation}
where $h^\prime_{Id}$ is a Hermitian operator acting on the direct product of the first $n'-n''$ spaces and $0\leq n''<n'$.

When $n''=0$, $\mathfrak{S}_{n^\prime,n^{\prime\prime}}=\overbrace{Id\otimes\cdots\otimes Id}^{n'}$, thus the proposition trivially holds. Assume that $h^\prime$ has the form of Eq. (\ref{eq:str_c-dyn2_m_3_sub}) when $n''=n_{0}''-1$. 
When $n''=n_{0}''$, any Hermitian operator $h^\prime$ that commutes with any $h\in \mathfrak{S}_{n',n_{0}''}$ should have the following form:
\begin{equation}
h^\prime := h^{\prime\prime}_{Id}\otimes\overbrace{Id\otimes\cdots\otimes Id}^{n_{0}''}
    +\sum_{W\in\{X,Y,Z\}}\Delta h^\prime_{W,Id}\otimes W\otimes\overbrace{Id\otimes\cdots\otimes Id}^{n_{0}''-1}.
\end{equation}
Comparing with Eq. (\ref{eq:str_c-dyn2_m_3_sub}), we see 
\begin{equation}
h^\prime_{Id} = h^{\prime\prime}_{Id}\otimes Id+\sum_{W\in\{X,Y,Z\}}\Delta h^\prime_{W,Id}\otimes W.
\end{equation}
While $[h^{\prime\prime}_{Id}\otimes \overbrace{Id\otimes\cdots\otimes Id}^{n_0''},\overbrace{Id\otimes\cdots\otimes Id}^{n'-n_0''}\otimes W\otimes\overbrace{Y\otimes\cdots\otimes Y}^{n_{0}''-1}]=0$ for $W\in\{X,Z\}$, in order for $[\sum_{V}\Delta h^\prime_{V,Id}\otimes V\otimes\overbrace{Id\otimes\cdots\otimes Id}^{n_{0}''-1},\overbrace{Id\otimes\cdots\otimes Id}^{n'-n_0''}\otimes W\otimes\overbrace{Y\otimes\cdots\otimes Y}^{n_{0}''-1}]$ to be zero, $\Delta h^\prime_{V,Id}=0$ for any $V\in\{X, Y, Z\}$.

We now consider the case of odd $n=2n^\prime-1$. Any Hermitian operator $h^\prime$ that commutes with any $h\in \mathfrak{S}_{2n'-1}$
can be written as Eq. (\ref{eq:str_c-dyn2_m_3_sub}) for  $n''=n'-1$ since $\mathfrak{S}_{2n'-1}\supseteq{\mathfrak{S}}_{n',n'-1}$, which means that such an operator $h^\prime$ is in $\hat{\mathfrak{S}}_{2n^\prime-1}$.

Similarly, for even $n=2n^\prime$, an operator $h^\prime$ that commutes with any $h\in \mathfrak{S}_{2n'}$ should have the form of Eq. (\ref{eq:str_c-dyn2_m_3_sub}) with $n''=n'-1$, since $\mathfrak{S}_{2n'}\supseteq{\mathfrak{S}}_{n',n'-1}$.
Due to the commutation condition, $[h^\prime, Z^*\otimes\overbrace{Y\otimes\cdots\otimes Y}^{n'-1}]=0$, we have a constraint $[h^\prime_{Id},Z^{*}]=0$. Therefore, $h^\prime\in \hat{\mathfrak{S}}_{2n^\prime}$.
\end{enumerate}
\end{proof}

When ${\rm dim}\mathcal H_S=2$, it is not hard to specify the largest and the smallest possible Lie algebras for a given Jordan algebra $iG^{(1)}$. It can be done thanks to Lemmas \ref{lemm:pseudo_Ideal} and \ref{lemm:orthogonal_subspace}, as well as the inclusion relations that identifiers $iG^{(0)}$ and $iG^{(1)}$ fulfill. This fact is of help for identifying the disconnected and connected algebras, as stated in the following lemma.
 
%%%%%%%%%%%%%%%%%
%%%%%  Lemma 9  %%%%%%
%%%%%%%%%%%%%%%%%
\begin{lemma}
\label{sec:abstract_structure}
Suppose that a triple $(J_j, \bar{J}_j, \hat{J}_j)$ of algebras on $\mathcal H_{E_j}$ is equal to either one of the following three;
$({\mathfrak R}, \bar{\mathfrak R}, \hat{\mathfrak R})$,
$({\mathfrak M}_\gamma^{(k)}, \bar{\mathfrak M}_\gamma^{(k)}, \hat{\mathfrak M}_\gamma^{(k)})$ or
$({\mathfrak S}_n, \bar{\mathfrak S}_n, \hat{\mathfrak S}_n)$, where $\gamma\geq3$, $k\in\{1,2,4\}$ and $n\geq3$ (for each $j$,). 
When $\mathrm{dim}\mathcal H_S=2$, if the relation
\begin{equation}
\bigoplus_j{\mathcal L}(i\bar J_j\otimes \{Id_S\}\cup  J_j\otimes {\rm su}({\rm dim}\mathcal H_S))
\subseteq L \subseteq
\bigoplus_j{\mathcal L}(i   \hat J_j\otimes \{Id_S\}\cup i\bar J_j\otimes \{Id_S\}\cup J_j\otimes {\rm su}({\rm dim}\mathcal H_S))
\label{eq:abstract_structure_1}
\end{equation}
holds and $L$ is a Lie algebra, the disconnected and the connected algebras can be written as
\begin{eqnarray}
L_d &=& \bigoplus_j i\hat J_j\otimes \{Id_S\}
\label{eq:abstract_structure_2}
\\
L_c &=& \bigoplus_j{\mathcal L}(i  \bar J_j\otimes \{Id_S\}\cup J_j\otimes {\rm su}({\rm dim}\mathcal H_S))
\label{eq:abstract_structure_3}
\end{eqnarray}
\end{lemma}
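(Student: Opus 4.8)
The plan is to verify directly that the operator sets on the right of Eqs. (\ref{eq:abstract_structure_2}) and (\ref{eq:abstract_structure_3}) satisfy the defining properties of $L_d$ and $L_c$ in Eqs. (\ref{Ld_def}) and (\ref{Lc_def}). Write $K:=\bigoplus_j\mathcal L(i\bar J_j\otimes\{Id_S\}\cup J_j\otimes\mathrm{su}(\dim\mathcal H_S))$ for the candidate connected algebra. First I would check that $K$ is a Lie algebra: expanding $[A\otimes B,C\otimes D]=\tfrac12\{A,C\}\otimes[B,D]+\tfrac12[A,C]\otimes\{B,D\}$ and using that $\dim\mathcal H_S=2$ forces $\{W,W'\}\propto Id_S$ for all $W,W'\in H_S$ (so that no $[G^{(1)},G^{(1)}]$-relation is needed), every bracket of generators of $K$ reduces, via the identities $i\mathcal L([\bar J_j,\bar J_j])\subseteq\bar J_j$, $i\mathcal L([J_j,J_j])=\bar J_j$, $i\mathcal L([\bar J_j,J_j])\subseteq J_j$ and $\{J_j,J_j\}=J_j$ of Lemma \ref{lemm:relation}, together with the orthogonality of distinct blocks $j$, to an element of $K$. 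Next I would verify that $K$ is an ideal of $L$ containing $\{Id_E\}\otimes\mathrm{su}(\dim\mathcal H_S)$: the containment holds because the identity $P_{E_j}$ of each block lies in $J_j$, so $Id_E\otimes h=\sum_jP_{E_j}\otimes h\in\bigoplus_jJ_j\otimes\mathrm{su}(\dim\mathcal H_S)\subseteq K$; and the ideal property follows by combining the lower inclusion $K\subseteq L$ of Eq. (\ref{eq:abstract_structure_1}) with the upper one, which reduces the check to bracketing an element of $i\hat J_j\otimes\{Id_S\}$ against $K$ — this vanishes by $i[\hat J_j,J_j]=i[\hat J_j,\bar J_j]=\{0\}$ (Lemma \ref{lemm:relation}) — the remaining brackets staying in $K$ since $K$ is a Lie algebra. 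Hence $L_c\subseteq K$.

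For the reverse inclusion I would show $K$ is contained in \emph{every} ideal $I$ of $L$ with $\{Id_E\}\otimes\mathrm{su}(\dim\mathcal H_S)\subseteq I$. Using $\bigoplus_jJ_j\otimes\mathrm{su}(\dim\mathcal H_S)\subseteq L$ from Eq. (\ref{eq:abstract_structure_1}), the bracket $[Id_E\otimes h,\,a\otimes h']=a\otimes[h,h']$ lies in $I$ for every $a\in J_j$ and $h,h'\in\mathrm{su}(\dim\mathcal H_S)$; since $\mathrm{su}(2)$ is perfect, the span of $[h,h']$ over all $h,h'$ is all of $\mathrm{su}(\dim\mathcal H_S)$, so $\bigoplus_jJ_j\otimes\mathrm{su}(\dim\mathcal H_S)\subseteq I$. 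Then, for $a,a'\in J_j$ and $h\in\mathrm{su}(2)$ nonzero, $[a\otimes h,\,a'\otimes h]=\tfrac12[a,a']\otimes\{h,h\}$ with $\{h,h\}\propto Id_S$ a nonzero multiple of the identity, so $[a,a']\otimes Id_S\in I$; taking spans gives $\mathcal L([J_j,J_j])\otimes\{Id_S\}=i\bar J_j\otimes\{Id_S\}\subseteq I$. Therefore $K\subseteq I$ for every such $I$, giving $K\subseteq L_c$ and hence $L_c=K$, which is Eq. (\ref{eq:abstract_structure_3}).

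Finally, for $L_d$: the inclusion $\bigoplus_ji\hat J_j\otimes\{Id_S\}\subseteq L_d$ is immediate from $i[\hat J_j,J_j]=i[\hat J_j,\bar J_j]=\{0\}$ and block orthogonality, since $L_c=K$ is generated by the $i\bar J_j\otimes\{Id_S\}$ and $J_j\otimes\mathrm{su}(\dim\mathcal H_S)$. For the converse, given $g\in L_d$, I would expand $g=g_{Id}\otimes Id_S+\sum_{W\in H_S}g_W\otimes W$; commuting with $Id_E\otimes h$ for all $h$ yields $\sum_Wg_W\otimes[W,h]=0$, and choosing $h$ along two distinct Pauli directions forces every $g_W=0$ by the $\mathrm{su}(2)$ commutation relations, so $g=g_{Id}\otimes Id_S$. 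Then commuting with $J_j\otimes\mathrm{su}(\dim\mathcal H_S)\subseteq L_c$ forces $[g_{Id},a]=0$ for all $a\in J_j$ and all $j$; in particular $g_{Id}$ commutes with each block projection $P_{E_j}$, hence is block diagonal, and on block $j$ the hermitian operator $ig_{Id}^{(j)}$ commutes with $J_j$ and therefore lies in $\hat J_j$ by Lemma \ref{lemm:orthogonal_subspace}. This gives $g\in\bigoplus_ji\hat J_j\otimes\{Id_S\}$ and establishes Eq. (\ref{eq:abstract_structure_2}). I expect the main obstacle to be the generation argument of the second paragraph — recovering all of $K$, and in particular the $i\bar J_j\otimes\{Id_S\}$ component, purely from iterated brackets of $\{Id_E\}\otimes\mathrm{su}(\dim\mathcal H_S)$ with $L$ — because this is exactly where the special feature of $\dim\mathcal H_S=2$ (namely $\{W,W'\}\propto Id_S$) and the perfectness of $\mathrm{su}(2)$ are indispensable, and where the hermitian/skew-hermitian factor-of-$i$ bookkeeping must be handled carefully.
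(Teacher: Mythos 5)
Your proposal is correct and follows essentially the same route as the paper: the lower bound $\bigoplus_j\mathcal L(i\bar J_j\otimes\{Id_S\}\cup J_j\otimes\mathrm{su}(\mathrm{dim}\mathcal H_S))\subseteq L_c$ via brackets with $\{Id_E\}\otimes\mathrm{su}(2)$ (perfectness of $\mathrm{su}(2)$) and the $\{h,h\}\propto Id_S$ trick together with $i\mathcal L([J_j,J_j])=\bar J_j$, the upper bound by showing this set is an ideal of $L$ using the assumed upper inclusion and $[\hat J_j,J_j]=[\hat J_j,\bar J_j]=\{0\}$, and the identification of $L_d$ by first killing the $S$-components and then invoking Lemma \ref{lemm:orthogonal_subspace} blockwise. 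The only differences are presentational (phrasing the lower bound as containment in every ideal containing $\{Id_E\}\otimes\mathrm{su}(\mathrm{dim}\mathcal H_S)$, and writing out the Pauli-expansion step the paper leaves implicit).
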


\begin{proof}
For the sake of convenience, we let $\tilde{L}_d$ and $\tilde{L}_c$ denote the RHSs of Eqs. (\ref{eq:abstract_structure_2}) and (\ref{eq:abstract_structure_3}), respectively. From the definition of the connected Lie algebra, we see
\begin{eqnarray}
L_c &\supseteq& \mathcal L([\{Id_E\}\otimes {\rm su}({\rm dim}\mathcal H_S),L])
\nonumber\\
&\supseteq& 
\mathcal L([\{Id_E\}\otimes {\rm su}({\rm dim}\mathcal H_S),
\bigoplus_j J_j\otimes {\rm su}({\rm dim}\mathcal H_S)
])
\nonumber\\
&\supseteq &
\bigoplus_j J_j\otimes {\rm su}({\rm dim}\mathcal H_S).
\label{eq:abstract_structure_6}
\end{eqnarray}

In the second inclusion, we have used the first relation of the assumption Eq. (\ref{eq:abstract_structure_1}).
Using this, and since $L_c$ is a linear space that is closed under the commutator, we have
\begin{eqnarray}
L_c &\supseteq& \mathcal L([L_c,L_c])
\nonumber\\
&\supseteq &\mathcal L([\bigoplus_j J_j\otimes \{iX\},\bigoplus_j J_j\otimes \{iX\}]),
\nonumber\\
&=& 
\bigoplus_j i\bar J_j\otimes \{Id\}
\label{eq:abstract_structure_7}
\end{eqnarray}
where we have used $i\mathcal L([J_j,J_j])=\bar J_j$ (Lemma \ref{lemm:relation}).
Equations (\ref{eq:abstract_structure_6}) and (\ref{eq:abstract_structure_7}) imply $L_c\supseteq \tilde L_c$. 

The inclusion of the opposite direction $L_c\subseteq \tilde L_c$ can be shown as follows.
\begin{eqnarray}
&&[\tilde L_c,L]
\nonumber\\
&&\subseteq  [\tilde L_c,\bigoplus_j{\mathcal L}(i   \hat J_j\otimes \{Id_S\}\cup i\bar J_j\otimes \{Id_S\}\cup J_j\otimes {\rm su}({\rm dim}\mathcal H_S))]
\nonumber\\
&&\subseteq \bigoplus_j{\mathcal L}(
 [\bar J_j,  \hat J_j]\otimes \{Id_S\}\cup
i[       J_j,  \hat J_j]\otimes {\rm su}({\rm dim}\mathcal H_S)\cup 
 [\bar J_j,  \bar J_j]\otimes \{Id_S\}
\nonumber\\
&& \;\;\; \cup 
i[       J_j,  \bar J_j]\otimes {\rm su}({\rm dim}\mathcal H_S)\cup 
[       J_j,         J_j]\otimes \{Id_S\}\cup 
\{      J_j,        J_j\}\otimes {\rm su}({\rm dim}\mathcal H_S))
\nonumber\\
&&= \bigoplus_j{\mathcal L}(
i \bar J_j\otimes \{Id_S\}\cup
       J_j\otimes {\rm su}({\rm dim}\mathcal H_S)) 
\nonumber \\
&&= \tilde L_c
\label{eq:abstract_structure_4}
\end{eqnarray}
The first inclusion is simply from Eq. (\ref{eq:abstract_structure_1}).
The second inclusion is due to a generic inclusion $[A\otimes {\rm su}({\rm dim}\mathcal H_S),B\otimes {\rm su}({\rm dim}\mathcal H_S)] \subseteq \mathcal L([A,B]\otimes \{Id\}\cup \{A,B\}\otimes {\rm su}({\rm dim}\mathcal H_S))$, which is valid for arbitrary operator sets $A$ and $B$ when dim$\mathcal{H_S}=2$.
The third relation comes from the results of Lemma \ref{lemm:relation}, namely,
 $[\bar J_j,\hat J_j]=[ J_j,\hat J_j]=\{0\}$,
 $i[\bar J_j,\bar J_j]\subseteq i\mathcal L([J_j,J_j])= \bar J_j$, and
 $i[ J_j,\bar J_j]\subseteq\{J_j,J_j\}= J_j$. 
Equation (\ref{eq:abstract_structure_4}) and $\tilde L_c\subseteq L$ imply that $\tilde L_c$ is an ideal of $L$, which means $L_c\subseteq \tilde L_c$. Hence, Eq. (\ref{eq:abstract_structure_3}) is proved.

Next, we show Eq. (\ref{eq:abstract_structure_2}), which is about the disconnected algebra $L_d$.
It is straightforward to verify $[\tilde L_d,L_c]=\{0\}$ by noting $[\bar J_j,\hat J_j]=[ J_j,\hat J_j]=\{0\}$ (Lemma \ref{lemm:relation}), thus we have $\tilde L_d\subseteq L_d$. The opposite is shown as follows:
\begin{eqnarray}
L_{d}&\subseteq&
\{g|g\in  {\rm u}({\rm dim}\mathcal H_{E}\; {\rm dim}\mathcal H_S )\land \forall g'\in (\bigoplus_j J_j)\otimes {\rm su}({\rm dim}\mathcal H_S), [g,g']=0 \}.
\nonumber\\
&=&
i\{h|h\in i\cdot{\rm u}({\rm dim}\mathcal H_{E} )\land \forall h^\prime\in  \bigoplus_j J_j, [h,h^\prime]=0 \}\otimes \{Id_S\}.
\nonumber\\
&=& \bigoplus_j i\{h|h\in  i\cdot{\rm u}({\rm dim}\mathcal H_{E_j} )\land \forall h^\prime\in  J_j, [h,h^\prime]=0 \}\otimes \{Id_S\}.
\nonumber \\
&=& 
\bigoplus_j i\hat J_j \otimes \{Id_S\}
\nonumber \\
&=& \tilde L_d
\end{eqnarray}
The first relation is a paraphrase of the definition of $L_d$, and the second and the third equalities are justified since $Id_{E}$ is in $\bigoplus_j J_j$ and $Id_{E_j}\in J_j$, respectively. The fourth is due to Lemma  \ref{lemm:orthogonal_subspace}.
\end{proof}

In the case of ${\rm dim}\mathcal H_S\ge 3$, the structure of $G^{(1)}$ is simple.
Therefore, as in the case of dim$\mathcal{H}_S=2$, the largest and smallest Lie algebras for a given $G^{(1)}$ can be obtained, and this constraint enables us to identify the disconnected and connected algebras as follows.

%%%%%%%%%%%%%%%%%%%%%%%%%%%%%%%%%%%%%%%%%%%%%%%%%%%%%%%%%%%%%
%%%%%%%%%%%%%%%%%%
%%%%%  Lemma 10  %%%%%%
%%%%%%%%%%%%%%%%%%
\begin{lemma}
\label{sec:abstract_structure_lemm10}
When ${\rm dim}\mathcal H_S\ge 3$, if the relation
\begin{equation}
\bigoplus_j \{Id_{B_j}\}\otimes {\rm su}({\rm dim}\mathcal H_{R_j}\cdot {\rm dim}\mathcal H_S\}
\subseteq L  \subseteq \bigoplus_j {\mathcal L}({\rm u}(B_j)\otimes \{Id_{R_j}\otimes Id_S\}\cup \{Id_{B_j}\}\otimes {\rm su}({\rm dim}\mathcal H_{R_j}\cdot {\rm dim}\mathcal H_S\})
\label{eq:abstract_structure_lemm10_0}
\end{equation}
holds and $L$ is a Lie algebra,
the disconnected and the connected algebras can be written as
\begin{eqnarray}
L_d &=& \bigoplus_j {\rm u}(B_j)\otimes \{Id_{R_j}\otimes Id_S\},
\label{eq:abstract_structure_lemm10_1}\\
L_c &=& \bigoplus_j \{Id_{B_j}\}\otimes {\rm su}({\rm dim}\mathcal H_{R_j}\cdot {\rm dim}\mathcal H_S\}.
\label{eq:abstract_structure_lemm10_2}
\end{eqnarray}
\end{lemma}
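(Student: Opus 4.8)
The plan is to run the same two-step argument as in the proof of Lemma \ref{sec:abstract_structure}, but in the much simpler setting where each block algebra is already a full special unitary algebra, so that no Jordan-algebraic bookkeeping survives. Abbreviate the right-hand sides of (\ref{eq:abstract_structure_lemm10_1}) and (\ref{eq:abstract_structure_lemm10_2}) by $\tilde L_d$ and $\tilde L_c$. I will first establish $L_c=\tilde L_c$ and then, using this, compute $L_d$ as the commutant of $L_c$.

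For $L_c\supseteq\tilde L_c$: by the definition of the connected algebra, $L_c\supseteq\mathcal L([\{Id_E\}\otimes {\rm su}({\rm dim}\mathcal H_S),L])$, and the lower bound in (\ref{eq:abstract_structure_lemm10_0}) gives $\tilde L_c\subseteq L$. Writing $\{Id_E\}\otimes {\rm su}({\rm dim}\mathcal H_S)$ block-wise as $\bigoplus_j\{Id_{B_j}\otimes Id_{R_j}\}\otimes {\rm su}({\rm dim}\mathcal H_S)$, on each block this is a nonzero subalgebra of the simple Lie algebra $\{Id_{B_j}\}\otimes {\rm su}({\rm dim}\mathcal H_{R_j}\cdot {\rm dim}\mathcal H_S)$; choosing the inner generators $g_m$ in (\ref{Lc_def}) from $\tilde L_c$ and supported on a single block, iterated commutators generate the whole of that block by simplicity, so summing over $j$ yields $L_c\supseteq\tilde L_c$. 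For the reverse inclusion I show $\tilde L_c$ is an ideal of $L$: by the upper bound in (\ref{eq:abstract_structure_lemm10_0}), $[\tilde L_c,L]$ reduces to the two brackets $[\{Id_{B_j}\}\otimes {\rm su}(\cdot),{\rm u}({\rm dim}\mathcal H_{B_j})\otimes\{Id_{R_j}\otimes Id_S\}]=\{0\}$ (complementary tensor factors) and $[\{Id_{B_j}\}\otimes {\rm su}(\cdot),\{Id_{B_j}\}\otimes {\rm su}(\cdot)]\subseteq\{Id_{B_j}\}\otimes {\rm su}(\cdot)$ (closure of ${\rm su}$ under the commutator), with cross terms between distinct blocks vanishing. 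Hence $[\tilde L_c,L]\subseteq\tilde L_c$; since $\tilde L_c\subseteq L$ contains $\{Id_E\}\otimes {\rm su}({\rm dim}\mathcal H_S)$, minimality of $L_c$ forces $L_c\subseteq\tilde L_c$, so $L_c=\tilde L_c$.

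For the disconnected algebra I compute the commutant of $L_c=\tilde L_c$ among skew-hermitian operators on $\mathcal H_E\otimes\mathcal H_S$. The routine inclusion $\tilde L_d\subseteq L_d$ is immediate from $[\tilde L_d,\tilde L_c]=\{0\}$. For the reverse, the key fact is that, since ${\rm dim}\mathcal H_S\ge 3$ so every block dimension ${\rm dim}\mathcal H_{R_j}\cdot {\rm dim}\mathcal H_S$ is at least $3$, the real associative algebra generated by ${\rm su}(d)$ is all of $M_d(\mathbb C)$; applied block by block, the associative algebra generated by $\tilde L_c$ together with the identity is $\bigoplus_j\{Id_{B_j}\}\otimes {\rm End}(\mathcal H_{R_j}\otimes\mathcal H_S)$. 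In particular it contains every block projector $P_{E_j}\otimes Id_S$, so any operator commuting with $\tilde L_c$ is block-diagonal with respect to $\bigoplus_j\mathcal H_{E_j}\otimes\mathcal H_S$; applying the double-commutant theorem (Schur's lemma) within each block then identifies the commutant as $\bigoplus_j {\rm End}(\mathcal H_{B_j})\otimes Id_{R_j}\otimes Id_S$, whose skew-hermitian part is exactly $\tilde L_d$. Hence $L_d=\tilde L_d$.

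The only genuinely nontrivial step is the commutant computation for $L_d$: one must verify that ${\rm su}(d)$ and the identity generate $M_d(\mathbb C)$ as a real algebra for $d\ge 2$, deduce that the block projectors lie in the generated algebra so that the commutant respects the direct-sum decomposition, and then invoke Schur's lemma to pin down each block's commutant as ${\rm End}(\mathcal H_{B_j})\otimes Id$. Everything else — the ideal verification for $L_c$, the vanishing of cross-block brackets, and the passage to skew-hermitian operators — is bilinearity and block bookkeeping, essentially identical to the corresponding steps in the proof of Lemma \ref{sec:abstract_structure}.
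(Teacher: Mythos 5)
Your proof is correct, and it reaches the two identities of Lemma \ref{sec:abstract_structure_lemm10} by a route that is only partly the paper's. For $L_c$, your upper bound (showing $\tilde L_c$ is an ideal of $L$ containing $\{Id_E\}\otimes \mathrm{su}(\mathrm{dim}\mathcal H_S)$, then invoking minimality of $L_c$) is exactly the paper's argument; for the lower bound, however, the paper computes explicit commutators --- first bracketing against $\{Id_E\}\otimes \mathrm{su}(\mathrm{dim}\mathcal H_S)$ to obtain $\bigoplus_j i\{Id_{B_j}\}\otimes \mathrm{u}(\mathrm{dim}\mathcal H_{R_j})\otimes \mathrm{su}(\mathrm{dim}\mathcal H_S)$, then bracketing such elements with each other --- whereas you argue abstractly: the span of iterated brackets of the nonzero subalgebra $\{Id_{R_j}\}\otimes \mathrm{su}(\mathrm{dim}\mathcal H_S)$ with block-supported elements of $\tilde L_c\subseteq L$ is a nonzero ideal of the simple algebra $\mathrm{su}(\mathrm{dim}\mathcal H_{R_j}\cdot \mathrm{dim}\mathcal H_S)$, hence the whole block; both are sound, yours being shorter and the paper's more elementary. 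For $L_d$ the routes genuinely differ: the paper identifies the commutant of $L_c$ in stages, peeling off tensor factors by using $Id_S\in \mathrm{u}(\mathrm{dim}\mathcal H_S)$ and $Id_{R_j}\in \mathrm{u}(\mathrm{dim}\mathcal H_{R_j})$, while you pass to the associative algebra generated by $L_c$, note that it contains the block projectors (so the commutant is block diagonal), and finish with Schur's lemma / the double commutant; this buys a one-line identification of each block's commutant at the cost of the auxiliary fact that $\mathrm{su}(d)$ generates $M_d(\mathbb C)$ as a real associative algebra. One caveat: that fact is false for $d=2$ (the generated real algebra is the quaternions), so the ``for $d\ge 2$'' in your closing paragraph is a slip; your actual application only needs $d=\mathrm{dim}\mathcal H_{R_j}\cdot \mathrm{dim}\mathcal H_S\ge 3$, which the hypothesis $\mathrm{dim}\mathcal H_S\ge 3$ guarantees, so the proof stands --- and, incidentally, this failure at $d=2$ is another reflection of why the two-dimensional case requires the separate treatment of Lemma \ref{sec:abstract_structure}.
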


\begin{proof}
We shall let $\tilde{L}$, $\tilde{L}_d$, and $\tilde{L}_c$ denote the RHSs of Eqs. (\ref{eq:abstract_structure_lemm10_0}), (\ref{eq:abstract_structure_lemm10_1}), and (\ref{eq:abstract_structure_lemm10_2}), respectively, to simplify the notations for the proof.
From the definition of the connected Lie algebra, we have
\begin{eqnarray}
L_c&\supseteq&  \mathcal L([\{Id_E\}\otimes {\rm su}({\rm dim}\mathcal H_S),L])
\nonumber\\
&\supseteq&  
\mathcal L([\{Id_E\}\otimes {\rm su}({\rm dim}\mathcal H_S),
\bigoplus_j i\{Id_{B_j}\}\otimes {\rm u}({\rm dim}\mathcal H_{R_j})\otimes {\rm su}({\rm dim}\mathcal H_S)
])
\nonumber\\
&\supseteq & 
\bigoplus_j i\{Id_{B_j}\}\otimes {\rm u}({\rm dim}\mathcal H_{R_j})\otimes {\rm su}({\rm dim}\mathcal H_S).
\label{eq:abstract_structure_lemm10_4}
\end{eqnarray}
In the second line, we have used the first relation of Eq. (\ref{eq:abstract_structure_lemm10_0}).
Since, $L_c$ is closed under the commutator and is a linear space,
\begin{eqnarray}
L_c&\supseteq& \mathcal L([L_c,L_c])
\nonumber\\
&\supseteq&  \mathcal L([
\bigoplus_j i\{Id_{B_j}\}\otimes {\rm u}({\rm dim}\mathcal H_{R_j})\otimes  \{iX_{0,1}\},
\bigoplus_j i\{Id_{B_j}\}\otimes {\rm u}({\rm dim}\mathcal H_{R_j})\otimes  \{iX_{0,1}\}])
\nonumber\\
&=&  
\bigoplus_j i\{Id_{B_j}\}\otimes {\rm su}({\rm dim}\mathcal H_{R_j})\otimes  \{i(\ket0\bra0+\ket1\bra1)\}.
\label{eq:abstract_structure_lemm10_5}
\end{eqnarray}
We have used Eq. (\ref{eq:abstract_structure_lemm10_4}) in the second inclusion relation, and the fact $i\mathcal L([{\rm u}(N),{\rm u}(N)])={\rm su}(N)$ in the last step.

From Eqs. (\ref{eq:abstract_structure_lemm10_4}) and (\ref{eq:abstract_structure_lemm10_5}) and the fact that $L_c$ is a linear space, we know $L_c\supseteq \tilde L_c$. 
The opposite inclusion, $L_c\subseteq \tilde L_c$, is also guaranteed by the fact that $\tilde L_c$ includes the set 
$\{Id_E\}\otimes{\rm su}({\rm dim}\mathcal H_S)$, and that $\tilde L_c$ is an ideal of $L$, i.e. $[\tilde L_c,L]\subseteq [\tilde L_c,\tilde L]\subseteq \tilde L_c$.
As a result, Eq. (\ref{eq:abstract_structure_lemm10_2}) is derived.

Next, let us turn to $L_d$. Recall the definition of $L_d$ in Eq. (\ref{Ld_def}), viz., $L_d := \{g|g\in {\rm u}({\rm dim}\mathcal H_E\cdot {\rm dim}\mathcal H_S )\land \forall g^\prime\in L_c, [g,g^\prime]=0 \}$. Then, from the definition of $\tilde L_d$, which is the RHS of Eq. (\ref{eq:abstract_structure_lemm10_1}), we have $[\tilde L_d,L_c]=\{0\}$, thus $\tilde L_d\subseteq L_d$. The opposite inclusion is shown as follows.
\begin{eqnarray}
L_{d} &\subseteq& 
\{g|g\in  {\rm u}({\rm dim}\mathcal H_{E}\cdot {\rm dim}\mathcal H_S )
\land \forall g^\prime\in (\bigoplus_j 
\{Id_{B_j}\}\otimes{\rm u}({\rm dim}\mathcal H_{R_j}) )\otimes {\rm su}({\rm dim}\mathcal H_S), [g,g^\prime]=0 \}.
\nonumber\\
&=& 
\{g|g\in  {\rm u}({\rm dim}\mathcal H_{E} )
\land \forall g^\prime \in \bigoplus_j \{Id_{B_j}\}\otimes{\rm u}({\rm dim}\mathcal H_{R_j}), [g,g^\prime]=0 \}\otimes \{Id_S\}.
\nonumber\\
&=& 
\bigoplus_j \{g|g\in  {\rm u}({\rm dim}\mathcal H_{B_j}\cdot {\rm dim}\mathcal H_{R_j})
\land \forall g^\prime\in  \{Id_{B_j}\}\otimes{\rm u}({\rm dim}\mathcal H_{R_j}), [g,g^\prime]=0 \}\otimes \{Id_S\}.
\nonumber\\
&=& 
\bigoplus_j  {\rm u}({\rm dim}\mathcal H_{B_j})\otimes\{Id_{R_j}\otimes Id_S\} \nonumber \\
&=& \tilde L_d.
\end{eqnarray}
The second and the third inclusions are results of $Id_S\in \mathrm{u(dim}\mathcal{H}_S)$ and $Id_{R_j}\in {\rm u}({\rm dim}\mathcal H_{R_j})$, respectively.
\end{proof}

As we have seen in Theorems \ref{th2:str_ds3_dem} and \ref{th3:str_ds2_dem}, the space $\mathcal{H}_E$ can have a structure of either Eq. (\ref{eq:structure_e_ds3}) or Eq. (\ref{eq:structure_e_ds2}), when dim$\mathcal{H}_S\ge 3$ or $=2$, respectively. Let us now consider the situation in which an additional space $S^\prime$ is attached to 2-dimensional $S$. While $\mathcal{H}_E$ has a structure of Eq. (\ref{eq:structure_e_ds2}) because dim$\mathcal{H}_S=2$, it can also have a structure of Eq. (\ref{eq:structure_e_ds3}) if we regard $SS^\prime$ as a single space whose dimensionality is higher than 4 (because dim$\mathcal{H}_{S^\prime}\ge 2$). This means that these two structures coexist in this case we can give two structures for $\mathcal H_E$ depending on an operator $H$ acting on $\mathcal H_E\otimes \mathcal H_S$ when ${\rm dim}\mathcal H_S=2$.
The following lemma is useful for understanding the relation between the two types of structures of $\mathcal{H}_E$, and thus for proving Theorem \ref{th:str_rel}.

%%%%%%%%%%%%%%%%%%
%%%%%  Lemma 11  %%%%%%
%%%%%%%%%%%%%%%%%%

\begin{lemma}
\label{sec:str_mod}
Consider pairs of algebras, $(J, J^\prime)$, each of which is taken from Eqs. (\ref{eq:JordanAlg_1})-(\ref{eq:JordanAlg_6}) and Eqs. (\ref{eq:def_J'_R})-(\ref{eq:def_J'_Seven}) so that there is a correspondence in the type of algebras, such as $(\mathfrak{M}_\gamma^{(4)}, \mathfrak{M}_\gamma^{\prime(4)})$. 
Then, $J^\prime$ is the smallest set of Hermitian operators that is closed under commutator and anticommutator, while containing the set $J$.
\end{lemma}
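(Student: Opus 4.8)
The plan is to prove the lemma separately for each of the six types of $(J,J')$, since we already possess the explicit representations of both the non-primed algebras (Eqs. (\ref{eq:JordanAlg_1})--(\ref{eq:JordanAlg_6})) and the primed ones (Eqs. (\ref{eq:def_J'_R})--(\ref{eq:def_J'_Seven})). For each type there are two things to verify: (a) $J\subseteq J'$ and $J'$ is closed under both $[\cdot,\cdot]$ and $i\{\cdot,\cdot\}$ (equivalently, closed under ordinary operator multiplication, since a real-linear space of hermitian operators closed under $[\cdot,\cdot]$ and $i\{\cdot,\cdot\}$ is exactly a $*$-algebra over $\mathbb{R}$, because $2hh' = \{h,h'\} + [h,h'] = -i\cdot i\{h,h'\} + [h,h']$); and (b) minimality: any real-linear operator space closed under these two operations and containing $J$ must contain all of $J'$.

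For part (a), closure of $J'$ is immediate from the explicit forms: $\mathfrak{R}'=\{Id_A\}$ is trivially closed; $\mathfrak{M}_\gamma^{\prime(1)}$, $\mathfrak{M}_\gamma^{\prime(4)}$, $\mathfrak{S}_{2n'-1}'$ are of the form $i\{Id\}\otimes\mathrm{u}(\text{something})$, which is a full unitary Lie-plus-Jordan algebra on the relevant factor and hence closed; and $\mathfrak{M}_\gamma^{\prime(2)}$, $\mathfrak{S}_{2n'}'$ have the form $i\mathcal{L}((\{Id_{A^{(+1)}}\}\oplus\{Id_{A^{(-1)}}\})\otimes\mathrm{u}(\cdots))$, i.e. the commutant of $Z^*$ tensored with a full unitary algebra, which is again closed as a $*$-algebra. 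The inclusion $J\subseteq J'$ is checked term by term against the generators in Eqs. (\ref{eq:JordanAlg_1})--(\ref{eq:JordanAlg_6}): e.g. for $\mathfrak{M}_\gamma^{(2)}$ the generators $Id_A\otimes X_{k,q}$, $Id_A\otimes\ket{k}\bra{k}$ all lie in $(\{Id_{A^{(+1)}}\}\oplus\{Id_{A^{(-1)}}\})\otimes\mathrm{u}(\dim\mathcal{H}_Q)$ since $Id_A = Id_{A^{(+1)}}\oplus Id_{A^{(-1)}}$, and $Z^*\otimes Y_{k,q}$ also lies there since $Z^* = Id_{A^{(+1)}}\oplus(-Id_{A^{(-1)}})$.

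For part (b), the minimality, the key observation is that the generators of $J$ already contain enough "matrix units" on the $\mathcal{H}_Q$-type factors to generate the full matrix algebra there, once we are allowed to multiply. Concretely: the operators $X_{k,q}$ and $\ket{k}\bra{k}$ (for $\mathfrak{M}_\gamma^{(k)}$), or the Pauli-string generators $W\otimes Y\otimes\cdots$ (for $\mathfrak{S}_n$), generate under multiplication the full operator algebra on $\mathcal{H}_Q\otimes\mathcal{H}_{Q^{(1)}}\otimes\cdots$; tensoring with $Id_A$ on the left this gives the whole $\{Id_A\}\otimes\mathrm{u}(\cdots)$ part, which exhausts $J'$ in the cases $\mathfrak{R},\mathfrak{M}_\gamma^{(1)},\mathfrak{M}_\gamma^{(4)},\mathfrak{S}_{2n'-1}$. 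In the remaining two cases, $\mathfrak{M}_\gamma^{(2)}$ and $\mathfrak{S}_{2n'}$, one also has the $Z^*$-carrying generators; multiplying a $Z^*\otimes(\cdots)$ generator by an $Id_A\otimes(\cdots)$ generator produces $Z^*\otimes(\text{arbitrary})$, and since $Z^*$ acts invertibly (its square is $Id_A$) we recover the full $(\text{span}\{Id_A,Z^*\})\otimes\mathrm{u}(\cdots) = (\{Id_{A^{(+1)}}\}\oplus\{Id_{A^{(-1)}}\})\otimes\mathrm{u}(\cdots) = J'$. Combining with $J\subseteq J'$ and the closure of $J'$, this shows $J'$ is precisely the smallest such space.

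The main obstacle I expect is the bookkeeping in the $\mathfrak{S}_n$ cases: the generators are long Pauli strings of the form $Id^{\otimes(n'-m)}\otimes W\otimes Y^{\otimes(m-1)}$, and one has to argue carefully that products of these (together with the identity) span the full $\mathrm{u}$ on the tensor product of the $\lceil n/2\rceil-1$ two-dimensional factors — this amounts to showing that a suitable set of Pauli strings generates the full Pauli group, which is true but requires a clean inductive argument on $n'$ rather than a brute-force listing. The subtlety with $Z^*$ having possibly both eigenvalues $\pm1$ (so that $Z^*$ is not proportional to $Id_A$, and $\{Id_A,Z^*\}$ is genuinely two-dimensional, reproducing the direct-sum structure $\{Id_{A^{(+1)}}\}\oplus\{Id_{A^{(-1)}}\}$) is where the splitting of $\mathcal{H}_A$ in Theorem \ref{th:str_rel} enters, and must be tracked consistently; once $Z^*\propto Id_A$ the case degenerates to the corresponding single-block situation, consistent with the case distinction already made in the proof of Theorem \ref{th2:str_ds3_dem}.
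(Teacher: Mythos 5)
Your proposal is correct in substance, but it takes a genuinely different route from the paper. The paper proves minimality entirely inside the bracket structure: for each pair $(J,J')$ it exhibits explicit identities built only from $i[\cdot,\cdot]$ and $\{\cdot,\cdot\}$ that produce a basis of $J'$ from $J$ (e.g.\ the relations generating $Z^*\otimes X_{k,q}$, $Z^*\otimes\ket{k}\bra{k}$, $Id_A\otimes Y_{k,q}$ in the $\mathfrak{M}_\gamma^{(2)}$ case), and for $\mathfrak{S}_n$ it runs an induction over the nested sets $M_{n'}$ of Eq.~(\ref{eq:def_Mn}), stripping the $Y$-strings factor by factor with anticommutators. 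You instead pass to the complexification: a real space $G$ of hermitian operators closed under the hermiticity-preserving brackets has $G+iG$ closed under operator multiplication, so the smallest such $G$ containing $J$ is the hermitian part of the associative $*$-algebra generated by $J$; you then verify case by case that this algebra is $\{Id_A\}\otimes M(\cdots)$, or $\mathrm{span}\{Id_A,Z^*\}\otimes M(\cdots)$ in the $Z^*$-carrying cases, whose hermitian part is exactly $J'$. This is a cleaner conceptual reduction (one general lemma replaces most of the identity bookkeeping), while the paper's direct computations have the virtue of never leaving the set of hermitian operators and of producing the explicit generating relations that are reused elsewhere. Two small points you should tighten: (i) the hermiticity-preserving operations are $i[\cdot,\cdot]$ and $\{\cdot,\cdot\}$ (not $[\cdot,\cdot]$ and $i\{\cdot,\cdot\}$ as written), and the set $G$ itself is not a $*$-algebra --- it is $G+iG$ that is, so after generating elements like $Z^*\otimes(\text{arbitrary})$ by products you must extract hermitian parts, which works because $\mathrm{Herm}(G+iG)=G$; (ii) the $\mathfrak{S}_n$ step you defer (that the Pauli strings of Eqs.~(\ref{eq:JordanAlg_5})--(\ref{eq:JordanAlg_6}) generate the full algebra on the qubit factors) is precisely the paper's $M_m$ induction, so it should be written out rather than asserted, though the inductive mechanism you indicate (peel off the rightmost factors using $Y^{\otimes(m-1)}$) is the correct one.
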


For the sake of clarity, let us re-list Eqs. (\ref{eq:JordanAlg_1})-(\ref{eq:JordanAlg_6}) and (\ref{eq:def_J'_R})-(\ref{eq:def_J'_Seven}) here again:
\setcounter{equation}{24}
\begin{eqnarray}
\mathfrak{R} &:=&  \{Id_A\}, 
\\
\mathfrak{M}_{\gamma}^{(1)} &:=&   \mathcal{L} (\{Id_A\otimes X_{k,q}, Id_A\otimes\ket{k}\bra{k}\}_{k\neq q\in\{0,1,\cdots,\gamma-1\}}), 
\\
\mathfrak{M}_{\gamma}^{(2)} &:=&   \mathcal{L} ( \{Id_A\otimes X_{k,q}, Id_A\otimes\ket{k}\bra{k}, Z^{*}\otimes Y_{k,q}\}_{k\neq q\in\{0,1,\cdots,\gamma-1\}}), 
\\
\mathfrak{M}_{\gamma}^{(4)} &:=&   \mathcal{L} ( \{Id_A\otimes Id_{Q^{(1)}}\otimes X_{k,q},
 Id_A\otimes Id_{Q^{(1)}}\otimes\ket{k}\bra{k},
 Id_A\otimes W\otimes Y_{k,q}\}_{W\in\{X,Y,Z\},\;k\neq q\in\{0,1,\cdots,\gamma-1\}}), 
\\
\mathfrak{S}_{2n'-1}
&:=&   \mathcal{L} ( \{ \overbrace{Id\otimes\cdots\otimes Id}^{n'-m}\otimes W\otimes\overbrace{Y\otimes\cdots\otimes Y}^{m-1},
 Id\otimes\cdots\otimes Id\}_{W\in\{X,Z\},m\in\{1,2,\cdots,n'-1\}}), 
\\
\mathfrak{S}_{2n'}
&:=&   \mathcal{L} ( \{ \overbrace{Id\otimes\cdots\otimes Id}^{n'-m}\otimes W\otimes\overbrace{Y\otimes\cdots\otimes Y}^{m-1}, Id\otimes\cdots\otimes Id,
 Z^{*}\otimes\overbrace{Y\otimes\cdots\otimes Y}^{n'-1}\}_{W\in\{X,Z\},m\in\{1,2,\cdots,n'-1\}}), 
\end{eqnarray}
and 
\setcounter{equation}{64}
\begin{eqnarray}
{\mathfrak{R}^\prime} &=&   \{Id_A\},
\\
{\mathfrak{M}}_{\gamma}^{\prime(1)}
&=& i   \{Id_A\}\otimes    {\rm u}({\rm dim}\mathcal H_Q), 
\\
{\mathfrak{M}}_{\gamma}^{\prime(2)}
&=&
i\mathcal L((\{Id_{A^{(+1)}}\}\oplus
\{Id_{A^{(-1)}}\})\otimes {\rm u}({\rm dim}\mathcal H_Q)), 
\\
{\mathfrak{M}}_{\gamma}^{\prime(4)}
&=& i   \{Id_A\}\otimes  {\rm u}({\rm dim}\mathcal H_{Q^{(1)}}\cdot {\rm dim}\mathcal H_{Q}), 
\\
{\mathfrak{S}}_{2n'-1}'
&=& i   \{Id_A\}\otimes
{\rm u}({\rm dim}\mathcal H_{Q^{(n'-1)}}\cdot {\rm dim}\mathcal H_{Q^{(n^\prime-2)}}\cdots {\rm dim}\mathcal H_{Q^{(1)}}), 
\\
{\mathfrak{S}}_{2n'}'
&=&
i\mathcal L((\{Id_{A^{(+1)}}\}\oplus
\{Id_{A^{(-1)}}\})
\otimes {\rm u}({\rm dim}\mathcal H_{Q^{(n^\prime-1)}}\cdot {\rm dim}\mathcal H_{Q^{(n^\prime-2)}}\cdots {\rm dim}\mathcal H_{Q^{(1)}})). 
\end{eqnarray}
\setcounter{equation}{197}

\begin{proof}
It is almost trivial to see $J\subseteq J^\prime$, $i[J^\prime,J^\prime]\subseteq J'$, and $\{J^\prime,J^\prime\}\subseteq J^\prime$, from the definitions above. 
Therefore, to prove the lemma, it is enough if we check that any element of the basis of $J^\prime$ can be generated by $J$.
We will verify below that this proposition holds for all instances of $(J, J^\prime)$. %The rest of this section is devoted to prove this condition.

\begin{enumerate}
\item[(i)] $J=\mathfrak{R}$. This case is trivial because $\mathfrak{R}=\mathfrak{R}^\prime$.

\item[(ii)] $J=\mathfrak{M}_{\gamma}^{(1)}$. Clearly, $X_{k,q}$ and $\ket{k}\bra{k}$ in the $Q$ space are sufficient to span u(dim$\mathcal{H}_Q$) in Eq. (\ref{eq:def_J'_M1}). 

\item[(iii)] $J=\mathfrak{M}_{\gamma}^{(2)}$. 
Note that $\{Z^*\otimes X_{k,q}, Z^*\otimes\ket{k}\bra{k},Id_A\otimes Y_{k,q}\}$ are in $\mathfrak{M}_\gamma^{(2)}$, since 
\begin{eqnarray}\label{eq:Mgamma2AndMgamma2prime}
Z^*\otimes X_{k,q} &=&  i[Id_A\otimes\ket{k}\bra{k},Z^*\otimes Y_{k,q}],
\nonumber\\
Z^*\otimes\ket{k}\bra{k}  &=& -\frac 14 \{i[Id_A\otimes X_{k,q},Z^{*}\otimes Y_{k,q}],Id_A\otimes \ket k\bra k\},
\nonumber\\
Id_A\otimes Y_{k,q}   &=& i[Id_A\otimes X_{k,q},Id_A\otimes\ket{k}\bra{k}].
\end{eqnarray}
It follows from Eq. (\ref{eq:Mgamma2AndMgamma2prime}) and Eq. (\ref{eq:JordanAlg_3}) that it is possible to span $\mathfrak{M}_\gamma^{\prime (2)}$ by elements in $\mathfrak{M}_\gamma^{(2)}$.

\item[(iv)] $J=\mathfrak{M}_{\gamma}^{(4)}$. 
Similarly, we see that 
\begin{eqnarray}
Id_A\otimes W\otimes X_{k,q}  
&=& i[Id_A\otimes Id_{Q^{(1)}}\otimes\ket{k}\bra{k},Id_A\otimes W\otimes Y_{k,q}],
\nonumber\\
Id_A\otimes W\otimes\ket{k}\bra{k}
&=& -\frac14\{i[Id_A\otimes Id_{Q^{(1)}}\otimes X_{k,q},Id_A\otimes W\otimes Y_{k,q}],Id_A\otimes Id_{Q^{(1)}}\otimes\ket{k}\bra{k}\},
\nonumber\\
Id_A\otimes Id_{Q^{(1)}}\otimes Y_{k,q}
&=& i[Id_A\otimes Id_{Q^{(1)}}\otimes X_{k,q},Id_A\otimes Id_{Q^{(1)}}\otimes\ket{k}\bra{k}]
\end{eqnarray}
with $W\in \{X,Y,Z\}$ are in $\mathfrak{M}_{\gamma}^{(4)}$, thus, together with Eq. (\ref{eq:JordanAlg_4}), $\mathfrak{M}_{\gamma}^{\prime(4)}$ can be spanned by elements in $\mathfrak{M}_{\gamma}^{(4)}$.

\item[(v)] $J=\mathfrak{S}_{2n-1}$.
To prove that $\mathfrak{S}_{2n-1}^\prime$ can be generated by $\mathfrak{S}_{2n-1}$, let us define sets of operators on $\mathcal H$ labeled by integer $0\leq n'<n$ as
\begin{eqnarray}
\label{eq:def_Mn}
M_{n'} &:=& \{Id_A\otimes\overbrace{Id_{Q^{(n-1)}}\otimes\cdots\otimes Id_{Q^{(n^\prime + 1)}}}^{n-n'-1}
\nonumber\\&&{}\quad
\otimes W_{n'}\otimes W_{n'-1}\otimes\cdots\otimes W_{1}\}_{W_{m}\in\{X,Y,Z,Id\}},
\end{eqnarray}
and prove that, when $m < n$, $M_{m} \subset M_n$ and any element in $M_{m}$ can be generated by $M_{m-1}$ and $\mathfrak{S}_{2n-1}$.

As the basis, we see that $M_{0}$ contains only one element  $Id_E$.
For the inductive step, suppose that any $h\in M_{m-1}$ can be generated from $\mathfrak{S}_{2n-1}$ for $m<n$.
Then, by taking anti-commutators of elements in $\mathfrak{S}_{2n-1}$, we see
\begin{eqnarray}
\label{198}
Id_1\otimes W\otimes Id_2 &=& \frac12\{ Id_1\otimes W\otimes Y_2, Id_1\otimes Id_{Q^{(m)}}\otimes Y_2\},
\nonumber\\
\\
\label{199}
h\cdot (Id_1\otimes W'\otimes Id_2) &=& \frac12\{h,Id_1\otimes W'\otimes Id_2\},
\end{eqnarray}
where
\begin{eqnarray}
Id_1 &:=& Id_A\otimes Id_{Q^{(n-1)}}\otimes Id_{Q^{(n-2)}}\otimes\cdots\otimes Id_{Q^{(m+1)}}, \nonumber
\\
Id_2 &:=& Id_{Q^{(m-1)}}\otimes Id_{Q^{(m-2)}}\otimes\cdots\otimes Id_{Q^{(1)}}, \nonumber 
\\
Y_2 &:=& 
\overbrace{Y\otimes\cdots\otimes Y}^{m-1}, \nonumber 
\\
W &\in& \{X,Z\}, \; W^\prime \in \{X,Y,Z\}, \; h \in M_{m-1}. \nonumber
\end{eqnarray}
Note that $Id_1\otimes W\otimes Y_2$ is in $\mathfrak{S}_{2n-1}$, and $Id_1\otimes Id_{Q^{(m)}}\otimes Y_2$ is in $M_{m-1}$. In addition, $Id_1\otimes Y\otimes Id_2$ is obtained by taking a commutator of elements in $\mathfrak{S}_{2n-1}$.
Therefore, Eqs. (\ref{198}) and (\ref{199}) mean that, when $m<n$, we can generate any element  in $M_{m}=\{h\cdot (Id_1\otimes W''\otimes Id_2)\}$, where ${h\in M_{m-1},W''\in \{Id,X,Y,Z\}}$, by $M_{m-1}$ and $\mathfrak{S}_{2n-1}$. Combining this fact and the assumption as well as $Id_E\in \mathfrak{S}_{2n-1}$, we can conclude that any element in $M_{m}$ can be generated from $\mathfrak{S}_{2n-1}$ where $0\leq m\leq n-1$.

Since $M_{n-1}$ is a basis of $\mathfrak{S}_{2n-1}^\prime$, we have proved that $\mathfrak{S}_{2n-1}^\prime$ can be generated by $\mathfrak{S}_{2n-1}$.

\item[(vi)] $J=\mathfrak{S}_{2n}$. Since $\mathfrak{S}_{2n-1}$ is a subspace of $\mathfrak{S}_{2n}$, any element in $M_{n-1}$ can be generated from $\mathfrak{S}_{2n}$, where $M_m$ is the set defined in Eq. (\ref{eq:def_Mn}). The following equalities
\begin{eqnarray}
Z^*\otimes Id_3
&=&
\frac 12\{Z^{*}\otimes Y_3 ,Id_A\otimes Y_3\},
\nonumber\\
h\cdot Z^*\otimes Id_3
&=&
\frac 12 \{Z^*\otimes Id_3,h\},
\end{eqnarray}
where
\begin{eqnarray}
h &\in& M_{n-1} \nonumber \\
Id_3 &:=& Id_{Q^{(n-1)}}\otimes Id_{Q^{(n-2)}}\otimes\cdots\otimes Id_{Q^{(1)}} 
\nonumber \\
Y_3 &:=& \overbrace{Y\otimes\cdots\otimes Y}^{n-1}, \nonumber
\end{eqnarray}
indicate that  $h\cdot Z^*\otimes Id_3$ can be generated  from $M_{n-1}$ and $\mathfrak{S}_{2n}$.
Since $M_{n-1}\cup \{h\cdot Z^*\otimes Id_3\}_{h\in M_{n-1}}$ is a basis of $\mathfrak S_{2n}^\prime$,
$\mathfrak{S}_{2n}^\prime$ can be generated by $\mathfrak{S}_{2n}$.

\end{enumerate}
\end{proof}

%\newpage
\section{Relation with other investigations about indirect control}
There has been a paper~\cite{AAR15}, whose results appear to be similar to ours at the first sight. Although nothing is conflictive and their paper is very significant in its own right, we find it instructive to describe their main results in our language and elucidate the generality of our results. 

Let us prove the central results in~\cite{AAR15}, namely, its Theorems 2 and 3, with our theorems and lemmas. 
We shall keep using our notations for the sake of consistency, although \cite{AAR15} uses a set of different notations\footnote{For instance, the systems $S$ and $A$ in \cite{AAR15} correspond to $E$ and $S$ in the present paper.}. 

They make several assumptions for the Lie algebra $L$ on $\mathcal H_E\otimes \mathcal H_S$:
\begin{itemize}
\item[(i)] The set $ L$ contains at least one  element which is nonzero in $\mathcal L({\rm su}({\rm dim}\mathcal H_E)\otimes {\rm su}({\rm dim}\mathcal H_S))$. (the condition (A-a) therein)
\item[(ii)] Generators of any control on the space $\mathcal H_S$ are in the algebra $L$, that is, $\{Id_E\} \otimes \mathrm{su(dim}\mathcal{H}_S)\in L$. (the condition (A-b))
\item[(iii)] All elements in $L$ are traceless.
\end{itemize}
Let us reexpress those theorems in our notation before proving them by using our results under these assumptions.

\noindent
\textbf{Theorem 2 in \cite{AAR15}}. When ${\rm dim}\mathcal H_S\geq 3$, for any density matrix $\rho_S$ on $\mathcal H_S$, i.e., any positive semi-definite operator with unit trace,
\begin{equation}
L={\rm su}({\rm dim}\mathcal H_E\cdot {\rm dim}\mathcal H_S)
\Longleftrightarrow
\forall U,\:\exists g\in L,\:\forall \rho_E,\:
{\rm Tr}_S e^g \rho_E\otimes \rho_S e^{-g}=U\rho_EU^\dagger
\label{th:alessandro_1}
\end{equation}
where $\rho_E$ and $U$ are a density operator and a unitary operator on $\mathcal H_E$, respectively.

\noindent
\textbf{Theorem 3 in \cite{AAR15}}. When ${\rm dim}\mathcal H_S= 2$, different structures occur for the dynamical Lie algebra $L$, depending on the rank of the density matrix $\rho_S^\prime$. If $\rho_S'$ is of rank-2 on $\mathcal H_S$, the same proposition as (\ref{th:alessandro_1}) holds:
\begin{equation}
L={\rm su}({\rm dim}\mathcal H_E\cdot {\rm dim}\mathcal H_S)
\Longleftrightarrow
\forall U,\:\exists g\in L,\:\forall \rho_E,\:
{\rm Tr}_S e^g \rho_E\otimes \rho'_S e^{-g}=U\rho_EU^\dagger.
\label{th:alessandro_2}
\end{equation}

If rank$\rho_S=1$, namely, $\rho_S=|\phi_S\rangle\langle \phi_S|$,
\begin{eqnarray}
&& \exists \bar J,J\subseteq i\cdot {\rm u}({\rm dim}\mathcal H_E),
\nonumber\\&&\makebox[1cm]{}
 L=\mathcal L(i\bar J\otimes \{Id_S\}\cup J\otimes {\rm su}({\rm dim}\mathcal H_S))\;\land\;i\mathcal L(\bar J\cup J)={\rm u}({\rm dim}\mathcal H_E)
\nonumber\\
&&\makebox[.7cm]{}
\Longleftrightarrow \forall U,\:\exists g\in L,\:\forall \rho_E,\:
{\rm Tr}_S e^g \rho_E\otimes \ket{\phi_S}\bra{\phi_S} e^{-g}=U\rho_EU^\dagger.
\label{th:alessandro_3}
\end{eqnarray}

The right arrows in (\ref{th:alessandro_1}) and  (\ref{th:alessandro_2}) are trivial.
The right arrow in (\ref{th:alessandro_3}) can be justified as follows.
From the condition in the LHS of (\ref{th:alessandro_3}), for any unitary operator $U$ on $\mathcal H_E$, there is an element $g=\alpha_1\otimes Id + \alpha_2\otimes (\ket{\phi_S}\bra{\phi_S}-\ket{\phi_S^\perp}\bra{\phi_S^\perp}) \in L$ such that $e^g=U\otimes \ket{\phi_S}\bra{\phi_S}+V\otimes \ket{\phi_S^\perp}\bra{\phi_S^\perp}$, where
$U=e^{\alpha_1+\alpha_2}$ and $V=e^{\alpha_1-\alpha_2}$ are unitary operators on $\mathcal H_E$ . 

So, it is sufficient to prove the left arrows in these propositions to obtain the theorems.
To this end, the following two additional lemmas will be useful to use our result for them.

%%%%%%%%%%%%%%%%%%
%%%%%  Lemma 12  %%%%%%
%%%%%%%%%%%%%%%%%%

\begin{lemma}
\label{lem:lemma_1_for_ex}
If a positive matrix $\rho_E$ is non-zero and not proportional to the identity operator,
\begin{equation}
\forall U,\:\exists g\in L,\:
{\rm Tr}_S e^g \rho_E\otimes \rho_S e^{-g}\propto U\rho_EU^\dagger
\Longrightarrow
{\rm u}({\rm dim}\mathcal H_E)=\{{\rm Tr}_S \tilde{g}\}_{\tilde{g}\in Ad_L^\infty(\rho_E\otimes \rho_S)},
\label{th:alessandro_0}
\end{equation}
where $L$ is a set of skew-Hermitian operators, $\rho_S$ is a positive semi-definite operator, and $U$ is a unitary operator on $\mathcal H_E$.
Also, 
\begin{eqnarray*}
 Ad_L^\infty(\rho)&:=& \lim_{j\rightarrow \infty}Ad_L^j(\rho),\\
 Ad_L^0(\rho)&:=& \{i\rho\}, \\
 Ad_L^j(\rho)&:=& \mathcal L(Ad_L^{j-1}(\rho)\cup [Ad_L^{j-1}(\rho),L])\;\makebox{ for $j\geq1$}.
\end{eqnarray*}
\end{lemma}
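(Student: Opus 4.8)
The plan is to recognize $Ad_L^\infty(\rho_E\otimes\rho_S)$ for what it structurally is: the smallest real linear space $\mathcal M$ of (skew-hermitian) operators on $\mathcal H_E\otimes\mathcal H_S$ that contains $i\rho_E\otimes\rho_S$ and is closed under $[\,\cdot\,,L]$. Since the ambient operator space is finite dimensional, the increasing chain $Ad_L^0(\rho_E\otimes\rho_S)\subseteq Ad_L^1(\rho_E\otimes\rho_S)\subseteq\cdots$ stabilizes, so $\mathcal M$ is well defined and, being finite dimensional, closed. The first step is to observe that $\mathcal M$ is invariant under $\mathrm{Ad}_{e^g}=e^{\mathrm{ad}_g}$ for every $g\in L$: one has $\mathrm{ad}_g(\mathcal M)=[g,\mathcal M]\subseteq\mathcal M$, so every partial sum of $\sum_{n}\tfrac{1}{n!}\mathrm{ad}_g^{n}(i\rho_E\otimes\rho_S)$ lies in $\mathcal M$, and the series converges in operator norm to $e^g(i\rho_E\otimes\rho_S)e^{-g}$, which therefore also lies in $\mathcal M$. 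Applying the real-linear map ${\rm Tr}_S$ then shows that $i\,{\rm Tr}_S(e^g(\rho_E\otimes\rho_S)e^{-g})$ belongs to the set $K:=\{{\rm Tr}_S\tilde g\}_{\tilde g\in\mathcal M}$ appearing on the right-hand side of (\ref{th:alessandro_0}).

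Next I would feed in the hypothesis. Because the partial trace preserves the total trace, ${\rm Tr}[{\rm Tr}_S(e^g(\rho_E\otimes\rho_S)e^{-g})]={\rm Tr}(\rho_E)\,{\rm Tr}(\rho_S)$, and since $\rho_E$ is positive and nonzero (hence ${\rm Tr}\rho_E\neq0$) the assumed proportionality ${\rm Tr}_S(e^g(\rho_E\otimes\rho_S)e^{-g})\propto U\rho_EU^\dagger$ pins the constant: ${\rm Tr}_S(e^g(\rho_E\otimes\rho_S)e^{-g})=({\rm Tr}\rho_S)\,U\rho_EU^\dagger$. Combined with the first step, and dividing by the fixed scalar ${\rm Tr}\rho_S$ (nonzero since $\rho_S$ is a nonzero positive operator, e.g.\ a density matrix), this yields $i\,U\rho_EU^\dagger\in K$ for every unitary $U$ on $\mathcal H_E$. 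Since $K$ is a real linear subspace, it then contains $i\,\mathcal V$, where $\mathcal V:={\rm span}_{\mathbb{R}}\{\,U\rho_EU^\dagger : U\in{\rm U}({\rm dim}\mathcal H_E)\,\}$.

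The heart of the argument, and the step I expect to be the main obstacle, is to show that $\mathcal V$ is the full space of hermitian operators on $\mathcal H_E$. Here I would argue representation-theoretically: $\mathcal V$ is a real subspace of hermitian operators invariant under conjugation by the whole unitary group, because $U'(U\rho_EU^\dagger)(U')^\dagger=(U'U)\rho_E(U'U)^\dagger$; and under this conjugation action the hermitian operators decompose as $\mathbb{R}\,Id_E\oplus\mathfrak{h}_0$, with $\mathfrak{h}_0$ the traceless hermitian operators, which is irreducible because it carries (up to the fixed factor $i$) the adjoint representation of the simple group ${\rm SU}({\rm dim}\mathcal H_E)$. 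Hence $\mathcal V$ is one of $\{0\}$, $\mathbb{R}\,Id_E$, $\mathfrak{h}_0$, or everything, and the three hypotheses that $\rho_E\neq0$, that $\rho_E$ is not proportional to $Id_E$, and that ${\rm Tr}\rho_E\neq0$ eliminate the first three possibilities in turn; therefore $\mathcal V$ equals all hermitian operators on $\mathcal H_E$. Then $i\,\mathcal V={\rm u}({\rm dim}\mathcal H_E)$ gives ${\rm u}({\rm dim}\mathcal H_E)\subseteq K$, while the reverse inclusion $K\subseteq{\rm u}({\rm dim}\mathcal H_E)$ is immediate since ${\rm Tr}_S$ maps skew-hermitian operators to skew-hermitian operators; hence $K={\rm u}({\rm dim}\mathcal H_E)$, which is exactly (\ref{th:alessandro_0}). (Incidentally, $\rho_E$ not being a multiple of $Id_E$ forces ${\rm dim}\mathcal H_E\geq2$, so the decomposition used above is non-degenerate.)
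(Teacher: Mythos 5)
Your proof is correct and follows essentially the same route as the paper: the chain ${\rm u}({\rm dim}\mathcal H_E)\supseteq\{{\rm Tr}_S\tilde g\}_{\tilde g\in Ad_L^\infty}\supseteq i\mathcal L(\{{\rm Tr}_S e^g(\rho_E\otimes\rho_S)e^{-g}\}_{g\in L})\supseteq i\mathcal L(\{U\rho_E U^\dagger\}_U)={\rm u}({\rm dim}\mathcal H_E)$. The only difference is that you spell out two steps the paper merely sketches, namely the stabilization of $Ad_L^j$ giving invariance under $e^{{\rm ad}_g}$, and the representation-theoretic argument that the unitary orbit of a nonzero, non-scalar positive $\rho_E$ spans all hermitian operators.
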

Here, we just sketch an outline of its proof, while it was proved in the paper~\cite{AR12} as Theorem 1.
\begin{proof}
The $\Rightarrow$ in Eq. (\ref{th:alessandro_0}) can be shown as 
\begin{equation}
{\rm u}({\rm dim}\mathcal H_E)
\supseteq
\{{\rm Tr}_S g\}_{g\in Ad_L^\infty(\rho_E\otimes \rho_S)}
\supseteq
i\mathcal L(\{{\rm Tr}_Se^g \rho_E\otimes \rho_S e^{-g}\}_{g\in L})
\supseteq
i\mathcal L(\{U \rho_E U^\dagger\}_{U})
={\rm u}({\rm dim}\mathcal H_E).
\end{equation}
Each inclusion in the above expression can be justified as follows:
The first inclusion is guaranteed by definition of $Ad_L^\infty(\rho)$.
The second one is a result of $\forall g\in L,\; e^g \rho e^{-g}\in iAd_L^\infty(\rho)$, which can be seen by using the Taylor expansion of $e^g$ and $e^{-g}$ for $e^g \rho e^{-g}$. The third one comes from the LHS of (\ref{th:alessandro_0}) and the fact that ${\rm Tr}_S e^g \rho_E\otimes \rho_S
e^{-g}\neq 0$. The last equality is due to the assumption for $\rho_E$.
\end{proof}

\begin{lemma}
\label{lem:lemma_2_for_ex}
If $\rho_S'$ is a full rank positive operator,
\begin{equation}
\forall \rho_E,\:
{\rm Tr}_S e^g \rho_E\otimes \rho_S' e^{-g}= U\rho_EU^\dagger
\Longrightarrow
\forall \rho_E, {\rm Tr}_S e^g \rho_E\otimes Id_S e^{-g}\propto U\rho_EU^\dagger,
\end{equation}
where $g$ is a skew-Hermitian operator, $U$ is a unitary operator on $\mathcal H_E$, and $\rho_E$ is a density matrix on $\mathcal H_E$.
\end{lemma}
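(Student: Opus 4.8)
The plan is to read the hypothesis as a statement about quantum channels and then exploit that unitary channels are extreme. Write $V:=e^g$, which is a unitary operator on $\mathcal H_E\otimes\mathcal H_S$, and for a density matrix $\sigma$ on $\mathcal H_S$ set $\mathcal E_\sigma(\rho_E):={\rm Tr}_S\, V(\rho_E\otimes\sigma)V^\dagger$, a map on density matrices of $\mathcal H_E$. Each $\mathcal E_\sigma$ is completely positive and trace preserving, being a partial trace of a unitary conjugation, and the assignment $\sigma\mapsto\mathcal E_\sigma$ is affine. The hypothesis is precisely the statement $\mathcal E_{\rho_S'}=U(\cdot)U^\dagger$. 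Taking traces on both sides of the hypothesis also shows ${\rm Tr}\,\rho_S'=1$, so $\rho_S'$ is genuinely a state.

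First I would spectrally decompose $\rho_S'=\sum_k p_k\ket k\bra k$, with $\{\ket k\}$ an orthonormal basis of $\mathcal H_S$; since $\rho_S'$ is full rank, $p_k>0$ for every $k$, and $\sum_k p_k=1$. Affinity in $\sigma$ then yields
\[
U(\cdot)U^\dagger=\mathcal E_{\rho_S'}=\sum_k p_k\,\mathcal E_{\ket k\bra k},
\]
so the unitary channel $U(\cdot)U^\dagger$ is exhibited as a convex combination, with all weights strictly positive, of the quantum channels $\mathcal E_{\ket k\bra k}$. The main step — and the only one that is not routine — is to conclude that $\mathcal E_{\ket k\bra k}=U(\cdot)U^\dagger$ for every $k$. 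I would argue this by extreme points of the state space: for any unit vector $\ket\psi\in\mathcal H_E$ the identity $U\ket\psi\bra\psi U^\dagger=\sum_k p_k\,\mathcal E_{\ket k\bra k}(\ket\psi\bra\psi)$ writes a pure state as a convex combination, with positive weights, of density matrices; as pure states are extreme points of the set of density matrices, each $\mathcal E_{\ket k\bra k}(\ket\psi\bra\psi)$ must equal $U\ket\psi\bra\psi U^\dagger$. Since the maps $\mathcal E_{\ket k\bra k}$ are affine and every density matrix on $\mathcal H_E$ is a convex combination of pure states, this forces $\mathcal E_{\ket k\bra k}=U(\cdot)U^\dagger$ for all $k$. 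Equivalently, one may simply invoke that unitary channels are extreme points of the convex set of CPTP maps.

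Finally, summing over the basis and using $\sum_k\ket k\bra k=Id_S$, I would assemble
\[
{\rm Tr}_S\, V(\rho_E\otimes Id_S)V^\dagger=\sum_k\mathcal E_{\ket k\bra k}(\rho_E)=({\rm dim}\mathcal H_S)\,U\rho_E U^\dagger\propto U\rho_E U^\dagger
\]
for every density matrix $\rho_E$, which is the assertion of the lemma. The full-rank hypothesis enters exactly here: it is what guarantees all $p_k>0$ (so the extreme-point step can be applied to every component) and what makes $\{\ket k\}$ a full basis of $\mathcal H_S$. I expect the extreme-point argument to be the crux of the proof; the remaining steps are linearity and the standard fact that partial traces of unitary conjugations are channels, so no further obstacle is anticipated.
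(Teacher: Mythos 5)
Your proposal is correct and follows essentially the same route as the paper's own proof: both exploit the full-rank hypothesis to write $\rho_S'$ as a positive combination containing pure states (the paper splits off $\delta\ket{\phi_S}\bra{\phi_S}$ for an arbitrary $\ket{\phi_S}$, you use the spectral decomposition), then use the fact that the rank-one output $U\ket{\psi}\bra{\psi}U^\dagger$ cannot decompose into positive parts other than multiples of itself — your "pure states are extreme points" step is the same argument in channel language — and finally sum over an orthonormal basis of $\mathcal H_S$ to replace $\rho_S'$ by $Id_S$. The only difference is cosmetic (channel affinity/extremality phrasing versus the paper's direct positivity argument), so no gap to report.
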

\begin{proof}
In can be shown directly by following the chain of relations:
\begin{eqnarray}
&&\forall \rho_E,\:{\rm Tr}_S e^g \rho_E\otimes \rho_S' e^{-g}= U\rho_EU^\dagger
\\
&\Longrightarrow&
\forall \ket{\phi_E},\:{\rm Tr}_S
e^g \ket{\phi_E}\bra{\phi_E}\otimes \rho_S' e^{-g}= U\ket{\phi_E}\bra{\phi_E}U^\dagger \label{208}
\\
&\Longrightarrow&
\forall \ket{\phi_S},\:\forall \ket{\phi_E},\:{\rm Tr}_S e^g \ket{\phi_E}\bra{\phi_E}\otimes \ket{\phi_S}\bra{\phi_S} e^{-g}= U\ket{\phi_E}\bra{\phi_E}U^\dagger
\\
&\Longrightarrow&
\forall \ket{\phi_E},\:{\rm Tr}_S e^g \ket{\phi_E}\bra{\phi_E}\otimes Id_S e^{-g}\propto U\ket{\phi_E}\bra{\phi_E}U^\dagger
\\
&\Longrightarrow&
\forall \rho_E,\:{\rm Tr}_S e^g \rho_E\otimes Id_S e^{-g}\propto U\rho_EU^\dagger
\end{eqnarray}
The first and the third arrows are trivial, and others can be justified as follows.
The second one can be seen by decomposing $\rho_S^\prime$ as $\rho_S'=\rho_S''+\delta \ket{\phi_S}\bra{\phi_S}$ with any $\ket{\phi_S}$, a positive operator $\rho_S^{\prime\prime}$ and an appropriate positive number $\delta$. Therefore,
if the relation (\ref{208})
\begin{eqnarray}
{\rm Tr}_S e^g \ket{\phi_E}\bra{\phi_E}\otimes \rho_S' e^{-g}
&=&
{\rm Tr}_S e^g \ket{\phi_E}\bra{\phi_E}\otimes \rho_S'' e^{-g}+\delta
 {\rm Tr}_Se^g \ket{\phi_E}\bra{\phi_E}\otimes \ket{\phi_S}\bra{\phi_S} e^{-g}
\nonumber\\
&=&U\ket{\phi_E}\bra{\phi_E}U^\dagger
\end{eqnarray}
holds, both terms in the middle must be proportional to $U\ket{\phi_E}\bra{\phi_E}U^\dagger$ since they are both positive and $U\ket{\phi_E}\bra{\phi_E}U^\dagger$ is of rank 1.
Combining this and the fact that $U$ and $e^g$ are unitary operators, we know that
${\rm Tr }_Se^g \ket{\phi_E}\bra{\phi_E}\otimes \ket{\phi_S}\bra{\phi_S} e^{-g}$ must be $U\ket{\phi_E}\bra{\phi_E}U^\dagger$.
The last arrow holds simply because any positive operator $\rho_E$ can be written as a linear combination of rank 1 projection operators.
\end{proof}

Now, we can give a simple proof for the left arrow of (\ref{th:alessandro_1}). Consider a density operator $\rho_E$ that is proportional to $Id_{B_1}\otimes \ket{0}_{R_1}{}_{R_1}\bra{0}$, where the tensor product structure is the one shown in Theorem \ref{th2:str_ds3_dem}, i.e., $\bigoplus_j\mathcal H_{B_j}\otimes \mathcal H_{R_j}$. Then, if the RHS of Eq. (\ref{th:alessandro_1}) holds, $\rho_E$ cannot be the identity operator in $\mathcal{H}_E$. This is because if $\rho_E \propto Id_E$ there exists a single $j$, say 1, in the direct sum above and the dimension of $R_1$ is one. This implies, due to Theorem \ref{th2:str_ds3_dem}, that $L$ is a subset of $\mathcal{L}(i\cdot \mathrm{u(dim}\mathcal{H}_E) \otimes \{Id_S\} \cup i\cdot Id_E\otimes \mathrm{su(dim}\mathcal{H}_S))$, and this contradicts with the assumption (i) above. Therefore, from the condition in the RHS of (\ref{th:alessandro_1}) and Lemma \ref{lem:lemma_1_for_ex},  ${\rm u}({\rm dim}\mathcal H_E)=\{{\rm Tr}_S g\}_{g\in Ad_L^\infty(Id_{B_1}\otimes \ket{0}_{R_1}{}_{R_1}\bra{0}\otimes \rho_S)}$ must hold. This relation and the structure of $L$, i.e.,
$L=\mathcal L(L_d' \cup \bigoplus_j \{Id_{B_j}\}\otimes {\rm su}({\rm dim}\mathcal H_{R_j}\cdot {\rm dim}\mathcal H_S) )$
with $L_d^\prime\subseteq L_d=\bigoplus_j  {\rm u}({\rm dim}\mathcal H_{B_j})\otimes Id_{R_j}\otimes Id_S $, tell us that the index $j$ can take only one value $1$ and ${\rm dim}\mathcal H_{B_1}=1$. 
Further, the assumption (iii), stating that all elements are traceless, implies that $L_d'$ can contain only $0$. Hence, the left arrow of (\ref{th:alessandro_1}) is verified.

Next, let us give a simple proof for the left arrow in (\ref{th:alessandro_2}).
From the condition in the RHS of (\ref{th:alessandro_2}) and Lemma \ref{lem:lemma_2_for_ex}, 
\begin{equation}
\forall U,\:\exists g\in L,\:\forall \rho_E,\:
{\rm Tr}_S e^g \rho_E\otimes Id_S e^{-g}\propto U\rho_EU^\dagger
\label{eq:alesssandro_01}
\end{equation}
must hold.
From Theorems \ref{th:con_discon_Lie_alg} and \ref{th3:str_ds2_dem},
we know that $L$ has a structure such that
\begin{eqnarray}
L&=&\mathcal L(L_d^\prime \cup  \bigoplus_j \mathcal L(i\bar J_j\otimes \{Id_S\}\cup J_j\otimes {\rm su}({\rm dim}\mathcal H_S)) ),
\label{eq:alesssandro_02a}
\\
L_d^\prime &\subseteq& \bigoplus_j i\hat J_j\otimes \{ Id_S\}
\label{eq:alesssandro_02b}
\end{eqnarray}
where candidates of the triple $(J_j, \bar J_j, \hat J_j)$ are given in Eqs. (\ref{eq:JordanAlg_1})-(\ref{eq:str_hat_6}).
From the assumption (iii), we can pick a density matrix  $\rho_E$ proportional to $Id_{E_1}+h$ where $h$ is an element in the set $\bar J_1$
so that $\rho_E$ is not proportional to $Id_E$.
Therefore, from (\ref{eq:alesssandro_01}) and Lemma \ref{lem:lemma_1_for_ex}, ${\rm u}({\rm dim}\mathcal H_E)=\{{\rm Tr}_S g\}_{g\in Ad_L^\infty((Id_{E_1}+h)\otimes Id_S)}$ must hold.
Since $i(Id_{E_1}+h)\otimes Id_S$ is in $\mathcal L(L\cup i\{Id_{E_1}\otimes Id_S\})$ and
the latter is obviously closed under the commutation relation, the relation $Ad_L^\infty((Id_{E_1}+h)\otimes Id_S)\subseteq  \mathcal L(L\cup i\{Id_{E_1}\otimes Id_S\})$ holds.
These two relations allow us to have 
\begin{equation}\label{eq:215.1}
{\rm u}({\rm dim}\mathcal H_E)=\{{\rm Tr}_S g\}_{g\in \mathcal L(L\cup i\{Id_{E_1}\otimes Id_S\})}.
\end{equation}
This relation and the structure of $L$ written above enforce us that the index $j$ can take only one value $1$. Then, we can define a set $\hat{J}_1^\prime$ such that $\hat{J}_1^\prime\subseteq \hat{J}_1$ and $L_d^\prime = i\hat{J}_1^\prime\otimes \{Id_S\}$. Equation (\ref{eq:215.1}) now implies 
\begin{equation}\label{eq:216.1}
i\hat{J}^\prime_1\cup i\bar{J}_1\cup \{i Id_{E_1}\}=\mathrm{u(dim}\mathcal{H}_{E_1})=\mathrm{u(dim}\mathcal{H}_E),
\end{equation}
which means that the dimension of $\mathcal H_{A_1}$ 
is equal to 1.  Thus, $J_1^\prime\subset \mathcal{L}(\{Id_{E_1}\})$, and $\bar{J}_1$ must be sandwiched as $\mathrm{u(dim}\mathcal{H}_{E_1})\supseteq \bar{J}_1\supseteq \mathrm{su(dim}\mathcal{H}_{E_1})$, from which we can deduce $\hat{J}_1$ should be either $\hat{\mathfrak S}_{4}$ or $\hat{\mathfrak M}_\gamma^{(2)}$ with an appropriate integer $\gamma\geq 3$.
Note that the case of $J_1=\mathfrak R$ is ruled out from the assumption (i).
Moreover, the assumption (iii) indicates that $L_d^\prime$ can contain only $0$, and thus the left arrow in (\ref{th:alessandro_2}) is shown.

Finally, we give a simple proof for the left arrow in (\ref{th:alessandro_3}).
Similarly to the above case, we pick a density matrix $\rho_E \propto Id_{E}+h$ where $h$ is an element in the set $\bigoplus_j \bar{J}_j$ so that $\rho_E$ is not proportional to $Id_E$.
From the right relation in (\ref{th:alessandro_3}) and Lemma \ref{lem:lemma_1_for_ex}, the relation
\begin{eqnarray}
 {\rm u}({\rm dim}\mathcal H_E)=\{{\rm Tr}_S g\}_{g\in Ad_L^\infty((Id_{E_1}+h)\otimes \ket{\phi_S}\bra{\phi_S})} \label{eq:alesssandro_03}
\end{eqnarray}
must hold. Here, we recycle the definitions of $L$ and $L_d^\prime$ in Eqs. (\ref{eq:alesssandro_02a}) and (\ref{eq:alesssandro_02b}).
Since any element in $L$ and $(Id_{E_1}+h)\otimes \ket{\phi_S}\bra{\phi_S}$ is block diagonalized into the subspaces $\mathcal H_{E_j}\otimes \mathcal H_S$, in order to satisfy the above relation, the index $j$ can take only a single value $1$.
Since $h$ is in $\bar{J}_1$, its form is one of those in Eqs. (\ref{eq:Rbar})-(\ref{eq:Sbar_2n}). Together with other forms of operators, i.e., Eqs. (\ref{eq:str_hat_1})-(\ref{eq:str_hat_6}) and (\ref{eq:JordanAlg_1})-(\ref{eq:JordanAlg_6}), it can be shown that the components of $Ad_L^\infty((Id_{E_1}+h)\otimes \ket{\phi_S}\bra{\phi_S})$ in the space $\mathcal{H}_{A_1}$ should be either $Id$ or $Z^*$. Since both $Id$ and $Z^*$ are obviously diagonalized in $\mathcal H_{A_1}$, Eq. (\ref{eq:alesssandro_03}) means that  ${\rm dim}\mathcal H_{A_1}$ must be $1$ so that \{Tr$_S g$\} can span u(dim$\mathcal{H}_E$), which then implies $L_d^\prime \subseteq i\{Id_E\otimes Id_S\}$. Taking the assumption (iii) into account, we can conclude $L_d^\prime= \{0\}$.

With the help of forms of $J$ and $\hat{J}$ in Eqs. (\ref{eq:JordanAlg_1})-(\ref{eq:JordanAlg_6}) and (\ref{eq:str_hat_1})-(\ref{eq:str_hat_6}), we can now verify whether each type of $J$ in Eqs. (\ref{eq:JordanAlg_1})-(\ref{eq:JordanAlg_6}) satisfies the requirement Eq. (\ref{eq:alesssandro_03}). First, let us have a look at $J_1=\mathfrak S_{2n'-1}$. The set $iAd_L^\infty((Id_{E_1}+h)\otimes \ket{\phi_S}\bra{\phi_S})$ of operators on $ \mathcal H_{Q_1^{(n'-1)}}\otimes  \mathcal H_{Q_1^{(n'-2)}}\otimes \cdots \otimes \mathcal H_{Q_1^{(1)}}\otimes \mathcal H_S=\mathcal H_{E_1}\otimes \mathcal H_S=\mathcal H_E\otimes \mathcal H_S$ can be written as
\begin{eqnarray}
&&\mathcal L(
\{
           Id^{ \otimes \Delta_1}\otimes W_1
\otimes Y^{\otimes  \Delta_2}\otimes W_2
\otimes Id^{ \otimes \Delta_3}\otimes W_3
\otimes Y^{\otimes  \Delta_4}\otimes W_4
\otimes Id^{ \otimes \Delta_5}\}_{W_k\in \{X,Z\}, \Delta_k\in \mathbb Z_{\geq 0}\;s.t. \sum_{k=1}^5\Delta_k=n'-4},
\nonumber\\&&\cup \{
           Id^{ \otimes \Delta_1}\otimes W_1
\otimes Y^{\otimes  \Delta_2}\otimes Id
\otimes Y^{\otimes  \Delta_3}\otimes W_2
\otimes Id^{ \otimes \Delta_4},
\nonumber\\&&
           Id^{ \otimes \Delta_1}\otimes Y
\otimes Id^{\otimes  \Delta_2}\otimes W_1
\otimes Y^{\otimes  \Delta_3}\otimes W_2
\otimes Id^{ \otimes \Delta_4},
\nonumber\\&&
           Id^{ \otimes \Delta_1}\otimes W_1
\otimes Y^{\otimes  \Delta_2}\otimes W_2
\otimes Id^{\otimes  \Delta_3}\otimes Y
\otimes Id^{ \otimes \Delta_4},
\nonumber\\&&
           Id^{ \otimes \Delta_1}\otimes W_1
\otimes Y^{\otimes  \Delta_2}\otimes W_2
\otimes Id^{\otimes  \Delta_3}\otimes W_3
\otimes Y^{ \otimes \Delta_4}
\}_{W_k\in \{X,Z\}, \Delta_k\in \mathbb Z_{\geq 0}\;s.t. \sum_{k=1}^4\Delta_k=n'-3},
\nonumber\\&&\cup \{
           Id^{ \otimes \Delta_1}\otimes Y
\otimes Id^{\otimes  \Delta_2}\otimes Y
\otimes Id^{ \otimes \Delta_3},
           Id^{ \otimes \Delta_1}\otimes Y
\otimes Id^{\otimes  \Delta_2}\otimes W_1
\otimes Y^{ \otimes \Delta_3},
\nonumber\\&&
           Id^{ \otimes \Delta_1}\otimes W_1
\otimes Y^{\otimes  \Delta_2}\otimes Id
\otimes Y^{ \otimes \Delta_3},
           Id^{ \otimes \Delta_1}\otimes W_1
\otimes Y^{\otimes  \Delta_2}\otimes W_2
\otimes Id^{ \otimes \Delta_3}
\}_{W_k\in \{X,Z\}, \Delta_k\in \mathbb Z_{\geq 0}\;s.t. \sum_{k=1}^3\Delta_k=n'-2},
\nonumber\\&&\cup \{
           Id^{ \otimes \Delta_1}\otimes Y
\otimes Id^{\otimes  \Delta_2},
           Id^{ \otimes \Delta_1}\otimes W
\otimes Y^{\otimes  \Delta_2}
\}_{W\in\{X,Z\},\Delta_k\in \mathbb Z_{\geq 0}\;s.t. \Delta_1+\Delta_2=n'-1}
\nonumber\\&&
\cup \{Id^{\otimes n'}\})=:\Sigma.
\label{eq:219}
\end{eqnarray}
Here, we have omitted the space $\mathcal H_{A_1}$ since its dimension is $1$.
We can see from Eq. (\ref{eq:219}) that Eq. (\ref{eq:alesssandro_03}) cannot be satisfied when $n'\geq 3$. 
Thus, $J_1=\mathfrak S_{2n'-1}$ is not allowed when $n'\geq 3$.

Second, we repeat a similar check for $J_1=\mathfrak S_{2n'}$. The set $iAd_L^\infty((Id_{E_1}+h)\otimes \ket{\phi_S}\bra{\phi_S})$ can now be written as
\begin{eqnarray}
&&\mathcal L(
\{Y^{ \otimes \Delta_1}\otimes W_1
\otimes Id^{\otimes  \Delta_2}\otimes W_2
\otimes Y^{ \otimes \Delta_3},
           Y^{ \otimes \Delta_1}\otimes W_1
\otimes Id^{\otimes  \Delta_2}\otimes Y
\otimes Id^{ \otimes \Delta_3}\}_{W_k\in \{X,Z\}, \Delta_k\in \mathbb Z_{\geq 0}\;s.t. \sum_{k=1}^3\Delta_k=n'-2},
\nonumber\\&&\cup \{
           Y^{ \otimes \Delta_1}\otimes Id
\otimes Y^{\otimes  \Delta_2},
           Y^{ \otimes \Delta_1}\otimes W
\otimes Id^{\otimes  \Delta_2}
\}_{W\in\{X,Z\},\Delta_k\in \mathbb Z_{\geq 0}\;s.t. \Delta_1+\Delta_2=n'-1}
\cup
\{Y^{ \otimes n'}\}
\cup\Sigma
).
\end{eqnarray}
From this, we again see that the requirement (\ref{eq:alesssandro_03}) cannot be fulfilled when $n'\geq 4$. Therefore, $J_1=\mathfrak S_{2n'}$
is ruled out for $n'\geq 4$.

Combining all these results, we can conclude that $L$ should have the form $L=\mathcal L(i\bar J_1\otimes \{Id_S\}\cup J_1\otimes {\rm su}({\rm dim}\mathcal H_S) )$, where dim($\mathcal H_{A_1})=1$ and $(\bar J_1,J_1)$ is equal to either $(\hat {\mathfrak S}_{n},\mathfrak S_{n})$ or $(\bar {\mathfrak M}_\gamma^{(k)}, \mathfrak M_\gamma^{(k)})$ with $n\in\{3,4,6\}$, $k\in\{1,2,4\}$ and $\gamma\in\mathbb Z_{\geq 3}$.
Also, it is straightforward to check $\mathcal L(\bar J_1\cup J_1)={\rm u}({\rm dim}\mathcal H_E)$ for any choice of $(\bar J_1,J_1)$. Note, however, that the choice $(\bar J_1,J_1)=(\bar {\mathfrak R},\mathfrak R)$ is ruled out because of the assumption (ii). Hence, the left arrow in (\ref{th:alessandro_3}) is proved.

\bibliographystyle{apsrev}

\begin{thebibliography}{31}
\bibitem{LLS04}
S. Lloyd, A.J. Landahl, J-J.E. Slotine,
Universal quantum interfaces,
Physical Review A \textbf{69}, 012305 (2004)

\bibitem{BMM10}
D. Burgarth, K. Maruyama, M. Murphy, S. Montangero, T. Calarco, F. Nori, M.B. Plenio,
Scalable quantum computation via local control of only two qubits,
Physical Review A \textbf{81}, 040303(R) (2010)

\bibitem{KP10}
A. Kay, P.J. Pemberton-Ross,
Computation on spin chains with limited access,
Physical Review A \textbf{81}, 010301(R) (2010)

\bibitem{GLST17}
C. Gokler, S. Lloyd, S. Shor, K. Thompson,
Efficiently Controllable Graphs,
Physical Review Letters \textbf{118}, 260501 (2017)

\bibitem{LARR18}
J. Lee, C. Arenz, H. Rabitz, B. Russell, 
Dependence of the quantum speed limit on system size and control complexity,
New Journal of Physics \textbf{20}, 063002 (2018)

\bibitem{JS72}
V. Jurdjevi\'{c}, H. Sussmann,
Control systems on Lie groups,
Journal of Differential Equations \textbf{12}, 313-329 (1972)

\bibitem{RSDRP95}
V. Ramakrishna,  M.V. Salapaka, M. Dahleh, H. Rabitz, A. Peirce,
Controllability of molecular systems,
Physical Review A \textbf{51}, 960-966 (1995)

\bibitem{DAlessandroBook}
D. D'Alessandro,  
Introduction to Quantum Control and Dynamics,
Taylor and Francis, Boca Raton (2008)

\bibitem{MBBookChapter}
K. Maruyama, D. Burgarth,
Gateway schemes of quantum control for spin networks,
in: 
T. Takui, L. Berliner, G. Hanson,  (eds.) 
Electron Spin Resonance (ESR) Based Quantum Computing,
 Springer, Heidelberg  pp. 167-192, (2016)

\bibitem{ZSH11}
R. Zeier, T. Schulte-Herbr\"{u}ggen,
Symmetry principles in quantum systems theory,
Journal of Mathematical Physics \textbf{52}, 113510 (2011)

\bibitem{JNW34}
P. Jordan, J. von Neumann, E. Wigner,
On an algebraic generalization of the quantum mechanical formalism,
Annals of Mathematics \textbf{35}(1), 29 (1934)

\bibitem{BMN09}
D. Burgarth, K. Maruyama, F. Nori,
Coupling strength estimation for spin chains despite restricted access,
Physical Review A \textbf{79}, 020305(R) (2009)

\bibitem{FPK09}
C.D. Franco, M. Paternostro, M.S. Kim,
Hamiltonian Tomography in an Access-Limited Setting without State Initialization,
Physical Review Letters \textbf{102}, 187203 (2009)

\bibitem{BM09}
D. Burgarth, K. Maruyama,
Indirect Hamiltonian identification through a small gateway,
New Journal of Physics \textbf{11}, 103019 (2009)

\bibitem{BMN11}
D. Burgarth, K. Maruyama, F. Nori,
Indirect quantum tomography of quadratic Hamiltonians,
New Journal of Physics \textbf{13}, 013019 (2011)

\bibitem{Shabani11}
A. Shabani, R.L. Kosut, M. Mohseni, H. Rabitz, M.A. Broome, M.P. Almeida, A. Fedrizzi, A.G. White,
Efficient Measurement of Quantum Dynamics via Compressive Sensing,
Physical Review Letters \textbf{106}, 100401 (2011)

\bibitem{JS14}
J. Zhang, M. Sarovar,
Quantum Hamiltonian Identification from Measurement Time Traces,
Physical Review Letters \textbf{113}, 080401 (2014)

\bibitem{KY14}
Y. Kato, N. Yamamoto,
Structure identification and state initialization of spin networks with limited access,
New Journal of Physics \textbf{16}, 023024 (2016)

\bibitem{SC17a}
A. Sone, P. Cappellaro,
Hamiltonian identifiability assisted by a single-probe measurement,
Physical Review A \textbf{95}, 022335 (2017)

\bibitem{SC17b}
A. Sone, P. Cappellaro,
Exact dimension estimation of interacting qubit systems assisted by a single quantum probe,
Physical Review A \textbf{96}, 062334 (2017)

\bibitem{DD09}
D. D'Alessandro,
General methods to control right-invariant systems on compact Lie groups and multilevel quantum systems,
Journal of Physics A: Mathematical and Theoretical, \textbf{42}, 395301 (2009)

\bibitem{AAR15}
D. D'Alessandro, F. Albertini, R. Romano,
Exact algebraic conditions for indirect controllability of quantum systems,
 SIAM Journal on Control and Optimization \textbf{53}, 1509 (2015)

\bibitem{AR12}
D. D'Alessandro, R. Romano,
Indirect Controllability of Quantum Systems; A Study of Two Interacting Quantum Bits,
IEEE Transactions on Automatic Control, \textbf{57}, 2009-2020 (2012)

\bibitem{BY12}
D. Burgarth, K. Yuasa,
Quantum System Identification,
Physical Review Letters \textbf{108}, 080502 (2012)

\bibitem{OMTK15}
M. Owari, K. Maruyama,  T. Takui, Kato, G,
Probing an untouchable environment for its identification and control,
Physical Review  A \textbf{91}, 012343 (2015)

\bibitem{TBG12}
K.W. Moore Tibbetts, C. Brif, M.D. Grace, A. Donovan, , D.L. Hocker, T.-S. Ho, R.-B. Wu, H. Rabitz,
Exploring the tradeoff between fidelity and time optimal control of quantum unitary transformations,
Physical Review A \textbf{86}, 062309 (2012)

\bibitem{TGLS16}
K.F. Thompson, C. Gokler, S. Lloyd, P.W. Shor,
Time independent universal computing with spin chains: quantum plinko machine,
New Journal of Physics \textbf{18}, 073044 (2016)

\bibitem{HC18}
M. Hirose, P. Cappellaro,
Time-optimal control with finite bandwidth,
Quantum Information Processing \textbf{17}, 88 (2018) 


\end{thebibliography}

\end{document}